\documentclass[11pt]{article}

\usepackage[a4paper,margin=1in]{geometry}
\usepackage{amsmath,amssymb,amsthm,mathtools}
\usepackage{braket}
\usepackage{array,booktabs}
\usepackage{enumitem}
\usepackage{float}
\usepackage{graphicx}
\usepackage{microtype}
\usepackage[table]{xcolor}
\usepackage[affil-it]{authblk}
\usepackage{tikz}
\usetikzlibrary{arrows.meta,decorations.pathreplacing,decorations.markings,calc}
\usepackage[hidelinks]{hyperref}

\newtheorem{theorem}{Theorem}[section]
\newtheorem{proposition}[theorem]{Proposition}
\newtheorem{corollary}[theorem]{Corollary}
\newtheorem{lemma}[theorem]{Lemma}
\theoremstyle{definition}
\newtheorem{definition}[theorem]{Definition}
\newtheorem{assumption}[theorem]{Assumption}
\theoremstyle{remark}
\newtheorem{remark}[theorem]{Remark}
\newtheorem{example}[theorem]{Example}

\usepackage[nameinlink,capitalize,noabbrev]{cleveref}
\usepackage{etoolbox}
\crefname{assumption}{Assumption}{Assumptions}
\Crefname{assumption}{Assumption}{Assumptions}
\crefname{problem}{Problem}{Problems}
\Crefname{problem}{Problem}{Problems}
\AtBeginEnvironment{proposition}{\crefalias{theorem}{proposition}}
\AtBeginEnvironment{corollary}{\crefalias{theorem}{corollary}}
\AtBeginEnvironment{lemma}{\crefalias{theorem}{lemma}}
\AtBeginEnvironment{definition}{\crefalias{theorem}{definition}}
\AtBeginEnvironment{assumption}{\crefalias{theorem}{assumption}}
\AtBeginEnvironment{remark}{\crefalias{theorem}{remark}}
\AtBeginEnvironment{example}{\crefalias{theorem}{example}}
\AtBeginEnvironment{problem}{\crefalias{theorem}{problem}}

\definecolor{addition}{HTML}{000000}  
\definecolor{notation}{HTML}{000000}  
\newcommand{\chg}[1]{\textcolor{notation}{#1}}
\newcommand{\add}[1]{\textcolor{addition}{#1}}
\DeclareMathOperator{\Cl}{Cl}
\DeclareMathOperator{\im}{im}
\DeclareMathOperator{\rank}{rank}
\DeclareMathOperator{\Spec}{Spec}
\DeclareMathOperator{\pdet}{pdet}
\DeclareMathOperator{\Tor}{Tor}
\DeclareMathOperator{\Tr}{Tr}
\newcommand{\Z}{\mathbb{Z}}
\newcommand{\Q}{\mathbb{Q}}
\newcommand{\R}{\mathbb{R}}
\newcommand{\F}{\mathbb{F}}
\newcommand{\wtH}{\widetilde H}
\newcommand{\wtbeta}{\widetilde\beta}
\newcommand{\up}{\mathrm{up}}
\newcommand{\down}{\mathrm{down}}

\newcolumntype{L}[1]{>{\raggedright\arraybackslash}p{#1}}

\title{Quantum Log-Determinant Methods for Torsion-Sensitive\\
Topological Data Analysis}
\author[1]{Dimitrios Thanos\thanks{Corresponding author: \texttt{dimitrios.thanos.0@gmail.com}}}
\author[2,3]{Caesnan Leditto}

\author[4,5]{Adam Wesolowski}
\author[6,7]{Andreea-Iulia Lefterovici}
\author[1]{Mahtab Yaghubi Rad}
\affil[1]{Institute of Advanced Computer Science, Leiden University, The Netherlands}

\affil[2]{School of Physics and Astronomy, Monash University, Melbourne, Australia}
\affil[3]{Faculty of Science and Computer, Universitas Kristen Immanuel, Yogyakarta, Indonesia}

\affil[4]{Department of Computer Science, University of Oxford, UK}
\affil[5]{Department of Computer Science, Royal Holloway University of London, UK}
\affil[6]{Institut f\"ur Theoretische Physik, Leibniz Universit\"at Hannover, Hannover, Germany}
\affil[7]{Matematikos ir informatikos fakultetas, Vilniaus universiteto, Vilnius, Lithuania}
\date{}

\begin{document}

\maketitle

\begin{abstract}
Estimating Betti numbers is a primary objective in topological data analysis,
where they are used to describe and distinguish topological spaces. However,
Betti numbers fail to capture torsion in integral homology. Torsion can distinguish topological structures with identical Betti
numbers, and examples from classical topological data analysis show that it can
affect the topology inferred from data. Related torsion-sensitive quantities are
also known to have practical applications in graph-based machine learning,
including biomedical applications. We investigate how quantum spectral
methods can access this additional information.

Log pseudodeterminants, obtained by summing the logarithms of the nonzero
boundary-Laplacian eigenvalues, summarize spectral information about a simplicial
complex.  Under explicit topological assumptions, an alternating combination of
these quantities gives the size of a higher critical group, the higher-dimensional
analogue of a graph sandpile group.  With stronger certification, a related
construction recovers the size of the torsion subgroup in homology itself. This
is, to our knowledge, the first provably correct quantum estimator of
integral-homology torsion order.

We give a new quantum algorithm for estimating these quantities, with explicit
dependence on how the input is queried, the required accuracy, and the spectral
gaps. Exact spectra for balanced multipartite clique complexes provide favorable
gaps, and a related construction gives an explicit torsion-bearing benchmark. With
compact query access, the algorithm need not construct the exponentially large
boundary matrices explicitly.  For a restricted torsion-bearing family, a tailored
amplitude-estimation routine uses quadratically fewer queries than any randomized
classical algorithm given comparable access when estimating the log pseudodeterminant.  Estimating
the full, unnormalized quantity to fixed additive error still costs in
proportion to the candidate-space dimension $D$.
\end{abstract}

\section{Introduction}
\label{sec:introduction}
Topological data analysis (TDA) represents a dataset by a simplicial complex, whose
vertices, edges, triangles, and higher-dimensional faces encode relations in the
data.  In most applications, TDA uses a filtration: a nested family
$X_s\subseteq X_t$ for $s\leq t$ that records how the complex changes with scale.
Homology distinguishes cycles that do not bound higher-dimensional chains, and
persistent homology tracks when such classes appear and disappear along the
filtration.  Quantum TDA (QTDA) follows the same principle but accesses homology
spectrally: the $j$th Betti number is the dimension of the null space of a
combinatorial Laplacian~\cite{lloyd2015quantumalgorithmstopologicalgeometric,Hayakawa_2022,Scali_2024}.
The $j$th Betti number counts independent $j$-dimensional homology classes;
familiar examples are connected components, loops, and enclosed voids in
dimensions zero, one, and two, respectively.

The null space of a Hodge Laplacian, however, sees only homology over a field of
characteristic zero.  For a finite simplicial complex $X$,
\begin{equation}
    H_j(X;\Z)
    \cong
    \Z^{\beta_j}\oplus \Tor H_j(X;\Z),
    \label{eq:integral-homology-decomposition}
\end{equation}
Here $\beta_j$ is the rank, or Betti number, and $\Tor H_j(X;\Z)$ is the finite
subgroup of classes annihilated by some nonzero integer; its cardinality is the
torsion order.  Tensoring with $\R$ annihilates the finite summand.  The real Hodge
kernel therefore recovers $\beta_j$, but two complexes with the same real homology
can still differ because one has nontrivial integral torsion and the other does not.

This distinction already appears in classical TDA. Carlsson et al.~\cite{CarlssonEtAl2008} used persistent homology with \(\mathbb Z_2\) coefficients, together with an explicit parametrization of the patches by linear and quadratic image gradients, to identify a Klein-bottle structure in high-contrast \(3\times3\) natural-image patches. More generally, coefficient-field dependence along a filtration is controlled by torsion in relative homology~\cite{ObayashiYoshiwaki2023}. Torsion-based quantities have also been used in practical settings, including graph neural networks and circRNA--drug association prediction~\cite{ShenEtAl2025TorGNN,torsion_in_med}. Together, these results motivate extending quantum Betti-number estimation to torsion-sensitive invariants.

A recent preprint proposed detecting torsion quantumly by equating the dimension of
homology modulo a prime with the dimension of the kernel of a corresponding
Laplacian~\cite{nghiem2025torsion}. This relation does not hold in general, so its
stated correctness guarantee does not follow. Our method avoids this issue and,
under explicit assumptions, provides, to our knowledge, the first rigorous quantum
estimator of torsion order in integral homology.

The nonzero Laplacian spectrum contains arithmetic information that the kernel
discards.  For a positive semidefinite matrix $A$, $\pdet(A)$ denotes the product
of its positive eigenvalues, counted with multiplicity; unlike the determinant, it
ignores zero modes.  For the signed boundary matrix
$B_r:C_r\to C_{r-1}$, define
$$\ell_r(X)=\log\pdet(B_rB_r^{\mathsf T}).$$  Matrix-tree theorems connect these products
to higher critical groups, the higher-dimensional analogues of graph sandpile
groups, and to torsion-weighted spanning-tree sums.  The critical-group order is
torsion-sensitive, but it is a global tree aggregate rather than, in general, the
homology-torsion order of $X$ itself.

The paper is organized around one thesis: \emph{a bottom-left-QSVT amplitude-estimation algorithm
estimates boundary log pseudodeterminants, and matrix-tree identities determine
when alternating combinations of those estimates have a torsion-sensitive
topological interpretation}.  The logical dependence is summarized in
\cref{tab:invariant-map}.  The lower-homology correction terms in the table are
algebraic inputs; the quantum routine estimates the spectral terms $\ell_r$.
Here a pure $d$-dimensional complex is \emph{acyclic in positive codimension}
(APC) if its rational reduced homology vanishes below dimension $d$; see
\cref{def:apc}.
For such a complex, write $t_q(X)=|\wtH_q(X;\Z)|$.  For $0\leq k\leq d$, the
corrected alternating expression used below is
\begin{equation}
  \underbrace{\sum_{r=0}^k(-1)^{k-r}\ell_r(X)}_
    {\text{estimated by the quantum routine}}
  +
  \underbrace{2\sum_{q=0}^{k-2}(-1)^{k-q-2}\log t_q(X)}_
    {\text{separately supplied homology correction}}.
  \label{eq:intro-alternating-combination}
\end{equation}

\begin{table}[!b]
\centering
\small
\begin{tabular}{@{}L{0.27\linewidth}L{0.31\linewidth}L{0.34\linewidth}@{}}
\toprule
Available hypotheses & What to do & Guaranteed interpretation \\
\midrule
none beyond finiteness
  & use an individual $\ell_r(X)$
  & boundary spectral statistic; no group order is implied \\
\arrayrulecolor{black!15}\specialrule{0.3pt}{2pt}{2pt}\arrayrulecolor{black}
$X$ is APC
  & form the full corrected expression \eqref{eq:intro-alternating-combination}
  & $\log\tau_k(X)$, the logarithm of a torsion-weighted tree enumerator \\
\arrayrulecolor{black!15}\specialrule{0.3pt}{2pt}{2pt}\arrayrulecolor{black}
\emph{Critical-group branch:}
$X$ satisfies \cref{cor:root-free-critical}
  & set $k=i+1$ in the corrected expression
  & $\log|K_i(X)|$, the logarithm of a higher critical-group order \\
\arrayrulecolor{black!15}\specialrule{0.3pt}{2pt}{2pt}\arrayrulecolor{black}
\emph{Certified-tree branch:}
$T$ satisfies \cref{thm:certified-tree-torsion}
  & evaluate the corrected expression on $T$ and divide by $2$
  & $\log|\wtH_{k-1}(T;\Z)|$, the logarithm of the homology order \\
\bottomrule
\end{tabular}
\caption{How to interpret the spectral output.  Each row states the hypotheses
needed for its conclusion; the quantum routine supplies the $\ell_r$ terms, not
the lower-homology correction in \eqref{eq:intro-alternating-combination}.
The last two rows are alternative specializations of the APC row, not successive
steps. The tree enumerator $\tau_k$ is defined in \eqref{eq:tree-enumerator}.}
\label{tab:invariant-map}
\end{table}

Our main algorithm estimates the spectral terms underlying every row of
\cref{tab:invariant-map}.
Given coherent boundary access and a certified gap,
the algorithm uses the complementary, or bottom-left, block of a QSVT circuit
to encode logarithmic spectral weights into a success probability. The response
at zero is calibrated so that the desired sum over the nonzero spectrum can be
recovered from a known offset, without separately counting the zero singular
values. Amplitude estimation, followed by classical rescaling and subtraction
of this offset, then gives $\ell(B)/D$, i.e., the sum of the logarithms of the
nonzero squared singular values normalized by the input-space dimension $D$.
Taking corrected alternating sums of the resulting estimates implements the identities in
the remaining rows.
This route requires no oracle for a selected higher-dimensional tree. The
estimator naturally returns this normalized quantity, while estimating the full
$\ell(B)$ to fixed additive error incurs an additional factor $D$.
This pipeline is summarized in \cref{fig:quantum-pipeline}.

\begin{figure}[H]
\centering

\definecolor{accblue}{RGB}{31,78,140}
\definecolor{accred}{RGB}{160,44,44}
\definecolor{celldark}{RGB}{26,86,160}
\definecolor{celllight}{RGB}{176,203,232}
\definecolor{panelgray}{RGB}{240,241,244}
\definecolor{panelblue}{RGB}{233,240,250}
\definecolor{panelred}{RGB}{250,235,233}
\definecolor{subtxt}{RGB}{105,105,110}

\def\PW{2.75}
\def\PH{2.20}
\def\DX{3.18}
\def\Ytitle{-1.43}
\def\Ysub{-2.02}
\def\Ybrace{-2.79}
\def\Ycap{-3.08}

\begin{tikzpicture}[
  panel/.style={rounded corners=8pt, draw=none,
                minimum width=\PW cm, minimum height=\PH cm},
  flow/.style={-{Stealth[length=4pt,width=4pt]}, gray!60, line width=1.0pt},
  ttl/.style={font=\scriptsize\bfseries, align=center, text width=2.95cm},
  sub/.style={font=\scriptsize, color=subtxt, align=center,
              text width=2.85cm},
  outcome/.style={draw=accred, rounded corners=2pt, line width=0.7pt,
                  fill=white, text=accred, font=\scriptsize,
                  align=center, text width=1.58cm, minimum height=0.38cm,
                  inner sep=1pt},
]

\node[panel, fill=panelgray] (P1) at (0*\DX,0) {};
\node[panel, fill=panelblue] (P2) at (1*\DX,0) {};
\node[panel, fill=panelblue] (P3) at (2*\DX,0) {};
\node[panel, fill=panelblue] (P4) at (3*\DX,0) {};
\node[panel, fill=panelred ] (P5) at (4*\DX,0) {};
\foreach \i in {0,1,2,3}{%
  \draw[flow] ({\i*\DX+1.44},0) -- ({\i*\DX+1.74},0);}

\begin{scope}[shift={(0*\DX,0)}, scale=0.88]
  \coordinate (v1) at (-0.52, 0.76);
  \coordinate (v2) at (-1.02,-0.06);
  \coordinate (v3) at ( 0.05, 0.16);
  \coordinate (v4) at (-0.34,-0.76);
  \coordinate (v5) at ( 0.82, 0.69);
  \coordinate (v6) at ( 0.76,-0.61);
  \fill[black!22] (v1) -- (v2) -- (v3) -- cycle;
  \fill[black!22] (v3) -- (v5) -- (v6) -- cycle;
  \draw[black!65, line width=1.0pt]
     (v1) -- (v2) (v1) -- (v3) (v2) -- (v3)
     (v2) -- (v4) (v3) -- (v4) (v4) -- (v6)
     (v3) -- (v6) (v3) -- (v5) (v5) -- (v6);
  \foreach \v in {v1,v2,v3,v4,v5,v6}{\fill[black] (\v) circle (0.085);}
\end{scope}

\begin{scope}[shift={(1*\DX,0)}]
  \def\cs{0.160}
  \foreach \r in {1,...,6}{\foreach \c in {1,...,8}{
    \draw[gray!25, line width=0.2pt]
      ({(\c-5)*\cs},{(3-\r)*\cs}) rectangle
      ({(\c-4)*\cs},{(4-\r)*\cs});}}
  \foreach \r/\c in {1/1,1/2,2/2,2/3,3/3,3/4,4/4,4/5,5/5,5/6,6/6,6/7}{
    \fill[celldark] ({(\c-5)*\cs},{(3-\r)*\cs}) rectangle
                    ({(\c-4)*\cs},{(4-\r)*\cs});}
  \foreach \r/\c in {1/4,2/5,3/6,4/7,5/8,2/1,3/2,4/3}{
    \fill[celllight] ({(\c-5)*\cs},{(3-\r)*\cs}) rectangle
                     ({(\c-4)*\cs},{(4-\r)*\cs});}
  \draw[line width=0.8pt] (-0.70, 0.57) -- (-0.80, 0.57)
                          -- (-0.80,-0.57) -- (-0.70,-0.57);
  \draw[line width=0.8pt] ( 0.70, 0.57) -- ( 0.80, 0.57)
                          -- ( 0.80,-0.57) -- ( 0.70,-0.57);
  \node[font=\scriptsize, color=accblue] at (0,-0.84) {$\mathcal B_r$};
\end{scope}

\begin{scope}[shift={(2*\DX,0)}]
  \def\gam{0.39}
  \fill[gray!15] (-\gam,-0.67) rectangle (\gam,0.78);
  \draw[gray!55, line width=0.6pt] (-1.05,-0.48) -- (1.05,-0.48);
  \draw[gray!55, line width=0.6pt] (0,-0.60) -- (0,0.80);
  \draw[gray!55, dashed, line width=0.5pt] (-\gam,-0.60) -- (-\gam,0.70);
  \draw[gray!55, dashed, line width=0.5pt] ( \gam,-0.60) -- ( \gam,0.70);
  \draw[accblue, line width=1.45pt]
    (-1.00,-0.37) .. controls (-0.78,-0.25) and (-0.52,0.17) .. (-\gam,0.63);
  \draw[accblue, line width=1.45pt]
    ( 1.00,-0.37) .. controls ( 0.78,-0.25) and ( 0.52,0.17) .. ( \gam,0.63);
  \fill[accred] (0,0.10) circle (0.075);
  \node[font=\tiny, color=subtxt] at (-\gam,-0.78) {$-\gamma$};
  \node[font=\tiny, color=subtxt] at ( \gam,-0.78) {$\gamma$};
\end{scope}

\begin{scope}[shift={(3*\DX,0)}]
  \node[draw=accblue, rounded corners=2pt, line width=0.9pt,
        fill=accblue!8, text=accblue, font=\scriptsize\bfseries,
        minimum width=0.62cm, minimum height=0.50cm] (AE) at (-0.82,-0.12)
        {$\mathrm{AE}$};
  \node[draw=accblue, rounded corners=2pt, line width=0.75pt,
        fill=white, text=accblue, font=\scriptsize,
        minimum width=1.48cm, minimum height=0.50cm] (Ell) at (0.53,-0.12)
        {$\widehat{\ell_r(X)/D_r}$};
  \draw[-{Stealth[length=3pt]}, accblue, line width=0.75pt]
        (AE.east) -- (Ell.west);
\end{scope}

\begin{scope}[shift={(4*\DX,0)}]
  \coordinate (hub) at (-1.02,0);
  \fill[accred] (hub) circle (0.065);
  \node[outcome] (tree) at (0.22,0.68) {tree sum};
  \node[outcome] (group) at (0.22,0) {critical\\[-2pt]group};
  \node[outcome] (torsion) at (0.22,-0.68) {torsion\\[-2pt]order};
  \draw[-{Stealth[length=3pt]}, accred, line width=0.75pt] (hub) -- (tree.west);
  \draw[-{Stealth[length=3pt]}, accred, line width=0.75pt] (hub) -- (group.west);
  \draw[-{Stealth[length=3pt]}, accred, line width=0.75pt] (hub) -- (torsion.west);
\end{scope}

\node[ttl] at (0*\DX,\Ytitle) {Simplicial complex};
\node[ttl] at (1*\DX,\Ytitle) {Boundary map};
\node[ttl] at (2*\DX,\Ytitle) {Bottom-left QSVT};
\node[ttl] at (3*\DX,\Ytitle) {Amplitude\\estimation};
\node[ttl] at (4*\DX,\Ytitle) {Topological output};

\node[sub] at (0*\DX,\Ysub) {topological input};
\node[sub] at (1*\DX,\Ysub) {block encoding};
\node[sub] at (2*\DX,\Ysub) {logarithmic weights;\\calibrated zero mode};
\node[sub] at (3*\DX,-2.20)
      {recover the\\normalized log\\pseudodeterminant};
\node[sub] at (4*\DX,\Ysub) {matrix--tree identities};

\draw[accblue, line width=0.75pt, decorate,
      decoration={brace, mirror, amplitude=5pt, raise=1pt}]
      ({1*\DX-1.36},\Ybrace) -- ({3*\DX+1.36},\Ybrace);
\draw[accred, line width=0.75pt, decorate,
      decoration={brace, mirror, amplitude=5pt, raise=1pt}]
      ({4*\DX-1.36},\Ybrace) -- ({4*\DX+1.36},\Ybrace);

\node[anchor=north, align=center, color=accblue, font=\scriptsize,
      text width=8.4cm]
      at (2*\DX,\Ycap)
      {quantum output: an estimate of $\ell_r(X)/D_r$\\
       (the normalized log pseudodeterminant)};
\node[anchor=north, align=center, color=accred, font=\scriptsize,
      text width=2.75cm]
      at (4*\DX,\Ycap)
      {classical matrix--tree interpretation};

\end{tikzpicture}
\caption{Quantum-to-topological pipeline. For each degree $r$, the padded
boundary map $\mathcal B_r$ is block encoded, bottom-left QSVT encodes
calibrated logarithmic spectral weights into a success probability, and
amplitude estimation estimates that probability. Known rescaling and offset
subtraction then recover an estimate of $\ell_r(X)/D_r$ without a separate
rank estimate. Estimates from several degrees are combined classically using
the applicable matrix--tree identity and any required lower-homology
corrections. Under the stated hypotheses, this yields a torsion-weighted tree
enumerator, a critical-group order, or a certified homology-torsion order.}
\label{fig:quantum-pipeline}
\end{figure}

\subsection{Contributions}

The main contributions are:
\begin{enumerate}[leftmargin=2em]
  \item\label{contrib:log-pseudodeterminants}
  \textbf{Improved gap dependence for boundary log pseudodeterminants.}

  We give a bottom-left-QSVT amplitude-estimation algorithm for boundary
  log pseudodeterminants normalized by the size of the candidate simplex
  space, under explicit block-encoding, state-preparation, singular-gap,
  and normalized-sum-bound assumptions. For the clique encodings considered
  here, the QSVT polynomial degree is
  $\widetilde O(\gamma^{-1})$, compared with the
  $\widetilde O(\gamma^{-2})$ obtained by applying the logarithm
  approximation of Luongo and Shao
  \cite{luongo2024quantumalgorithmsspectralsums} directly to the normalized
  upper Hodge Laplacian.

  \item\label{contrib:group-orders}
  \textbf{Quantum estimation of critical-group and homology-torsion orders.}

  Leveraging the higher-dimensional matrix-tree identities of Duval,
  Klivans, and Martin
  \cite{DuvalKlivansMartin2011Cellular,DuvalKlivansMartin2011FPSAC},
  we combine boundary log pseudodeterminants to estimate tree enumerators,
  critical-group orders, and, for certified trees, homology-torsion orders.
  The necessary lower-homology correction orders must be supplied separately.

  \item\label{contrib:resource-comparison}
  \textbf{An explicit torsion-bearing benchmark and its resource scaling.}

  We provide a certified torsion-bearing family by joining balanced multipartite
  clique complexes, a specialization
  of the complete colorful complexes\footnote{Complete colorful complexes are
  clique complexes of complete multipartite graphs: each simplex contains at
  most one vertex from each color class. They are balanced when all classes
  have equal size.} studied by Duval, Klivans, and Martin
  \cite[Section~6.1]{DuvalKlivansMartin2011Cellular}, with a fixed flag
  triangulation of $\mathbb{RP}^2$.
  The resulting complexes are pure and APC, are their own unique
  top-dimensional spanning trees, have vanishing lower-homology corrections,
  and have explicit inverse-polynomial boundary gaps. Building on the
  all-degree Hodge spectra of complete colorful complexes
  \cite{DuvalKlivansMartin2011Cellular}, we derive exact boundary
  singular-value multiplicities, condition numbers, and log
  pseudodeterminants.

  When the family has $k$ parts of $2^k$ vertices each, our method estimates
  the log torsion order to relative error with polynomially many oracle
  queries and, under our implementation assumptions, quantum gates, whereas
  the explicit boundary matrix has $N^{\Theta(\log N)}$ entries. Following
  the resource comparison of Berry et al.\
  \cite[Section~IV~A]{BerryEtAl2024QTDA}, this measures the advantage of
  avoiding explicit matrix construction.
\end{enumerate}

%

\paragraph{Scope.}
Subject to the stated access, state-preparation, and spectral-gap assumptions,
the quantum routine estimates boundary log pseudodeterminants, which are
spectral sums over the nonzero singular values of the boundary maps. This
spectral output does not itself require a topological promise, that is a
separate step. As summarized in \cref{tab:invariant-map}, corrected alternating
combinations of these estimates can give a torsion-weighted tree count, a
critical-group order, or, when the input is a certified higher-dimensional
tree, the order of its finite homology group. For a general complex $X$,
however, the result should not be interpreted as
$\log|\Tor H_i(X;\Z)|$, the logarithm of the number of torsion elements in
its $i$-th homology group. Any required lower-homology correction orders must
be supplied separately. Moreover, an order does not determine the full
structure of a finite group as different groups can have the same order.

The rest of the paper is organized as follows.  \Cref{sec:related-work} places the
problem among QTDA, torsion-sensitive persistence, and spectral-sum algorithms.
\Cref{sec:preliminaries} fixes the augmented-chain, pseudodeterminant, and
critical-group conventions.  \Cref{sec:matrix-tree} derives the determinant
identities and the certified-tree specialization.  \Cref{sec:quantum-algorithm}
gives the bottom-left-QSVT estimator and its clique-complex implementation.
\Cref{sec:multipartite} develops the spectral and torsion-bearing benchmarks.~\Cref{sec:interpretation} states the interpretation and complexity limitations,
and~\cref{sec:conclusion} concludes.  The kernel-suppressing squared-amplitude
variant is given in \cref{app:squared-amplitude-estimator}; weighted-polynomial
and QSVT proofs, implementation-error analysis,
error-allocation and benchmark-resource details, and indexing-convention check
are deferred to the appendices.

\section{Related work}
\label{sec:related-work}

\subsection{Quantum topological data analysis}

{
The foundational QTDA proposal estimates Betti numbers by combining simplex-state
preparation with spectral access to a combinatorial Laplacian~\cite{lloyd2015quantumalgorithmstopologicalgeometric}.  Subsequent work has refined
the persistent setting, density-of-states estimation, qubit counts, input models, and
resource requirements~\cite{Hayakawa_2022,Scali_2024,McArdle2022AQubits}.  Analyses of prospective quantum
advantage emphasize that simplex density, state preparation, precision, and the
smallest nonzero Laplacian eigenvalue can dominate the running time~\cite{BerryEtAl2024QTDA}.}
The present work is subject to the same constraints but estimates a different
spectral statistic: a logarithmic sum over the nonzero spectrum rather than only
its nullity.

\subsection{Complexity of quantum Betti-number estimation}

The complexity landscape for Betti-number estimation has been studied
extensively, although these results do not directly characterize the
log-pseudodeterminant problem considered here.
Deciding whether a Betti number of a clique complex is nonzero is
$\mathsf{QMA}_1$-hard, and therefore so is any multiplicative approximation
that preserves zero~\cite{crichigno2024}; a gapped weighted version is
$\mathsf{QMA}_1$-hard and contained in $\mathsf{QMA}$~\cite{king2024}. Average-case hardness of Betti number estimation and related problems has been recently investigated in~\cite{strelchuk2026average}. 
Closely related low-lying spectral-density and quasi-Betti-number estimation
problems are $\mathsf{DQC}1$-hard at inverse-polynomial precision~\cite{cade2021,gyurik2022}, providing evidence that normalized spectral
estimation can be hard for classical computers. In the other direction,
Apers, Gribling, Sen, and Szab\'o~\cite{apers2023} gave a
classical path integral Monte Carlo algorithm that estimates the normalized
Betti number of a general complex in time
$n^{O(\gamma^{-1/2}\log(1/\varepsilon))}$, and of a clique complex in time
$(n/\lambda_{\max})^{O(\gamma^{-1/2}\log(1/\varepsilon))}\cdot\mathrm{poly}(n)$.
Here $n$ is the number of vertices, $k$ is the target simplex dimension,
$\gamma$ is the normalized gap, $\varepsilon$ is the additive precision, and
$\lambda_{\max}$ is the largest relevant combinatorial-Laplacian eigenvalue.
Their algorithm is efficient for constant $\gamma$ and $\varepsilon$; because
$\lambda_{\max}\geq k$, for clique complexes with $k\in\Omega(n)$ it also
remains efficient at inverse-polynomial precision when $\gamma$ is constant.
Betti-number hardness does not directly classify our spectral target: gapped
clique homology is $\mathsf{QMA}_1$-hard, its ungapped variant is
PSPACE-complete~\cite{crichigno2024,king2024,Rudolph2026}, normalized
low-lying spectral-density and quasi-Betti estimation are
$\mathsf{DQC1}$-hard~\cite{cade2021,gyurik2022}, and truncated random clique
complexes are conditionally hard on average~\cite{strelchuk2026average}.
An earlier, privately circulated version of the present work introduced the
boundary-log-pseudodeterminant approach to estimating integral-homology torsion
and the projective-plane join construction used to obtain torsion-bearing
complexes with controlled spectral gaps. Weso{\l}owski subsequently built on
and extended these ideas, using the join construction to prove NP-hardness and
conditional $\mathsf{QMA}_1$-hardness of torsion
detection~\cite{wesolowski2026torsion}. More recently, Nghiem, Berry, and Phan
used a similar construction to prove NP-hardness of $p$-torsion detection and
proposed a one-sided quantum witness based on mod-$p$ rank
estimation~\cite{nghiemBerryPhan2026torsion}. The proposed witness appears to
require further clarification. As currently presented, its sketching circuits
show no $i,j$ dependence, which would force
$\operatorname{rank}(U\partial_rV)\leq 1$. Further details are also needed to
justify the normalization, error analysis, and resulting complexity bounds. We
therefore do not use these claims here.

\subsection{Torsion and coefficient choice in classical TDA}

Persistent homology is almost always computed over a field, because field
coefficients guarantee that a persistence module decomposes into intervals.  Over the integers this decomposition
fails and torsion enters, which makes the algebra considerably more
subtle~\cite{ZomorodianCarlsson2005}.  TDA therefore usually sees torsion only
indirectly, through the choice of coefficients.

The universal coefficient theorem makes this precise: the $j$th Betti number
over $\F_p$ exceeds the rational one by the number of cyclic summands of order
divisible by $p$ in $H_j(X;\Z)$ and $H_{j-1}(X;\Z)$.  Comparing fields thus
reveals which primes divide the torsion, but not their exponents; for example,
$\Z/2\Z$ and $\Z/4\Z$ look the same over $\F_2$.  Boissonnat and Maria compute
persistence over many fields in a single pass and infer the prime divisors of
torsion coefficients~\cite{BoissonnatMaria2019}. Frosini gives stable
comparisons of persistent homology with arbitrary abelian coefficient
groups~\cite{Frosini2013}.  Obayashi and Yoshiwaki characterize when the
choice of field changes a persistence diagram, in terms of torsion in the
relative homology groups of the filtration~\cite{ObayashiYoshiwaki2023}.

The natural-image example of \cref{sec:introduction} illustrates the point.
Over $\Z_2$ the torus and the Klein bottle both have Betti numbers $(1,2,1)$,
so $\Z_2$ persistence alone cannot separate them.  Carlsson et al.\ supported a
Klein-bottle description by combining $\Z_2$ persistence with an explicit
parametrization by linear and quadratic image gradients; they neither compared
coefficient fields nor computed integral torsion~\cite{CarlssonEtAl2008}.
These works show that torsion matters in TDA, but they differ from ours in
output and computational model.  Field comparisons detect which primes divide
the torsion, whereas the log-pseudodeterminant route estimates the logarithm of
a group order, not its prime decomposition.

\subsection{Matrix-tree theorems, critical groups, and spectral sums}

{
For graphs, reduced Laplacian determinants, spanning trees, and critical-group orders
coincide.  Smith normal form is an integer row-and-column reduction whose nonunit
diagonal entries give the invariant factors of the group; it recovers the full
group decomposition, while the determinant recovers only its order~\cite{https://doi.org/10.1112/S0024609397003305,Lorenzini2008SmithNF}. Matrix-tree theorems generalize these statements to higher dimensions~\cite{duval2008simplicialmatrixtreetheorems, DuvalKlivansMartin2011Cellular}.Higher critical groups are cokernels of
reduced upper Laplacians under suitable hypotheses and have torsion-weighted tree
enumerators as their orders~\cite{dmtcs:2909}.

The corresponding order identities make the spectral connection explicit.
{For a connected
graph $G$ with Laplacian $L_G$,
\begin{equation}
  |K(G)|=\tau(G)=\frac{\pdet(L_G)}{|V(G)|},
  \label{eq:graph-mtt-intro}
\end{equation}
where $K(G)$ is the sandpile or critical group and $\tau(G)$ is the number of
spanning trees~\cite{https://doi.org/10.1112/S0024609397003305,Lorenzini2008SmithNF}.
Consequently, $\log\pdet(L_G)$ is both a spectral statistic and, after a known
normalization, the logarithm of the order of a finite abelian group.  In higher
dimensions, simplicial matrix-tree theorems replace the ordinary tree count by a sum
weighted by squares of the orders of homology-torsion groups~\cite{duval2008simplicialmatrixtreetheorems}.  Higher critical groups extend graph
sandpile groups and retain an exact determinant interpretation~\cite{dmtcs:2909}. Under the hypotheses of~\cref{cor:root-free-critical}, these descriptions agree:
\begin{equation}
  |K_i(X)|
  =
  \tau_{i+1}(X)
  =
  \sum_{T\in\mathcal T_{i+1}(X)}
  \left|\widetilde H_i(T;\Z)\right|^2,
  \label{eq:critical-tree-intro}
\end{equation}
where $\mathcal T_{i+1}(X)$ is the set of $(i+1)$-dimensional spanning trees.}

Quantum algorithms for generic spectral sums already include log-determinants of
nonsingular positive matrices and graph spanning-tree applications~\cite{luongo2024quantumalgorithmsspectralsums}.}
{Here the matrices are singular boundary Laplacians.  We encode the logarithmic
contributions of their nonzero spectrum in a success probability and assign the
kernel a complementary value whose contribution cancels the unknown rank.  We
combine estimates from several degrees through higher-dimensional matrix-tree
identities.  This avoids
constructing a reduced Laplacian, which would require a chosen higher-dimensional
tree.}

Complete colorful (multipartite clique) complexes already have explicit
all-degree total-Laplacian spectra and tree enumerators~\cite[Section~6.1, Equations~(30)--(34) and Theorem~6.2]
{DuvalKlivansMartin2011Cellular}.
Berry et al.~use the balanced top-degree case as a QTDA benchmark~\cite[Appendix~F]{BerryEtAl2024QTDA}.  In~\cref{sec:multipartite} we specialize
the all-degree formulas and extract the boundary-map parameters needed here.

\section{Algebraic and spectral preliminaries}
\label{sec:preliminaries}

\subsection{Boundary operators, Hodge Laplacians, and pseudodeterminants}

We outline a concise review of the mathematical framework used
throughout the paper, fixing notation and conventions along the way.
For a more elaborate introduction, see~\cite{edelsbrunner2010,hatcher2002} for
the topological preliminaries and~\cite{SchaubEtAl2020Hodge,WeiWei2025PersistentTopologicalLaplacians} for
topological Laplacians.  For quantum spectral-sum methods, see~\cite{luongo2024quantumalgorithmsspectralsums}.

A finite abstract simplicial complex $X$ is a finite collection of sets of vertices
closed under taking subsets.  Its elements are called faces; a face with $r+1$
vertices is a simplex of dimension $r$, $r$-simplex, and $\dim X$ is the largest simplex dimension.
The $r$-skeleton $X^{(r)}$ contains all faces of dimension at most $r$, and
$X$ is \emph{pure $d$-dimensional} if every face is contained in a
$d$-simplex.

Assign an orientation to every face, write $X_r$ for the set of $r$-simplices,
and set $f_r=|X_r|$.  The chain group $C_r(X;\Z)$ is the free abelian group
generated by the oriented $r$-simplices, which means it consists of all integer
linear combinations of those simplices.  For an oriented $r$-simplex, the boundary map is
\begin{equation}
  B_r[v_0,\ldots,v_r]
  =
  \sum_{j=0}^{r}(-1)^j
  [v_0,\ldots,\widehat v_j,\ldots,v_r],
  \label{eq:simplicial-boundary}
\end{equation}
where $v_j$ is the $j$-th vertex in the chosen ordering and the hat denotes its
omission.  This formula determines the boundary matrix $B_r$, whose columns are
indexed by oriented $r$-simplices and rows are indexed by oriented
$(r-1)$-simplices.  The same matrix is used when the coefficients are extended
to $\mathbb{Q}$ or $\mathbb{R}$.  The empty face is the unique $(-1)$-simplex, so
$f_{-1}=1$ for nonempty $X$.  With augmented chains,
\begin{equation}
  \cdots
  \xrightarrow{B_{r+1}} C_r(X;\Z)
  \xrightarrow{B_r} C_{r-1}(X;\Z)
  \xrightarrow{}
  \cdots
  \xrightarrow{B_0} C_{-1}(X;\Z)
  \longrightarrow 0,
  \label{eq:augmented-chain-complex}
\end{equation}
where $C_{-1}(X;\Z)\cong\Z$ and $B_0$ sends each vertex to the empty
simplex.  In the oriented-simplex bases, each $B_r$ is an integer matrix.
These matrices satisfy $B_rB_{r+1}=0$, because taking the boundary twice gives
zero.  Thus every $r$-boundary is an $r$-cycle, with
\begin{equation}
  \im B_{r+1}\subseteq\ker B_r.
  \label{eq:inclusion-boundary-cycle}
\end{equation}
The elements of $\ker B_r$ are called $r$-cycles, and those of $\im B_{r+1}$
are called $r$-boundaries.  For notational convenience, we set $C_r=0$ outside
$-1\leq r\leq\dim X$, so the corresponding boundary maps, including $B_{-1}$
and $B_{\dim X+1}$, are zero.

Over $\R$, for $-1\leq r\leq\dim X$, define the upper, lower, and total
Hodge Laplacians by
\begin{equation}
  L_r^{\up}=B_{r+1}B_{r+1}^{\mathsf T},
  \qquad
  L_r^{\down}=B_r^{\mathsf T}B_r,
  \qquad
  {\Delta_r
  =B_r^{\mathsf T}B_r+B_{r+1}B_{r+1}^{\mathsf T}
  =L_r^{\up}+L_r^{\down}.}
  \label{eq:hodge-laplacians}
\end{equation}
Here $^{\mathsf T}$ denotes the matrix transpose.  The lower term compares simplices through their
faces, while the upper term compares them through higher-dimensional cofaces.
Let $\wtH_r(X;\R)$ denote reduced homology, and define the reduced Betti number by
\begin{equation}
  \wtbeta_r(X):=\dim_{\R}\wtH_r(X;\R).
  \label{eq:reduced-betti-def}
\end{equation}
The Hodge decomposition then gives
\begin{equation}
  C_r(X;\R)
  =
  \im B_{r+1}
  \mathbin{\oplus}
  \ker\Delta_r
  \mathbin{\oplus}
  \im B_r^{\mathsf T},
  \qquad
  {\dim\ker\Delta_r=\wtbeta_r(X).}
  \label{eq:hodge-decomposition}
\end{equation}
The three summands are mutually orthogonal.  Each reduced homology class has a
unique representative in $\ker\Delta_r$.  For a nonempty complex, reduced homology agrees
with ordinary homology in positive degrees and removes one ubiquitous
degree-zero component class; thus $\wtbeta_0=\beta_0-1$, and a connected
complex has $\wtH_0=0$.  The augmented convention also gives
$\Delta_{-1}=B_0B_0^{\mathsf T}$ and $\wtH_{-1}(X)=0$.

{
While the kernel of the combinatorial Laplacian captures Betti numbers, the
torsion-sensitive quantities studied in this paper are built from its nonzero
spectrum.  For a positive semidefinite matrix $A$, let
\begin{equation}
  \pdet(A)=\prod_{\lambda\in\Spec(A),\,\lambda>0}\lambda.
  \label{eq:pdet-definition}
\end{equation}
}
Here $\Spec(A)$ is the eigenvalue multiset, so multiplicities are included.
Taking the logarithm converts this eigenvalue product into the spectral sum
\begin{equation}
  \log\pdet(A)=\sum_{\lambda>0}\log\lambda,
  \label{eq:log-pdet-definition}
\end{equation}
which is the form estimated by our quantum algorithm.

For a matrix $B$, the positive eigenvalues of $BB^{\mathsf T}$ are the squares
of its positive singular values.  Thus, if $\zeta_1,\ldots,\zeta_{\chg{\rank B}}$ are the
positive singular values of $B$, then
\begin{equation}
  \pdet(BB^{\mathsf T})=\prod_{j=1}^{\chg{\rank B}}\zeta_j^2.
  \label{eq:pdet-singular-values}
\end{equation}
Therefore,
\begin{equation}
  \log\pdet(BB^{\mathsf T})=\sum_{j=1}^{\chg{\rank B}}\log\zeta_j^2.
  \label{eq:log-pdet-singular-values}
\end{equation}

We now fix notation for the boundary-matrix pseudodeterminants used throughout
the paper.  For $r\geq0$, define
\begin{equation}
  \pi_r(X)
  :=
  \pdet(B_rB_r^{\mathsf T}),
  \qquad
  \ell_r(X):=\log\pi_r(X).
  \label{eq:pi-and-ell}
\end{equation}
Since $B_rB_r^{\mathsf T}=L_{r-1}^{\up}$, we can write
\begin{equation}
  \pi_r(X)=\pdet(L_{r-1}^{\up}).
  \label{eq:pi-upper-laplacian}
\end{equation}
We use the convention $\pdet(0)=1$.

\paragraph{Degree-zero convention.}
\label{par:augmentation}
At degree zero, augmentation makes $B_0$ the row vector $(1,\ldots,1)$, so
$B_0B_0^{\mathsf T}=[f_0]$.  Therefore,
\begin{equation}
  \pi_0(X)=f_0,
  \qquad
  \ell_0(X)=\log f_0.
  \label{eq:pi-zero}
\end{equation}
This supplies the degree-zero factor required later in the alternating
matrix-tree formula.  Without augmentation, $B_0$ is absent and the constant
vector remains as an additional zero mode of the degree-zero Laplacian.

\subsection{Integral reduced homology and higher critical groups}

The reduced integral homology of $X$, for $-1\leq r\leq\dim X$, is
\begin{equation}
  \wtH_r(X;\Z)=\ker B_r/\im B_{r+1}.
  \label{eq:integral-homology}
\end{equation}
Two cycles represent the same homology class when they differ by a boundary.  A
nonzero class represented by a cycle $z$ is a \emph{torsion class} if $nz$ is a
boundary for some positive integer $n$.  Equivalently, it is a nonzero
finite-order element of the integral homology group.  These classes form the
finite subgroup $\Tor\wtH_r(X;\Z)$.
For $0\leq r<\dim X$, the $r$th higher critical group is
\begin{equation}
  K_r(X)
  :=
  \ker B_r\big/
  \im(B_{r+1}B_{r+1}^{\mathsf T}).
  \label{eq:critical-group-definition}
\end{equation}
This is the higher-dimensional analogue of the graph sandpile group.  It
differs from homology because cycles are taken modulo the image of the upper
Laplacian rather than modulo all boundaries.

For a finitely generated abelian group, $\rank$ denotes its free abelian rank.
The following exact sequence makes this distinction precise.

\begin{proposition}[Critical group versus homology]
\label{prop:critical-exact-sequence}
For every finite simplicial complex $X$ and every
$0\leq r<\dim X$, there is a natural short
exact sequence
\begin{equation}
  0
  \longrightarrow
  \frac{\im B_{r+1}}
       {\im(B_{r+1}B_{r+1}^{\mathsf T})}
  \longrightarrow
  K_r(X)
  \longrightarrow
  \wtH_r(X;\Z)
  \longrightarrow
  0.
  \label{eq:critical-exact-sequence}
\end{equation}
The leftmost quotient is always finite.  Consequently,
\begin{equation}
  \rank K_r(X)=\wtbeta_r(X),
  \qquad
  K_r(X)\text{ is finite if and only if }\wtbeta_r(X)=0.
  \label{eq:critical-finiteness}
\end{equation}
In that case all three terms are finite and
\begin{equation}
  |K_r(X)|
  =
  \left|
  \frac{\im B_{r+1}}
       {\im(B_{r+1}B_{r+1}^{\mathsf T})}
  \right|
  |\wtH_r(X;\Z)|.
  \label{eq:critical-order-factorization}
\end{equation}
\end{proposition}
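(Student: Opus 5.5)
The plan is to apply the third isomorphism theorem to the chain of subgroups
\[
\im(B_{r+1}B_{r+1}^{\mathsf T})\subseteq \im B_{r+1}\subseteq \ker B_r \subseteq C_r(X;\Z).
\]
The first inclusion holds because $B_{r+1}B_{r+1}^{\mathsf T}x=B_{r+1}(B_{r+1}^{\mathsf T}x)$ and $B_{r+1}^{\mathsf T}$ maps $C_r(X;\Z)$ into $C_{r+1}(X;\Z)$, since the matrices are integral. The second inclusion is \eqref{eq:inclusion-boundary-cycle}. For nested subgroups $A\subseteq B\subseteq C$ of an abelian group, the sequence $0\to B/A\to C/A\to C/B\to0$ is exact. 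We take $A=\im(B_{r+1}B_{r+1}^{\mathsf T})$, $B=\im B_{r+1}$ and $C=\ker B_r$. Then $C/A=K_r(X)$ by \eqref{eq:critical-group-definition} and $C/B=\wtH_r(X;\Z)$ by \eqref{eq:integral-homology}, which gives \eqref{eq:critical-exact-sequence}. Naturality holds because both maps are induced by the identity inclusion and projection.

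Finiteness of the leftmost quotient is the only substantive step. The plan is to compare ranks over $\Q$. Over $\R$, $\ker(B_{r+1}B_{r+1}^{\mathsf T})=\ker B_{r+1}^{\mathsf T}$, because $x^{\mathsf T}B_{r+1}B_{r+1}^{\mathsf T}x=\|B_{r+1}^{\mathsf T}x\|^2$. Hence
\[
\rank(B_{r+1}B_{r+1}^{\mathsf T})=\rank B_{r+1}^{\mathsf T}=\rank B_{r+1},
\]
and therefore $\im(B_{r+1}B_{r+1}^{\mathsf T})\otimes\Q=\im B_{r+1}\otimes\Q$ inside $C_r(X;\Q)$. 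Now $A\subseteq B$ are free abelian groups of the same rank, so $B/A$ is finitely generated and $(B/A)\otimes\Q=0$, using exactness of $-\otimes\Q$. A finitely generated abelian group of rank $0$ is finite.

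Since rank is additive on short exact sequences (again by tensoring with $\Q$), $\rank K_r(X)=0+\rank\wtH_r(X;\Z)$. This rank equals $\dim_\Q\wtH_r(X;\Q)=\dim_\R\wtH_r(X;\R)=\wtbeta_r(X)$, using that homology of a free chain complex commutes with the flat base changes $\Q$ and $\R$. A finitely generated abelian group is finite if and only if its rank is zero, which gives \eqref{eq:critical-finiteness}. When $\wtbeta_r(X)=0$, the middle and right terms are finite, and orders are multiplicative in short exact sequences of finite groups, which gives \eqref{eq:critical-order-factorization}.

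I expect the rank equality $\rank(BB^{\mathsf T})=\rank B$ to be the only step requiring care. The point is that it must be argued over $\R$, where positivity is available. It then transfers to $\Q$ because rank of a rational matrix does not depend on the field extension.
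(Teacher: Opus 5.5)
Your proof is correct and follows essentially the same route as the paper. Both take successive quotients of the chain $\im(B_{r+1}B_{r+1}^{\mathsf T})\subseteq\im B_{r+1}\subseteq\ker B_r$, get finite index from equality of the rational images, and finish with additivity of rank and multiplicativity of orders; you additionally supply the positivity argument $x^{\mathsf T}B_{r+1}B_{r+1}^{\mathsf T}x=\|B_{r+1}^{\mathsf T}x\|^2$ for the rank equality, which the paper states without proof.
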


\begin{proof}
The inclusions
\begin{equation}
  \im(B_{r+1}B_{r+1}^{\mathsf T})
  \subseteq
  \im B_{r+1}
  \subseteq
  \ker B_r
  \label{eq:image-chain-inclusions}
\end{equation}
follow from $B_rB_{r+1}=0$.  Taking successive quotients gives
\eqref{eq:critical-exact-sequence}.  Over $\Q$, the matrices $B_{r+1}$ and
$B_{r+1}B_{r+1}^{\mathsf T}$ have the same rank and image, so their integral image
lattices have finite index.  Taking ranks in
\eqref{eq:critical-exact-sequence} gives \eqref{eq:critical-finiteness};
multiplicativity of orders gives \eqref{eq:critical-order-factorization} in the
finite case.
\end{proof}

The exact sequence explains why $|K_r(X)|$ cannot generally be identified with
the torsion order of reduced integral homology.  Even when reduced homology is
finite, the critical group contains an additional lattice-index contribution.
This additional factor measures the difference between all integral boundaries
and those produced by the upper Laplacian.

The six-vertex triangulation $T$ of $\R P^2$, the real projective plane, makes
the distinction concrete:
$\wtH_1(T;\Z)\cong\Z/2\Z$, whereas the corresponding torsion-weighted
two-tree enumerator equals $|K_1(T)|=4$.  \Cref{ex:rp2-tree} verifies these
values explicitly.

\begin{figure}[H]
  \centering
  \begin{tikzpicture}[
    x=1.08cm,
    y=1.25cm,
    >=Stealth,
    interior edge/.style={draw=black!58,line width=0.55pt},
    boundary edge/.style={draw=teal!65!black,line width=1.05pt},
    one arrow/.style={postaction={decorate},decoration={markings,
      mark=at position 0.52 with {\arrow{>}}}},
    two arrows/.style={postaction={decorate},decoration={markings,
      mark=at position 0.44 with {\arrow{>}},
      mark=at position 0.60 with {\arrow{>}}}},
    three arrows/.style={postaction={decorate},decoration={markings,
      mark=at position 0.38 with {\arrow{>}},
      mark=at position 0.52 with {\arrow{>}},
      mark=at position 0.66 with {\arrow{>}}}},
    vertex/.style={circle,fill=blue!65!black,inner sep=1.7pt},
    face label/.style={font=\scriptsize\sffamily,text=black!58}
  ]
    \coordinate (top) at (0,3.15);
    \coordinate (ur) at (2.35,2.05);
    \coordinate (lr) at (2.35,0);
    \coordinate (bottom) at (0,-1.12);
    \coordinate (ll) at (-2.35,0);
    \coordinate (ul) at (-2.35,2.05);
    \coordinate (v3) at (-0.92,1.32);
    \coordinate (v2) at (0.92,1.32);
    \coordinate (v6) at (0,0.22);

    \fill[blue!7] (v3)--(v2)--(v6)--cycle;
    \draw[interior edge]
      (top)--(v3) (top)--(v2)
      (ul)--(v3) (ur)--(v2)
      (ll)--(v3) (ll)--(v6)
      (lr)--(v2) (lr)--(v6)
      (bottom)--(v6)
      (v3)--(v2)--(v6)--cycle;

    \draw[boundary edge,one arrow] (top)--(ur);
    \draw[boundary edge,two arrows] (ur)--(lr);
    \draw[boundary edge,three arrows] (lr)--(bottom);
    \draw[boundary edge,one arrow] (bottom)--(ll);
    \draw[boundary edge,two arrows] (ll)--(ul);
    \draw[boundary edge,three arrows] (ul)--(top);

    \foreach \p in {top,ur,lr,bottom,ll,ul,v3,v2,v6}
      \node[vertex] at (\p) {};
    \node[above=2pt] at (top) {$1$};
    \node[right=3pt] at (ur) {$4$};
    \node[right=3pt] at (lr) {$5$};
    \node[below=2pt] at (bottom) {$1$};
    \node[left=3pt] at (ll) {$4$};
    \node[left=3pt] at (ul) {$5$};
    \node[left=3pt] at (v3) {$3$};
    \node[right=3pt] at (v2) {$2$};
    \node[below=3pt] at (v6) {$6$};

    \node[face label] at (0,2.02) {$123$};
    \node[face label] at (-1.08,2.20) {$135$};
    \node[face label] at (1.08,2.20) {$124$};
    \node[face label] at (-1.84,1.15) {$345$};
    \node[face label] at (1.84,1.15) {$245$};
    \node[face label] at (-1.08,0.58) {$346$};
    \node[face label] at (1.08,0.58) {$256$};
    \node[face label,text=black!72] at (0,0.94) {$236$};
    \node[face label] at (-0.78,-0.30) {$146$};
    \node[face label] at (0.78,-0.30) {$156$};
  \end{tikzpicture}
  \caption{The six-vertex triangulation of $\mathbb{R}P^2$ used in
  \cref{ex:rp2-tree}.  Opposite boundary positions bearing the same vertex
  label are identified, and boundary edges
  with the same number of arrowheads are glued in the indicated orientation.
  The ten face labels are exactly the facets listed in the example; the central
  facet $236$ is shaded.}
  \label{fig:rp2-triangulation}
\end{figure}

\section{From boundary log pseudodeterminants to torsion-sensitive invariants}
\label{sec:matrix-tree}

\subsection{Simplicial spanning trees and matrix-tree identities}

The definitions and matrix-tree theorem in this subsection are due to
Duval, Klivans, and Martin~\cite{duval2008simplicialmatrixtreetheorems,dmtcs:2909}; we restate them to
fix the conventions needed later.

\begin{definition}[APC complex]
\label{def:apc}
A pure $d$-dimensional complex $X$ is \emph{acyclic in positive codimension}
(APC) if
\begin{equation}
  \wtH_q(X;\Q)=0
  \qquad\text{for all }q<d.
  \label{eq:apc}
\end{equation}
\end{definition}

APC removes free homology below the top dimension, but finite integral torsion may
remain.

\begin{definition}[Simplicial spanning tree]
\label{def:simplicial-spanning-tree}
Let $X$ be a pure $d$-dimensional simplicial complex.  A subcomplex
$T\subseteq X$ is a $d$-dimensional simplicial spanning tree if
$T^{(d-1)}=X^{(d-1)}$ and
\begin{align}
  \wtH_d(T;\Z)&=0,
  \label{eq:tree-top-acyclic}\\
  \wtH_{d-1}(T;\Q)&=0,
  \label{eq:tree-penultimate-rational}\\
  f_d(T)&=f_d(X)-\wtbeta_d(X)+\wtbeta_{d-1}(X).
  \label{eq:tree-face-count}
\end{align}
An $r$-dimensional spanning tree of $X$ is a spanning tree of $X^{(r)}$.
\end{definition}

The three displayed conditions are not independent: any two imply the third.
For $d=1$ this is the usual graph spanning tree.  In higher dimensions, the
tree retains the entire $(d-1)$-skeleton and has the minimal
top-dimensional face count compatible with eliminating rational
$(d-1)$-homology without creating a top-dimensional cycle.
In particular,
\begin{equation}
  \wtH_{d-1}(T;\Q)=0
  \quad\Longleftrightarrow\quad
  \wtH_{d-1}(T;\Z)\ \text{is finite}.
  \label{eq:rational-not-integral}
\end{equation}
It doesn't need to vanish over integral homology.  Indeed, higher-dimensional trees with nontrivial
finite homology are precisely what create torsion weights.

{
The following quantity is the torsion-weighted tree enumerator that appears in the
higher-dimensional matrix-tree theorem.
Let $\mathcal T_r(X)$ be the set of $r$-dimensional spanning trees and define
\begin{equation}
  \tau_r(X)
  :=
  \sum_{T\in\mathcal T_r(X)}
  \left|\wtH_{r-1}(T;\Z)\right|^2.
  \label{eq:tree-enumerator}
\end{equation}
}
$\tau_r(X)$ is a torsion-weighted tree count; it reduces to the number of
$r$-dimensional spanning trees when every displayed homology group has
order one.
For an APC complex, each homology group below the top degree is finite.  Here,
\begin{equation}
  t_q(X):=|\wtH_q(X;\Z)|
  \label{eq:tq-definition}
\end{equation}
is the order of the $q$th reduced integral homology group, with the convention
that the trivial group has order one.
For the augmented base cases we set
\begin{equation}
  \tau_{-1}(X):=1,
  \qquad
  t_{-1}(X):=t_{-2}(X):=1.
  \label{eq:augmented-tree-conventions}
\end{equation}
With these conventions, $\tau_0(X)=\pi_0(X)=f_0$.

Every skeleton $X^{(r)}$ of an APC complex is APC, so the simplicial matrix-tree
theorem in~\cite{duval2008simplicialmatrixtreetheorems} can be applied degree by
degree, providing us with an all-degree statement.

\begin{theorem}[Duval--Klivans--Martin matrix-tree theorem]
\label{thm:simplicial-matrix-tree}
Let $X$ be a finite APC simplicial complex of dimension $d$.  For
$0\leq r\leq d$,
\begin{equation}
  \pi_r(X)
  =
  \frac{\tau_r(X)\tau_{r-1}(X)}
       {t_{r-2}(X)^2},
  \label{eq:matrix-tree-recurrence}
\end{equation}
with the conventions in \eqref{eq:augmented-tree-conventions}.
For $1\leq r\leq d$, if $R$ is an $(r-1)$-dimensional spanning tree and
$\widetilde L_{r-1}^{\up}(X;R)$ is obtained from $L_{r-1}^{\up}(X)$ by deleting
the rows and columns indexed by the $(r-1)$-faces of $R$, then
\begin{equation}
  \tau_r(X)
  =
  \frac{t_{r-2}(X)^2}{t_{r-2}(R)^2}
  \det\widetilde L_{r-1}^{\up}(X;R).
  \label{eq:reduced-matrix-tree}
\end{equation}
\end{theorem}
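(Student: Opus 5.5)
The plan is to reduce both displayed identities to the original simplicial matrix-tree theorem of Duval, Klivans, and Martin, applied to the skeleta of $X$, and to check the augmented conventions at the lowest degrees by hand.

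\emph{Step 1 (skeleta inherit APC).} Fix $1\leq r\leq d$ and set $Y=X^{(r)}$. Since $X$ is pure, $Y$ is pure $r$-dimensional. The chain groups of $Y$ and $X$ agree in degrees $\leq r$, so $\wtH_q(Y)=\wtH_q(X)$ for $q<r$. Hence $Y$ is APC, and $t_q(Y)=t_q(X)$ for $q\leq r-1$. Also, by \cref{def:simplicial-spanning-tree}, $\mathcal T_r(X)=\mathcal T_r(Y)$, and the boundary maps $B_r$ and $L_{r-1}^{\up}$ are the same for $X$ and $Y$.

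\emph{Step 2 (reduced formula).} The reduced matrix-tree theorem of~\cite{duval2008simplicialmatrixtreetheorems} applies to the APC complex $Y$ with an $(r-1)$-dimensional spanning tree $R$. It states
\[
\tau_r(Y)=\frac{|\wtH_{r-2}(Y;\Z)|^2}{|\wtH_{r-2}(R;\Z)|^2}\det\widetilde L^{\up}_{r-1}(Y;R).
\]
Substituting Step 1 gives \eqref{eq:reduced-matrix-tree}. For $r=1$, $R$ is a single vertex. The correction factor is $t_{-1}^2/t_{-1}(R)^2=1$, so the statement is Kirchhoff's theorem, consistent with \eqref{eq:augmented-tree-conventions}.

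\emph{Step 3 (pseudodeterminant form).} I would invoke the pseudodeterminant version of the theorem in~\cite{duval2008simplicialmatrixtreetheorems}, also recorded in~\cite{dmtcs:2909}. It expresses $\pdet(L^{\up}_{r-1}(Y))$ as
\[
\frac{\tau_r(Y)\,\tau_{r-1}(Y)}{t_{r-2}(Y)^2},
\]
where $\tau_{r-1}(Y)=\tau_{r-1}(X)$ because $\mathcal T_{r-1}$ depends only on the $(r-1)$-skeleton.

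If that form is unavailable, I would derive it from Step 2 with a Cauchy--Binet argument, as follows.
\begin{itemize}
\item Write $\pdet(B_rB_r^{\mathsf T})$ as the sum of squared maximal minors of $B_r$ over row sets $S$ of size $\rank B_r$ and column sets $U$.
\item Use the Duval--Klivans--Martin lemma on such minors: a square submatrix $B_{S,U}$ is nonsingular iff $U$ is the facet set of an $r$-tree and the complement of $S$ is the facet set of an $(r-1)$-tree $R$.
\item In that case its squared determinant equals
\[
\frac{|\wtH_{r-1}(T_U)|^2\,|\wtH_{r-2}(R)|^2}{|\wtH_{r-2}(Y)|^2}.
\]
\item Summing over $U$ and $S$ factorizes into $\tau_r\tau_{r-1}/t_{r-2}^2$.
\end{itemize}
Proving this minor lemma is the main obstacle. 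It rests on computing the torsion of the relative complex via the long exact sequence of the pair. I would cite it rather than reprove it.

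\emph{Step 4 (degree zero).} For $r=0$, \eqref{eq:pi-zero} gives $\pi_0=f_0$. The right-hand side is $\tau_0\tau_{-1}/t_{-2}^2=f_0\cdot1/1$, since $0$-trees are single vertices with $|\wtH_{-1}|=1$, so $\tau_0=f_0$. This completes all degrees $0\leq r\leq d$.
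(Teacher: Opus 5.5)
Your proposal is correct and follows the paper's route. The paper gives no separate proof: it presents the result as a restatement of the Duval--Klivans--Martin theorem, justified by the remark that every skeleton $X^{(r)}$ of an APC complex is again APC, so the cited theorem applies degree by degree, and it fixes the degree-zero case through the augmented conventions giving $\tau_0=\pi_0=f_0$, just as your Steps~1,~2 and~4 do. Your Cauchy--Binet fallback in Step~3 (with the relative-homology minor formula) is the Duval--Klivans--Martin proof of the pseudodeterminant identity itself, so it adds detail rather than a different argument.
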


We call such an $(r-1)$-dimensional spanning tree $R$ a \emph{root}.  Deleting
the rows and columns indexed by its $(r-1)$-faces is the higher-dimensional
analogue of deleting one row and column from a graph Laplacian to obtain a
nonsingular cofactor.
The factors $t_{r-2}(X)$ and $t_{r-2}(R)$ are part of the cited theorem.
They cannot in general be dropped when lower integral homology contains torsion.

\subsection{Alternating log-pseudodeterminant identity}
\label{subsec:root-free}

The following identity expresses the torsion-weighted tree enumerator in terms of
the log pseudodeterminants of the boundary matrices and lower-homology corrections.
Taking logarithms and iterating the Duval--Klivans--Martin matrix-tree
recurrence~\cite{duval2008simplicialmatrixtreetheorems} degree by degree
reveals this alternating formula, which is used by our quantum algorithm.

\begin{proposition}[Alternating log-pseudodeterminant formula]
\label{prop:alternating-pdet}
Let $X$ be APC and let $0\leq k\leq\dim X$.  Then
\begin{equation}
  \log\tau_k(X)
  =
  \sum_{r=0}^{k}(-1)^{k-r}\ell_r(X)
  +
  2\sum_{q=0}^{k-2}
     (-1)^{k-q-2}\log t_q(X).
  \label{eq:alternating-pdet-corrected}
\end{equation}
In particular, if $\wtH_q(X;\Z)=0$ for $0\leq q\leq k-2$, then
\begin{equation}
  \log\tau_k(X)
  =
  \sum_{r=0}^{k}(-1)^{k-r}\ell_r(X).
  \label{eq:alternating-pdet-uncorrected}
\end{equation}
\end{proposition}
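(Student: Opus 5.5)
We derive the identity by induction on $k$, taking logarithms in the recurrence \eqref{eq:matrix-tree-recurrence} of \cref{thm:simplicial-matrix-tree}.

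First, the inputs are well defined. For an APC complex and $0\le r\le d$, each $t_{r-2}(X)$ is finite by \cref{def:apc}. The conventions \eqref{eq:augmented-tree-conventions} cover $r=0,1$. Each $\pi_r(X)$ is a positive real, since $\pdet(0)=1$. Each $\tau_r(X)$ is positive, because by \eqref{eq:matrix-tree-recurrence} it equals a positive number divided by $\tau_{r-1}(X)$, and $\tau_{-1}=1$. So all logarithms make sense.

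Taking logarithms in \eqref{eq:matrix-tree-recurrence} gives, for $0\le r\le k\le d$,
\begin{equation*}
  \log\tau_r(X)=\ell_r(X)-\log\tau_{r-1}(X)+2\log t_{r-2}(X).
\end{equation*}
The base case $k=0$ reads $\log\tau_0=\ell_0-\log\tau_{-1}+2\log t_{-2}=\ell_0=\log f_0$. This matches \eqref{eq:alternating-pdet-corrected}, whose correction sum is empty.

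For the inductive step, assume the formula holds for $k-1$ and substitute it into the display with $r=k$:
\begin{equation*}
  \log\tau_k=\ell_k-\sum_{r=0}^{k-1}(-1)^{k-1-r}\ell_r-2\sum_{q=0}^{k-3}(-1)^{k-q-3}\log t_q+2\log t_{k-2}.
\end{equation*}
Multiplying by $-1$ flips each sign exponent by one. The $\ell$ terms therefore combine to $\sum_{r=0}^k(-1)^{k-r}\ell_r$. The $t$ terms become $2\sum_{q=0}^{k-3}(-1)^{k-q-2}\log t_q$, and the new term $2\log t_{k-2}$ is exactly the $q=k-2$ summand, since $(-1)^0=1$. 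For $k=1$ the $t$ term is $2\log t_{-1}=0$, consistent with the empty sum.

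Equivalently, one can telescope directly: form the sum over $r$ of $(-1)^{k-r}$ times the logged recurrence, and shift the index $q=r-2$ while discarding the $t_{-1}$ and $t_{-2}$ terms.

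The specialization \eqref{eq:alternating-pdet-uncorrected} is immediate: if $\wtH_q(X;\Z)=0$ for $0\le q\le k-2$, then each $t_q=1$ and its logarithm vanishes.

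The only real obstacle is bookkeeping. We must check that \cref{thm:simplicial-matrix-tree} applies at every degree $r\le k$; this holds because skeleta of APC complexes are APC, as noted before the theorem. We must also handle the augmented conventions at $r=0,1$ carefully, so that the index range $q=0,\dots,k-2$ and the signs come out exactly as stated.
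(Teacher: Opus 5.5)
Your proof is correct and is the paper's argument: take logarithms in the recurrence \eqref{eq:matrix-tree-recurrence} to get $\log\tau_k=\ell_k-\log\tau_{k-1}+2\log t_{k-2}$, iterate down to $\tau_0=\pi_0$, and obtain the uncorrected case from $t_q=1$. Your explicit induction spells out the sign and index bookkeeping, including the $t_{-1},t_{-2}$ conventions and the positivity needed for the logarithms, which the paper leaves implicit.
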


\begin{proof}
Taking logarithms in \eqref{eq:matrix-tree-recurrence} gives
\begin{equation}
  \log\tau_k
  =
  \ell_k-\log\tau_{k-1}+2\log t_{k-2}.
  \label{eq:tau-recurrence-log}
\end{equation}
Iterating to $\tau_0=\pi_0$ yields
\eqref{eq:alternating-pdet-corrected}.  When the lower homology groups are
trivial, $t_q(X)=1$ and hence $\log t_q(X)=0$ in every relevant degree. The
correction terms therefore vanish, recovering the correction-free
specialization of the Duval--Klivans--Martin recurrence in
\eqref{eq:alternating-pdet-uncorrected}.
\end{proof}

Under the hypotheses of \cref{cor:root-free-critical}, Corollary~4.2 of
Duval, Klivans, and Martin~\cite{dmtcs:2909} identifies the critical-group order
with the next-dimensional tree enumerator:
\begin{equation}
  |K_i(X)|
  =
  \tau_{i+1}(X)
  =
  \sum_{T\in\mathcal T_{i+1}(X)}
  \left|\wtH_i(T;\Z)\right|^2.
  \label{eq:critical-equals-tree}
\end{equation}

\begin{corollary}[Corrected critical-group order]
\label{cor:root-free-critical}
Let $X$ be a finite pure APC complex of dimension $d$, and let
$0\leq i<d$.  Assume $\wtH_{i-1}(X;\Z)=0$ and that $X$ has an
$i$-dimensional spanning tree $R$ with
$\wtH_{i-1}(R;\Z)=0$.  Then $K_i(X)$ is finite and
\begin{equation}
  \log|K_i(X)|
  =
  \sum_{r=0}^{i+1}(-1)^{i+1-r}\ell_r(X)
  +
  2\sum_{q=0}^{i-1}
     (-1)^{i-1-q}\log t_q(X).
  \label{eq:critical-pdet-corrected}
\end{equation}
If, in addition, $\wtH_q(X;\Z)=0$ for $0\leq q\leq i-1$, then
{
\begin{equation}
  \boxed{
  \log|K_i(X)|
  =
  \sum_{r=0}^{i+1}
     (-1)^{i+1-r}
     \log\pdet(B_rB_r^{\mathsf T}).
  }
  \label{eq:critical-pdet-boxed}
\end{equation}
}
\end{corollary}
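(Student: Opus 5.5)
The plan is to combine the identification \eqref{eq:critical-equals-tree} of Duval, Klivans, and Martin with \cref{prop:alternating-pdet} at $k=i+1$. The hypotheses of the corollary are exactly those under which \eqref{eq:critical-equals-tree} holds: $X$ pure APC of dimension $d$, $0\le i<d$, $\wtH_{i-1}(X;\Z)=0$, and an $i$-dimensional spanning tree $R$ with $\wtH_{i-1}(R;\Z)=0$. So I first invoke it to get $|K_i(X)|=\tau_{i+1}(X)$.

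Finiteness of $K_i(X)$ follows from \cref{prop:critical-exact-sequence}. The APC condition gives $\wtH_i(X;\Q)=0$ for $i<d$, so $\wtbeta_i(X)=0$, and \eqref{eq:critical-finiteness} says $K_i(X)$ is finite. As a consistency check, $\tau_{i+1}(X)$ is positive and finite: APC complexes have spanning trees in every dimension, and each tree weight is the square of a finite group order by \eqref{eq:rational-not-integral}. Hence $\log|K_i(X)|$ is well defined.

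Next I apply \cref{prop:alternating-pdet} with $k=i+1$. This is allowed because $0\le i+1\le d=\dim X$ and $X$ is APC. It gives
\[
\log\tau_{i+1}(X)=\sum_{r=0}^{i+1}(-1)^{i+1-r}\ell_r(X)+2\sum_{q=0}^{i-1}(-1)^{i+1-q-2}\log t_q(X).
\]
Since $(-1)^{i-1-q}=(-1)^{i+1-q-2}$, this is exactly \eqref{eq:critical-pdet-corrected} after substituting $|K_i(X)|=\tau_{i+1}(X)$.

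For the boxed formula, the extra hypothesis $\wtH_q(X;\Z)=0$ for $0\le q\le i-1$ gives $t_q(X)=1$ for every $q$ in the correction sum. The correction therefore vanishes, which matches \eqref{eq:alternating-pdet-uncorrected}. Recalling $\ell_r=\log\pdet(B_rB_r^{\mathsf T})$ from \eqref{eq:pi-and-ell} gives \eqref{eq:critical-pdet-boxed}.

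The main obstacle is checking that the cited Corollary~4.2 of~\cite{dmtcs:2909} applies under exactly these hypotheses and with the same conventions: augmented chains, reduced homology, and critical group defined as in \eqref{eq:critical-definition}. In particular, its root condition $\wtH_{i-1}(R;\Z)=0$ is what removes the ratio $t_{i-1}(X)^2/t_{i-1}(R)^2$ in \eqref{eq:reduced-matrix-tree}. To see why, I would use the reduced-Laplacian route: \eqref{eq:reduced-matrix-tree} with $r=i+1$ and root $R$ gives $\tau_{i+1}=\det\widetilde L_i^{\up}(X;R)$ once both torsion factors are trivial. The cokernel of this reduced Laplacian is then identified with $K_i(X)$, and this identification is where integrality of $\wtH_{i-1}(R)$ is needed. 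The edge cases $i=0$, where the degree $-1$ convention and $t_{-1}=1$ apply and the correction sum is empty, should be checked against \eqref{eq:augmented-tree-conventions}. For $i=0$ one expects the graph statement $|K_0|=\pdet(L_G)/f_0$, which is consistent with $\ell_1-\ell_0$.
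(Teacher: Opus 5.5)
Your proposal is correct and follows the paper's proof: you substitute $k=i+1$ into \cref{prop:alternating-pdet}, invoke the Duval--Klivans--Martin identity \eqref{eq:critical-equals-tree}, and check the sign identity $(-1)^{i+1-q-2}=(-1)^{i-1-q}$. Your explicit finiteness argument, which applies \cref{prop:critical-exact-sequence} with $\wtbeta_i(X)=0$ from the APC condition, is a harmless addition that the paper leaves implicit in the cited identity.
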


\begin{proof}
Substitute $k=i+1$ in \cref{prop:alternating-pdet} and use
\eqref{eq:critical-equals-tree}.
\end{proof}

{The alternating identity underlies the main quantum algorithm.
It uses log pseudodeterminants for
$B_0,\ldots,B_{i+1}$ and does not require selecting a higher-dimensional spanning tree.}
Equivalently, the upper-Laplacian degrees are $-1,\ldots,i$.  Even when the homological
conditions needed for \eqref{eq:critical-pdet-boxed} fail, the vector
$(\ell_0,\ldots,\ell_{i+1})$ remains a well-defined spectral output, but its
critical-group interpretation may fail.

\subsection{Homology torsion from certified higher-dimensional trees}
\label{subsec:certified-tree}

For a certified higher-dimensional tree, the matrix-tree determinant recovers
the order of integral homology torsion.  The following criterion identifies
this setting.

\begin{proposition}[Minimal rational filling criterion]
\label{prop:minimal-filling}
Let $k\geq1$, let $X$ be a $k$-dimensional APC complex, and let
$T\subseteq X$ contain $X^{(k-1)}$.  The following linear-algebraic conditions
characterize a $k$-dimensional spanning tree:
\begin{enumerate}[label=(\roman*),leftmargin=2em]
  \item the columns of $B_k(X)$ indexed by $T_k$ are linearly independent over
  $\Q$;
  \item those columns span the $(k-1)$-cycle space
  $\ker(B_{k-1}(X)\otimes\Q)$.
\end{enumerate}
Equivalently, $T$ is an inclusion-minimal choice of $k$-simplices that makes
$\wtH_{k-1}(T;\Q)$ vanish.  Its integral group
$\wtH_{k-1}(T;\Z)$ is finite.  Since $X$ is APC, every rational
$(k-1)$-cycle is a boundary.  The spanning tree $T$ selects a minimal set of
$k$-simplices whose boundaries generate all such cycles.
\end{proposition}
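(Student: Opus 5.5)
We reduce \cref{def:simplicial-spanning-tree} to rank conditions on $B_k(X)$ over $\Q$. Since $T\supseteq X^{(k-1)}$ and $T$ is a subcomplex of the $k$-dimensional complex $X$, we have $B_j(T)=B_j(X)$ for $j\le k-1$. The matrix $B_k(T)$ is the column submatrix of $B_k(X)$ indexed by $T_k$. Working over $\Q$, where homology is $\ker/\im$ of these matrices, the defining conditions translate as follows:
\begin{itemize}
\item $\wtH_k(T;\Z)=\ker B_k(T)$, since $B_{k+1}(T)=0$. This group is free, so it vanishes if and only if $\ker(B_k(T)\otimes\Q)=0$, which is exactly condition (i).
\item $\wtH_{k-1}(T;\Q)=\ker(B_{k-1}(X)\otimes\Q)/\operatorname{span}_\Q(\text{columns indexed by }T_k)$, so it vanishes if and only if condition (ii) holds.
\end{itemize}
Thus (i) and (ii) are equivalent to \eqref{eq:tree-top-acyclic} together with \eqref{eq:tree-penultimate-rational}.

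Next we verify the face count \eqref{eq:tree-face-count}. Under (i) and (ii), the columns indexed by $T_k$ form a basis of $Z:=\ker(B_{k-1}\otimes\Q)$, so $f_k(T)=\dim Z$. Because $X$ is APC, $\wtH_{k-1}(X;\Q)=0$, hence $\wtbeta_{k-1}(X)=0$ and $Z=\im(B_k\otimes\Q)$. Rank--nullity then gives
\[
\dim Z=\rank B_k=f_k(X)-\dim\ker B_k=f_k(X)-\wtbeta_k(X).
\]
This is exactly \eqref{eq:tree-face-count}. The converse direction is immediate from the translation above: a spanning tree satisfies \eqref{eq:tree-top-acyclic} and \eqref{eq:tree-penultimate-rational}, hence (i) and (ii). So (i) and (ii) characterize $k$-dimensional spanning trees containing $X^{(k-1)}$.

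For the ``equivalently'' clause, we argue as for bases of a vector space. A subset $S\subseteq X_k$ makes $\wtH_{k-1}(X^{(k-1)}\cup S;\Q)$ vanish exactly when its columns span $Z$. An inclusion-minimal spanning set of columns is linearly independent, since a dependent column could be removed without shrinking the span. Conversely, an independent spanning set is minimal, since removing any column drops the span. Hence minimal fillings are exactly the sets satisfying (i) and (ii).

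Finiteness of $\wtH_{k-1}(T;\Z)$ follows from \eqref{eq:rational-not-integral}: the group is finitely generated with rank $\dim\wtH_{k-1}(T;\Q)=0$. The remaining sentences restate the APC property $Z=\im B_k\otimes\Q$ together with the minimality just shown.

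The main care point is the $k$-th degree identification. We must note that $\wtH_k(T;\Z)=\ker B_k(T)$ is a subgroup of a free group and hence torsion-free, so its rational vanishing implies integral vanishing. We must also keep reduced and augmented conventions consistent when $k=1$, where $B_0$ is the augmentation row. Neither step is deep.
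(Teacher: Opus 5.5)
Your proposal is correct and follows essentially the same route as the paper: read (i) as $\wtH_k(T;\Q)=0$ and upgrade it to integral vanishing because $\ker B_k(T)\subseteq C_k(T;\Z)$ is free, read (ii) as $\wtH_{k-1}(T;\Q)=0$ via the shared $(k-1)$-skeleton, and get the face count from the basis property; your rank--nullity step and your basis argument for the minimality clause spell out what the paper leaves implicit. The one extra step in the paper's proof is a check that $T$ is pure: each $(k-1)$-face $F$ lies in some $\sigma\in X_k$, and $\partial\sigma$, whose coefficient at $F$ is nonzero, lies in the span of the selected columns, so some selected $k$-face contains $F$. The definition of a spanning tree does not require this, but it is used later when $T$ itself is treated as an APC complex in \cref{thm:certified-tree-torsion}.
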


\begin{proof}
Condition (i) is $\wtH_k(T;\Q)=0$.  Because $T$ contains the full
$(k-1)$-skeleton, condition (ii) is
$\wtH_{k-1}(T;\Q)=0$.  Together, they say that the selected boundary columns
form a basis of the rational cycle space and therefore have the face count in
\eqref{eq:tree-face-count}.  They also make $T$ pure.  Indeed, let $F$ be a
$(k-1)$-face.  Since $X$ is pure, some $k$-face $\sigma$ of $X$ contains $F$.
The cycle $\partial\sigma$ is a rational linear combination of the selected
boundary columns.  Its coefficient at $F$ is nonzero, so at least one selected
$k$-face contains $F$.  Every lower-dimensional face is contained in such a ridge
and hence in a selected $k$-face.  Top-dimensional integral homology is a subgroup
of the free group $C_k(T;\Z)$ and is therefore torsion-free; its rational
vanishing implies its integral vanishing.  \Cref{def:simplicial-spanning-tree}
applies.
The equivalence between rational vanishing and finite integral homology follows
from the structure theorem for finitely generated abelian groups.
\end{proof}

This finite group need not be trivial.  We call an inclusion $T\subseteq X$
satisfying the hypotheses and the two conditions of
\cref{prop:minimal-filling} a \emph{certified $k$-dimensional tree}.

For a certified tree $T$, write
\begin{equation}
  h(T):=\left|\wtH_{k-1}(T;\Z)\right|.
  \label{eq:hT}
\end{equation}

\begin{theorem}[Homology-torsion order from a certified higher-dimensional tree]
\label{thm:certified-tree-torsion}
Let $k\geq1$, let $X$ be a finite $k$-dimensional APC complex, and let
$T\subseteq X$ be a certified $k$-dimensional tree.  Then the only
$k$-dimensional spanning tree of $T$ is $T$ itself, and
\begin{equation}
  \tau_k(T)=h(T)^2.
  \label{eq:single-tree-square}
\end{equation}
For any $(k-1)$-dimensional spanning tree $R$ used to reduce
$L_{k-1}^{\up}(T)$,
\begin{equation}
  h(T)
  =
  \frac{t_{k-2}(T)}{t_{k-2}(R)}
  \sqrt{\det\widetilde L_{k-1}^{\up}(T;R)}.
  \label{eq:certified-tree-reduced}
\end{equation}
\end{theorem}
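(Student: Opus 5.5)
The plan has three steps: first show that $T$ has a unique top-dimensional spanning tree, then read off $\tau_k(T)$ from the definition, and finally apply the reduced matrix-tree formula to $T$ and take a square root.

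\textbf{Uniqueness of the spanning tree.} Let $S$ be any $k$-dimensional spanning tree of $T$. Then $S^{(k-1)}=T^{(k-1)}=X^{(k-1)}$ and $S_k\subseteq T_k$. By \eqref{eq:tree-face-count} applied to $T$,
\[
f_k(S)=f_k(T)-\wtbeta_k(T)+\wtbeta_{k-1}(T).
\]
Condition (i) of \cref{prop:minimal-filling} gives $\wtbeta_k(T)=0$, and condition (ii) gives $\wtbeta_{k-1}(T)=0$. Hence $f_k(S)=f_k(T)$. Since $S_k\subseteq T_k$, this forces $S=T$. Conversely, $T$ is a spanning tree of itself: $T$ is pure (shown in the proof of \cref{prop:minimal-filling}), $\wtH_k(T;\Z)=0$, $\wtH_{k-1}(T;\Q)=0$, and the face count holds trivially. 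So $\mathcal T_k(T)=\{T\}$.

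\textbf{The enumerator.} By the definition \eqref{eq:tree-enumerator}, the sum has the single term $T$, so $\tau_k(T)=|\wtH_{k-1}(T;\Z)|^2=h(T)^2$. This is \eqref{eq:single-tree-square}.

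\textbf{The reduced formula.} I would apply \eqref{eq:reduced-matrix-tree} of \cref{thm:simplicial-matrix-tree} to $T$ itself, with $r=k$ and root $R$. This gives
\[
h(T)^2=\frac{t_{k-2}(T)^2}{t_{k-2}(R)^2}\det\widetilde L_{k-1}^{\up}(T;R).
\]
All quantities are positive, so taking the positive square root yields \eqref{eq:certified-tree-reduced}.

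\textbf{Main obstacle.} The delicate step is checking that $T$ satisfies the hypotheses of \cref{thm:simplicial-matrix-tree}, namely that $T$ is a pure $k$-dimensional APC complex. Purity is already established. For APC I need $\wtH_q(T;\Q)=0$ for all $q<k$:
\begin{itemize}[leftmargin=2em]
  \item For $q=k-1$ this is condition (ii).
  \item For $q\le k-2$, the groups depend only on the $(q+1)$-skeleton. Since $q+1\le k-1$, that skeleton of $T$ equals that of $X$, so $\wtH_q(T)=\wtH_q(X)$, which vanishes rationally because $X$ is APC.
\end{itemize}
The same skeleton argument shows that the $(k-1)$-dimensional spanning trees of $T$ and of $X$ coincide, so roots $R$ exist whenever they exist for $X$. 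It also shows that $t_{k-2}(T)=t_{k-2}(X)$ is finite, so the formula is well defined.
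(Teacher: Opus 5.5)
Your proposal is correct and follows essentially the same route as the paper. You show that $T$ is pure and APC, with rational vanishing in degree $k-1$ coming from certification and lower degrees coming from the shared $(k-1)$-skeleton with $X$; you force uniqueness through the face count using $\wtbeta_k(T)=\wtbeta_{k-1}(T)=0$; you read off $\tau_k(T)=h(T)^2$ from the definition; and you substitute $X=T$ into the reduced matrix-tree formula and take the positive square root, all exactly as the paper does. Your extra remarks, that $T$ is itself a spanning tree of $T$ and that the $(k-1)$-dimensional roots of $T$ and $X$ coincide, only make explicit what the paper leaves implicit.
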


\begin{proof}
By \cref{prop:minimal-filling}, $T$ is a pure $k$-dimensional spanning tree of
$X$.  It is also APC: certification gives rational vanishing in degree $k-1$,
while below that degree $T$ and $X$ have the same homology because they have the
same $(k-1)$-skeleton.
Every $k$-dimensional spanning tree of $T$ must contain $T^{(k-1)}$ and, since
$\wtbeta_k(T)=\wtbeta_{k-1}(T)=0$, must contain $f_k(T)$ top-dimensional
faces.  It is therefore $T$ itself.  The definition
\eqref{eq:tree-enumerator} gives \eqref{eq:single-tree-square}.
Substituting $X=T$ in \eqref{eq:reduced-matrix-tree} and taking the positive
square root yields \eqref{eq:certified-tree-reduced}.
\end{proof}

Here ``certified'' means that the conditions in
\cref{prop:minimal-filling} are supplied as an external promise or witness; the
cost of verifying them is not included in the determinant estimator.  The
reduced-determinant route also requires a $(k-1)$-dimensional root $R$ and the
orders $t_{k-2}(T)$ and $t_{k-2}(R)$, unless both are known to be one.

\paragraph{Trivial lower-homology corrections.}
If both lower correction groups in \eqref{eq:certified-tree-reduced} are
trivial, its prefactor is one and
\begin{equation}
  \boxed{
  h(T)=\sqrt{\det\widetilde L_{k-1}^{\up}(T;R)}.
  }
  \label{eq:certified-tree-simple}
\end{equation}
If $\wtH_q(T;\Z)=0$ for $0\leq q\leq k-2$, then
\eqref{eq:single-tree-square} and \cref{prop:alternating-pdet} yield the
equivalent alternating identity
\begin{equation}
  \boxed{
  \log h(T)
  =
  \frac12\sum_{r=0}^{k}
    (-1)^{k-r}\ell_r(T).
  }
  \label{eq:certified-tree-root-free}
\end{equation}

\begin{remark}
\label{rem:tree-claim-correction}
Certification requires an inclusion-minimal set of $k$-simplices whose
boundaries span the rational $(k-1)$-cycle space.  Requiring integral rather
than rational vanishing would force $h(T)=1$ and eliminate the torsion to be
recovered.
\end{remark}

\begin{example}[Six-vertex projective-plane tree]
\label{ex:rp2-tree}
Writing $123$ for $\{1,2,3\}$, let $T$ be the complex on $[6]$ with facets
\begin{equation}
  123,124,135,146,156,236,245,256,345,346.
  \label{eq:rp2-facets}
\end{equation}
These facets are displayed in \cref{fig:rp2-triangulation}.  Its one-skeleton
is $K_6$, every vertex link is a five-cycle, and
$\chi(T)=6-15+10=1$.  Thus $T$ is the standard six-vertex triangulation of
$\R P^2$; projective-plane two-dimensional trees on six vertices are discussed
in~\cite[Section~3.1]{dmtcs:2909}.  Consequently,
\begin{equation}
  \wtH_2(T;\Z)=0,
  \qquad
\wtH_1(T;\Z)\cong\Z/2\Z.
  \label{eq:rp2-homology}
\end{equation}
Its face numbers are $(f_0,f_1,f_2)=(6,15,10)$; hence $\tau_2(T)=4$.
Choosing as the lower root the graph tree $R$ consisting of the five edges
incident to one vertex gives
\begin{equation}
  \det\widetilde L_1^{\up}(T;R)=4,
  \qquad
  |\wtH_1(T;\Z)|=\sqrt4=2.
  \label{eq:rp2-reduced-det}
\end{equation}
All Betti numbers of $T$ over $\Q$ agree with those of a point, whereas the
alternating nonzero-boundary-spectrum statistic detects its order-two torsion:
\begin{equation}
  \pi_0=6,\qquad
  \pi_1=6^5,\qquad
  \pi_2=4\cdot6^4,
  \label{eq:rp2-pdets}
\end{equation}
so $\tau_2=\pi_2\pi_0/\pi_1=4$ and
$\log|\wtH_1(T;\Z)|=\frac12\log4=\log2$.

The barycentric subdivision $\operatorname{sd}T$ is the clique complex of the
comparability graph of $T$'s face poset. It preserves $T$'s topology, including
the $\Z/2\Z$ homology torsion, but need not preserve $T$'s numerical
pseudodeterminants.
\end{example}

\begin{remark}[Indexing and lower-homology corrections]
\label{rem:indexing}
Equation~\eqref{eq:critical-pdet-boxed} corrects the upper limit and exponent
printed in Corollary~4.4 of~\cite{dmtcs:2909} and retains the lower-homology
factors required under its weaker hypotheses.  The graph case reduces to
$\log|K_0(G)|=\log\pdet(L_G)-\log|V(G)|$, providing an immediate convention
check.  \Cref{app:indexing-correction} gives the detailed comparison.
\end{remark}

\section{Quantum algorithm for log-pseudodeterminant estimation
of upper Hodge Laplacians}
\label{sec:quantum-algorithm}

The identities in \cref{sec:matrix-tree} express the stated finite
group orders in terms of upper Hodge Laplacian log pseudodeterminants
and, when required, homology-group orders in lower degrees.
We give a quantum algorithm that estimates the normalized log
pseudodeterminant from a measurement probability. The algorithm uses
a block encoding of the boundary matrix, quantum singular value
transformation, and amplitude estimation.

\subsection{Overview of the quantum algorithm}
\label{subsec:root-free-quantum-overview}

The matrix-tree identities in \cref{sec:matrix-tree} express the
logarithms of tree enumerators and, under the stated hypotheses,
finite group orders in terms of the quantities $\ell_r(X)$.
We now give a quantum algorithm for estimating these quantities
using coherent access to the corresponding boundary matrices.

For a real boundary matrix $B$ with $D$ columns, write
\begin{equation}
 \ell(B):=\log\pdet(BB^{\mathsf T})
 =\sum_{\sigma_j(B)>0}\log\!\bigl(\sigma_j(B)^2\bigr),
 \label{eq:ellB}
\end{equation}
where the positive singular values are counted with multiplicity.
For $B=B_r$, this agrees with the notation introduced earlier:
$\ell(B_r)=\ell_r(X)=\log\pdet(L_{r-1}^{\up})$.
Adding zero rows or columns leaves this quantity unchanged.
The algorithm estimates $\ell(B)/D$ to prescribed additive error,
where $D$ is the number of input basis states in the chosen
representation of $B$. To estimate $\ell(B)$ to additive error
$\varepsilon$, we estimate $\ell(B)/D$ to additive error
$\varepsilon/D$ and multiply the result by $D$.

The construction uses the bottom-left block of a quantum singular
value transformation (QSVT) circuit~\cite{Gilyen2019QuantumArithmetics}. This block contributes a factor
$\sqrt{1-x^2}$ to the amplitude associated with a normalized singular
value $x$. We therefore approximate
\[
 \sqrt{\frac{\log(1/x^2)}{1-x^2}}
\]
by a polynomial, with a separately chosen value at zero.
The resulting success probability determines $\ell(B)/D$ without
requiring $\rank B$. Amplitude estimation supplies an estimate of
this probability, and classical postprocessing gives the required
log pseudodeterminant.

\subsection{Mapping simplices to qubits}
\label{subsec:simplex-encoding}

Let $X$ be a simplicial complex on the ordered vertex set
$[N]=\{1,\ldots,N\}$. We use the direct mapping of simplices to
qubits~\cite{McArdle2022AQubits}: each vertex is represented by one
qubit. For a subset $\sigma\subseteq[N]$, define
\[
 \ket{\sigma}:=\ket{z_1\cdots z_N},
 \qquad z_j=
 \begin{cases}
  1,&j\in\sigma,\\
  0,&j\notin\sigma.
 \end{cases}
\]
A $k$-simplex is therefore represented by a computational basis state
of Hamming weight $k+1$. The increasing vertex order fixes its
orientation, and the empty face is represented by $\ket{0^N}$.

For the degree-$r$ boundary, with $0\leq r\leq\dim X$, the input
space is the Hamming-weight-$(r+1)$ subspace, of dimension
\begin{equation}
 D_r:=\binom{N}{r+1}.
 \label{eq:candidate-dimension}
\end{equation}
The output space is the Hamming-weight-$r$ subspace.
The subspace spanned by the $r$-simplices of $X$ has dimension $f_r$
and projector
\[
 \Pi_r:=\sum_{\sigma\in X_r}\ket{\sigma}\!\bra{\sigma},
 \qquad
 \Pi_{-1}:=\ket{0^N}\!\bra{0^N}.
\]
Thus the input space contains all $(r+1)$-vertex subsets, whereas
$\Pi_r$ retains only those belonging to $X$.

Write $\mathcal B_r$ for the extension of the simplicial
boundary $B_r$ by zero rows and columns for subsets that are not
simplices of $X$. Its positive singular values agree with those
of $B_r$, and hence
\[
 \ell(\mathcal B_r)=\ell_r(X).
\]
We apply the algorithm to this extended matrix.

\subsection{Input model and oracle access}

Consider a real matrix
\begin{equation}
 B:\mathcal H_R\longrightarrow\mathcal H_L,
 \qquad \dim\mathcal H_R=D\geq1,
 \label{eq:general-B}
\end{equation}
with specified orthonormal input and output bases.
For the direct mapping, $B=\mathcal B_r$ and $D=D_r$. We embed the input and output spaces into the same $s$-qubit register,
where $2^s\geq\max\{\dim\mathcal H_R,\dim\mathcal H_L\}$, using specified
isometries
\[
 V_R:\mathcal H_R\longrightarrow(\mathbb C^2)^{\otimes s},
 \qquad
 V_L:\mathcal H_L\longrightarrow(\mathbb C^2)^{\otimes s}.
\]
The resulting square matrix is $\overline B=V_LBV_R^\dagger$; write
$P_R:=V_RV_R^\dagger$ for the embedded right-space projector.
The $s$ system qubits do not include the block-encoding ancillas.

\begin{assumption}[Oracle access and singular-value gap]
\label{ass:block-trace}
We have controlled access to an exact
$(\alpha,n_B,0)$ block encoding $U_B$ of $\overline B$, with
known $\alpha\geq1$, and to its inverse:
\begin{equation}
 (\bra{0^{n_B}}\otimes I)U_B
 (\ket{0^{n_B}}\otimes I)
 =A:=\frac{\overline B}{\alpha}.
 \label{eq:standard-block-encoding}
\end{equation}
We also have controlled access to a state-preparation unitary
$P_\Phi$ and its inverse, where
\begin{equation}
 P_\Phi\ket0=\ket{\Phi_D}
 :=\frac1{\sqrt D}\sum_{j=1}^{D}
     V_R\ket j\otimes V_R\ket j,
 \label{eq:max-entangled}
\end{equation}
with reusable work registers returned to zero.
The required projector-controlled phase shifts and the selective
phase flips used by amplitude estimation are available.

A supplied parameter $\gamma\in(0,1]$ satisfies
\begin{equation}
 \operatorname{sing}(A)\subseteq\{0\}\cup[\gamma,1],
 \label{eq:singular-gap-promise}
\end{equation}
where $\operatorname{sing}(A)$ denotes the set of singular values
of $A$.
\end{assumption}

One block-encoding query is an application of $U_B$ or
$U_B^\dagger$, including a controlled application when required.
Calls to $P_\Phi$ and to any underlying input oracle are counted
separately. The state $\ket{\Phi_D}$ purifies the maximally mixed
state $P_R/D$ on the embedded input space.

The condition $\alpha\geq1$ ensures that $\sqrt{\log\alpha^2}$
is real. An encoding with normalization $\alpha_0<1$ can be
rescaled to normalization $1$ using one additional ancilla
rotation, without increasing the number of queries. A supplied
normalized gap $\gamma_0$ then becomes $\alpha_0\gamma_0$.

\subsection{Polynomial approximation}

For a positive normalized singular value $x=\sigma/\alpha$,
\[
 \log(1/x^2)=\log\alpha^2-\log\sigma^2.
\]
Assigning the weight $\log\alpha^2$ also to each vector in an
orthonormal basis of $\ker B$ makes the sum of these weights equal
to $D\log\alpha^2-\ell(B)$. The circuit will approximate the square
roots of these weights.

Set
\begin{equation}
 a_\alpha:=\log\alpha^2,\qquad
 s_{\alpha,\gamma}:=a_\alpha+\log(1/\gamma),\qquad
 b_\alpha:=\sqrt{a_\alpha}.
 \label{eq:complementary-normalization}
\end{equation}
If $s_{\alpha,\gamma}=0$, then $\alpha=\gamma=1$ and
$\ell(B)=0$, so no queries are required.
Otherwise, set
\begin{equation}
 \Lambda:=160\sqrt{s_{\alpha,\gamma}}.
 \label{eq:weighted-normalization}
\end{equation}
Here $\alpha$ is the block-encoding normalization, whereas
$\Lambda$ is the polynomial scaling factor.

The bottom-left QSVT block supplies the factor $\sqrt{1-x^2}$.
We therefore approximate
\begin{equation}
 h(x):=\sqrt{\frac{\log(1/x^2)}{1-x^2}}
 \quad(0<|x|<1),\qquad h(\pm1):=1,
 \label{eq:complementary-log-target}
\end{equation}
with endpoint values defined by continuity.
At zero, the polynomial instead approximates $b_\alpha$.
The singular-value gap separates these approximation requirements.

\begin{lemma}[Weighted polynomial approximation]
\label{lem:complementary-log-polynomial}
For $\alpha\geq1$, $0<\gamma\leq1$,
$s_{\alpha,\gamma}>0$, and $0<\xi<1/4$, there is a real even
polynomial $p$ satisfying
\begin{align}
 M(p):=\max_{x\in[-1,1]}\sqrt{1-x^2}|p(x)|
 &\leq40\sqrt{s_{\alpha,\gamma}},
 \label{eq:complementary-poly-bound}\\
 |p(x)-h(x)|&\leq\xi
 &&(\gamma\leq|x|\leq1),
 \label{eq:complementary-poly-outer}\\
 |p(x)-b_\alpha|&\leq\xi
 &&(|x|\leq\gamma/2).
 \label{eq:complementary-poly-kernel}
\end{align}
Its degree is at most an even integer $d_q$ with
\begin{equation}
 d_q=O\!\left[
 (\gamma^{-1}+s_{\alpha,\gamma}^{-1/2})
 \log\frac{C(1+\sqrt{s_{\alpha,\gamma}})}{\gamma\xi}
 \right],
 \label{eq:complementary-degree}
\end{equation}
where $C>1$ and the implied constant are universal.
Thus $q:=p/\Lambda$ satisfies $M(q)\leq1/4$.
\end{lemma}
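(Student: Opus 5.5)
The construction approximates a target that interpolates between $b_\alpha$ near zero and $h$ on the outer region. Work in the variable $y=x^2$ and build an even polynomial by composing with $x\mapsto x^2$ at the end. On $[\gamma^2,1]$ in $y$, the target is $g(y)=\sqrt{\log(1/y)/(1-y)}$. This function is analytic on a neighbourhood of $(0,1]$, including at $y=1$ where it extends analytically with $g(1)=1$. Its only singularity is the branch point at $y=0$, and it satisfies $g(y)\le\sqrt{\log(1/\gamma^2)/(1-\gamma^2)}\lesssim\sqrt{s_{\alpha,\gamma}}+1$ on $[\gamma^2,1]$.

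First, smooth the target. Choose a smooth (analytic) switching function $w$ that is $\approx 1$ on $|x|\le\gamma/2$ and $\approx 0$ on $|x|\ge\gamma$, for example $w(x)=\tfrac12(1-\operatorname{erf}(\kappa(x^2-\tfrac{5}{8}\gamma^2)/\gamma^2))$ with $\kappa=O(\sqrt{\log(1/\xi)})$. Define
\[
F(x)=w(x)\,b_\alpha+(1-w(x))\,\tilde h(x),
\]
where $\tilde h$ is $h$ regularized below $\gamma/2$, e.g.\ $\tilde h(x)=h(\sqrt{x^2+\gamma^2/4})$. This regularization is analytic in a strip or ellipse around $[-1,1]$ of width $\Theta(\gamma)$ and bounded there by $O(1+\sqrt{s_{\alpha,\gamma}})$. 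The errors from the erf tails are at most $\xi/3$ times $O(1+\sqrt{s})$. They can be absorbed by choosing the argument of the erf with an extra logarithmic factor; this produces the $\log(C(1+\sqrt s)/(\gamma\xi))$ term.

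Second, approximate $F$. Use the standard result (Gily\'en et al., Chebyshev truncation for functions analytic in a Bernstein ellipse of parameter $1+\Theta(\gamma)$, or the erf/rect and $1/\sqrt{\cdot}$ approximations from the QSVT toolbox) to get an even polynomial of degree $O(\gamma^{-1}\log(C(1+\sqrt s)/(\gamma\xi)))$ that is $\xi/3$-close to $F$ uniformly on $[-1,1]$. Symmetrize, $p(x)\leftarrow(p(x)+p(-x))/2$, to enforce evenness. The additional term $s^{-1/2}$ in the degree accounts for the regime $s\ll1$, where the relative precision needed in the kernel-vs-outer transition scales like $\xi/\sqrt s$. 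This gives \eqref{eq:complementary-poly-outer} and \eqref{eq:complementary-poly-kernel}.

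Third, bound $M(p)$, which I expect to be the main obstacle. On $[\gamma,1]$ we have $\sqrt{1-x^2}\,p\le\sqrt{\log(1/x^2)}+\xi\le\sqrt{2s}+1/4$. On $|x|\le\gamma$, $|p|\le b_\alpha+\max\tilde h+\xi$. The delicate point is that the Chebyshev truncation might overshoot between the controlled regions or near $\pm1$. To handle this I would bound the uniform error globally (not only on the two sets) by approximating $F$ on all of $[-1,1]$, so that $|p|\le\|F\|_\infty+\xi\le 3\sqrt{s}+O(1)$. When $s$ is small (say $s<1$), the constant $O(1)$ must itself be made $O(\sqrt s)$. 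For this, note that $h-b_\alpha$ and the gap term are both $O(\sqrt s)$ when $\alpha,1/\gamma$ are close to $1$: in that regime $h(x)\le\sqrt{\log(1/\gamma^2)/(1-\gamma^2)}$ is near $1$, while $s$ is small only if $\gamma$ is near $1$, making $|x|\in[\gamma,1]$ tiny. Handling this cleanly may require rescaling by $\sqrt{1-x^2}$ explicitly, i.e.\ approximating $\sqrt{1-x^2}F$ and dividing. If that fails I would just accept constant $40$ via case analysis $s\ge1/100$ versus $s<1/100$. Finally, $M(q)=M(p)/(160\sqrt s)\le1/4$ is immediate.
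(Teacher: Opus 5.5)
\paragraph{The small-$s$ regime is missing.} The proof breaks down as $s=s_{\alpha,\gamma}\to0$. This is exactly where the constant $40$ and the $s^{-1/2}$ degree term come from, and your proposal gives no construction for it.

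For small $s$ we have $b_\alpha\le\sqrt s$ and $\arccos\gamma\le\sqrt{2s}$. So $p$ must be close to $b_\alpha\approx0$ on $|x|\le\gamma/2$, yet close to $h(1)=1$ near $x=1$, while $\sqrt{1-x^2}\,|p(x)|\le40\sqrt s$ on all of $[-1,1]$. Your $F$ fails this:
\begin{itemize}
\item $F$ already equals $\tilde h\approx1$ just above $x\approx0.79\gamma$, where $\sqrt{1-x^2}$ is of order one. Hence $M(F)=\Theta(1)$, and any $p$ uniformly $\xi$-close to $F$ also has $M(p)=\Theta(1)$.
\item Your degree $O(\gamma^{-1}\log(\cdot))$ becomes $O(\log(1/\xi))$ when $\gamma\approx1$. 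Bernstein's inequality for the trigonometric polynomial $\sin\theta\,p(\cos\theta)$ gives $|p(1)|\le(d+1)M(p)$. So $M(p)\le40\sqrt s$ together with $p(1)\approx1$ forces $d=\Omega(s^{-1/2})$.
\item The $s^{-1/2}$ term is therefore not a ``relative precision $\xi/\sqrt s$'' effect, which would cost only a logarithm.
\item The claim $h-b_\alpha=O(\sqrt s)$ is false, since $h(1)=1$.
\item ``Approximating $\sqrt{1-x^2}F$ and dividing'' does not produce a polynomial.
\item The fallback split at $s=1/100$ leaves the hard case empty.
\end{itemize}

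The missing idea is a polynomial that reproduces $\arccos x/\sqrt{1-x^2}=\theta/\sin\theta$ for $0\le\theta\le\arccos\gamma$ while having weighted norm $O(1/n)$ with $n\asymp s^{-1/2}$. The paper takes, for odd $n$ and arcsine Taylor coefficients $\kappa_k$,
\[
b_{n,m}(x)=\frac{U_{n-1}(x)}{n}\sum_{k=0}^{m}\kappa_k\bigl(1-T_n(x)^2\bigr)^k,
\]
so that $\sin\theta\,b_{n,m}(\cos\theta)=n^{-1}\sum_{k\le m}\kappa_k\sin^{2k+1}(n\theta)$. This is approximately $\theta$ for $n\theta\le\pi/4$, and $M(b_{n,m})\le\pi/(2n)$. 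It then sets $p=b_{n,m}r_\eta+b_\alpha c_\eta$, using fixed-degree polynomials with these properties:
\begin{itemize}
\item $r_\eta\approx\sqrt{-2\log|x|}/\arccos|x|$ on $3/4\le|x|\le1$, and $r_\eta\approx0$ on $|x|\le1/2$;
\item $c_\eta\approx1$ on $|x|\le1/2$, and $c_\eta\approx0$ on $3/4\le|x|\le1$.
\end{itemize}
This gives $M(p)\le\pi/n+b_\alpha\le(12\pi+1)\sqrt s$ at degree $O(s^{-1/2}\log(C/\xi))$.

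\paragraph{Your regularization fails the outer condition.} For $s$ bounded below, your plan matches the paper's first case ($s\ge1/64$) in spirit. There a globally bounded multi-interval approximation gives $M(p)\le\|p\|_{[-1,1]}\le2\sqrt{1+s}<17\sqrt s$. But your regularization is not admissible. The function $\tilde h(x)=h(\sqrt{x^2+\gamma^2/4})$ differs from $h$ on $[\gamma,1]$ itself; for instance $\tilde h(1)^2=4\gamma^{-2}\log(1+\gamma^2/4)<1=h(1)^2$. So \eqref{eq:complementary-poly-outer} fails by an amount that does not shrink with $\xi$. You also cannot simply drop the regularization and truncate $(1-w)h$, because $h$ is unbounded at $0$.

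The paper avoids global regularization altogether. It applies the bounded multi-interval theorem of Gily\'en, Su, Low, and Wiebe (their Theorem~68) to $[-1,-3\gamma/4]$, $[-\gamma/2,\gamma/2]$ and $[3\gamma/4,1]$ with $\delta=\gamma/32$. On the outer intervals it uses the exact holomorphic square root $h_+$ of $2\operatorname{Log}z/(z^2-1)$ on the right half-plane, bounded via $\operatorname{Log}z/(z-1)=\int_0^1dv/(1-v+vz)$. On the middle interval it uses the constant $b_\alpha$.
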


The proof is given in \cref{app:complementary-polynomial-proof}.
The bound applies to $\sqrt{1-x^2}|q(x)|$, rather than to
$|q(x)|$ alone. This is the condition required for the bottom-left
QSVT construction. The necessity of the additional
$s_{\alpha,\gamma}^{-1/2}$ degree term for the chosen scaling is
established in \cref{prop:weighted-degree-normalization}.

\subsection{Bottom left QSVT construction}

\begin{lemma}[Implementation of the bottom-left QSVT block]
\label{lem:bottom-left-qsvt}
Let $q$ be a real even polynomial of degree at most an even
integer $d_q$, satisfying
\[
 (1-x^2)q(x)^2\leq1
 \qquad(x\in[-1,1]).
\]
There is a unitary circuit $\mathcal V_q$ using one control
ancilla and $n_{\rm Q}=d_q+1$ block-encoding queries such that,
on a right singular vector of $A$ with singular value $x$, the
success probability is
\[
 (1-x^2)q(x)^2.
\]
The control ancilla and block-encoding ancillas are initialized
to zero. Success occurs when the control ancilla is measured
in $\ket0$ and the block-encoding ancillas are not all zero.
The statement includes $x=0$ and $x=1$.
\end{lemma}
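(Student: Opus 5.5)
We use standard QSVT (Gilyén et al.) with the even-parity phase sequence. For an even degree $d_q$ the alternating-phase circuit makes $d_q$ queries to $U_B$ and $U_B^\dagger$. Up to a global phase, the block $(\bra{0^{n_B}}\otimes I)\,\cdot\,(\ket{0^{n_B}}\otimes I)$ of this circuit acts on each right singular vector $\ket{v}$ with singular value $x$ as $P(x)$ times the projection onto $\ket v$. The complementary (bottom-left) block, projected onto the orthogonal complement of $\ket{0^{n_B}}$, has amplitude $\sqrt{1-x^2}\,Q(x)$, where $(P,Q)$ is a pair of complex polynomials with $|P|^2+(1-x^2)|Q|^2=1$, $P$ even of degree $\le d_q$, and $Q$ odd of degree $\le d_q-1$. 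Because the target $q$ is even, we must first fix parities. We use one extra query to shift parity: the polynomial $xq(x)$ is odd of degree $d_q+1$, and the goal is an odd-degree QSVT sequence whose complementary block produces $\sqrt{1-x^2}\,\tilde Q(x)$ with $\tilde Q$ even and $\tilde Q = q$. This accounts for $n_{\rm Q}=d_q+1$ queries.

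\textbf{Step 1: complete $q$ to a valid QSP pair.} Set $g(x):=1-(1-x^2)q(x)^2$. This is a real even polynomial, nonnegative on $[-1,1]$ by hypothesis, of degree at most $2d_q+2$. By the Fejér–Riesz type lemma used in QSP (Gilyén et al., Lemma 6 / Theorem 5 style), we write $g=|P|^2$ on $[-1,1]$ with $P$ a complex polynomial of degree $\le d_q+1$ and definite parity. We choose the parity odd, which is the one compatible with $g(1)=1$ and degree $d_q+1$. The pair $(P,\,q)$ then satisfies $|P|^2+(1-x^2)|q|^2=1$ with opposite parities and the correct degrees. The QSP characterization theorem therefore gives phases $\phi_0,\dots,\phi_{d_q+1}$ realizing it, and the singular-value lift gives the circuit with $d_q+1$ alternating applications of $U_B,U_B^\dagger$. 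The projector-controlled phases $e^{i\phi\,(2\Pi-I)}$ are implemented with the single control ancilla, which is the one control ancilla in the statement.

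\textbf{Step 2: success probability.} Apply the circuit to $\ket0_{\rm ctrl}\ket{0^{n_B}}\ket v$. The singular-value decomposition reduces the action to the invariant two-dimensional subspace spanned by $\ket{0^{n_B}}\ket v$ and the normalized component of $U_B\ket{0^{n_B}}\ket v$ orthogonal to $\ket{0^{n_B}}\otimes\mathcal H$. On this subspace each query is a reflection-type rotation with cosine $x$. Measuring the control in $\ket0$ and the block ancillas not all zero projects onto the second basis vector, with probability $(1-x^2)|q(x)|^2=(1-x^2)q(x)^2$ since $q$ is real. The ancilla-uncompute convention (a final Hadamard-type step on the control) yields the real part rather than a complex amplitude; this is why we require $q$ real and, if needed, average $Q$ with its conjugate-symmetric version via the standard real-part trick. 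That trick keeps one control ancilla but may force the phase choice of Step 1 to be symmetric.

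\textbf{Edge cases and the main obstacle.} At $x=1$ the subspace is one-dimensional, the complementary amplitude is $\sqrt{1-1}\,q=0$, and the probability formula holds trivially. At $x=0$ (kernel vectors, including the zero-padded columns) $U_B\ket{0^{n_B}}\ket v$ lies entirely outside the $\ket{0^{n_B}}$ block. The rotations are then by $\pi/2$, and the formula gives $q(0)^2$, which we verify directly by evaluating the QSP product at $x=0$. The main obstacle is Step 1 together with the real-part issue. We must show that the complementary polynomial with the prescribed real even $q$ and an odd $P$ is realizable with exactly $d_q+1$ queries and a single control ancilla, rather than requiring a linear combination of two QSVT circuits. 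We will first try to choose $P$ with $P(x)\in i\mathbb R$-symmetric coefficients so that the realized $Q$ is exactly real, invoking the real-QSP existence theorem for $(\Re P,\,q)$ pairs.
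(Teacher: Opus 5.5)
Your proposal has a genuine gap: Step 1 cannot succeed, and it fails exactly at $x=0$. A sequence with $d_q+1$ (odd) queries forces $\mathcal P$ to be odd. Then $\mathcal P(0)=0$, and the normalization $|\mathcal P|^2+(1-x^2)|\mathcal Q|^2=1$ gives $|\mathcal Q(0)|=1$. Equivalently, writing $g=|P|^2$ with $P$ odd requires $g(0)=1-q(0)^2=0$. The sequence length dictates the parity; you do not get to choose it, and $g(1)=1$ plays no role.

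The obstruction does not depend on the length. On a kernel vector, each query maps $\ket{0^{n_B}}\ket v$ exactly onto $U_B\ket{0^{n_B}}\ket v$ and back. So any single phase-interleaved sequence has complementary amplitude of modulus $0$ (even length) or $1$ (odd length) at $x=0$. It can never produce a calibrated value $0<|q(0)|<1$, which is exactly what the estimator needs: in the application $|q(0)|$ is roughly $b_\alpha/\Lambda\leq1/160$. Your plan in the last paragraph, to make the realized $\mathcal Q$ exactly equal to $q$ and so avoid combining two QSVT circuits, is therefore impossible. Your direct check at $x=0$ would return $1$, not $q(0)^2$.

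The missing idea is that interference between two sequences is unavoidable, and that it costs no extra queries.

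\textbf{Completion.} Use the completion theorem in its real-part form, prescribing only $\operatorname{Re}\mathcal P=0$ and $\operatorname{Re}\mathcal Q=q$. This needs only $(1-x^2)q(x)^2\leq1$ on $[-1,1]$ plus the parity and degree conditions. It yields phases $\phi$ for which $\mathcal Q$ is genuinely complex.

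\textbf{Conjugate branch.} For real $x$ we have $\overline{W(x)}=ZW(x)Z$, so negating every phase gives $S_{-\phi}(x)=Z\overline{S_{\phi}(x)}Z$. The bottom-left entries of $S_{\phi}$ and $S_{-\phi}$ are then $\mathrm i\sqrt{1-x^2}\,\overline{\mathcal Q(x)}$ and $\mathrm i\sqrt{1-x^2}\,\mathcal Q(x)$, and their average is $\mathrm i\sqrt{1-x^2}\,q(x)$.

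\textbf{Circuit.} The single control ancilla, placed between two Hadamards, selects the sign of the phase list. The two branches contain identical block-encoding calls, so only the phase rotations are controlled and the query count stays $d_q+1$. Measuring the control in $\ket0$ with the block ancillas not all zero gives exactly $(1-x^2)q(x)^2$. This includes $x=0$, where the two branches contribute $\mathrm i^{n_{\rm Q}}e^{\pm\mathrm i\theta}$ for the alternating phase sum $\theta$, and their average $\mathrm i^{n_{\rm Q}}\cos\theta$ equals $\mathrm i\,q(0)$.

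Your Step 2 gestures at a ``real-part trick,'' but only as an optional patch. You have also already assigned the one control ancilla to the projector-controlled phases, which the paper instead treats as available primitives. The argument is incomplete until Step 1 is replaced by this construction.
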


\begin{proof}[Proof sketch]
Quantum signal processing (QSP) polynomial completion gives a
complex polynomial $\mathcal Q$ with
$\operatorname{Re}\mathcal Q=q$.
A control ancilla coherently combines the QSP sequences with
opposite phase lists, producing the bottom-left amplitude
$\mathrm{i}\sqrt{1-x^2}q(x)$.
The two sequences have identical block-encoding calls, so the
control acts only on the phase rotations and does not double the
query count. The complete construction, including the zero
singular values, is proved in \cref{app:bottom-left-qsvt-proof}.
\end{proof}

Prepare $\ket{\Phi_D}$ and apply $\mathcal V_q$ to its first
register, with the control ancilla and block-encoding ancillas
initialized to zero. Let $x_1,\ldots,x_D$ be the nonnegative square
roots of the eigenvalues of $B^{\mathsf T}B/\alpha^2$, counted
with multiplicity. The success probability is
\begin{equation}
 P_q
 =\frac1D\sum_{j=1}^{D}(1-x_j^2)q(x_j)^2
 =\frac1{D\Lambda^2}
   \sum_{j=1}^{D}(1-x_j^2)p(x_j)^2.
 \label{eq:complementary-probability}
\end{equation}
This sum includes all $D-\rank B$ zero singular values associated
with $\ker B$.
Write $\mathcal A_q$ for the unitary circuit consisting of this
state preparation and QSVT construction. Each application of
$\mathcal A_q$ or its inverse uses one state-preparation call and
$d_q+1$ block-encoding queries.

Define the ideal weight
\begin{equation}
 w_\alpha(x):=
 \begin{cases}
  \log(1/x^2),&\gamma\leq x\leq1,\\
  a_\alpha,&x=0.
 \end{cases}
 \label{eq:complementary-weight}
\end{equation}
Then
\begin{equation}
 \begin{split}
 S(B)&:=\frac1D\sum_{j=1}^{D}w_\alpha(x_j)
      =a_\alpha-\frac{\ell(B)}D,\\
 P_{\rm ideal}&:=\frac{S(B)}{\Lambda^2},\\
 0\leq S(B)&\leq S_{\max}
 :=\max\{a_\alpha,2\log(1/\gamma)\}
 \leq2s_{\alpha,\gamma}.
 \end{split}
 \label{eq:complementary-trace-identity}
\end{equation}
To verify the identity, let $\rho=\rank B$.
The positive singular values contribute
$\rho a_\alpha-\ell(B)$, while the kernel contributes
$(D-\rho)a_\alpha$. Thus the rank cancels.
The identity also holds for $B=0$, with $\pdet(0)=1$.

For boundary matrices with integer entries, integrality gives
a smaller upper bound on $S(B)$.

\begin{lemma}[Bound on the normalized sum for integer boundary matrices]
\label{lem:integer-trace-bound}
If $B$ has integer entries and $\|B\|\leq\alpha$ with $\alpha\geq1$,
then
\begin{equation}
 0\leq\ell(B)\leq D a_\alpha,\qquad
 0\leq S(B)\leq a_\alpha.
 \label{eq:integer-trace-bound}
\end{equation}
\end{lemma}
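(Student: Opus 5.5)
The plan is to prove the two bounds on $\ell(B)$ and then read off the bound on $S(B)$ from the identity $S(B)=a_\alpha-\ell(B)/D$ in \eqref{eq:complementary-trace-identity}. Let $\rho=\rank B$ and let $\sigma_1,\ldots,\sigma_\rho$ be the positive singular values of $B$. By \eqref{eq:ellB}, $\ell(B)=\sum_{j\le\rho}\log\sigma_j^2=\log\pdet(B^{\mathsf T}B)$. If $B=0$, then $\ell(B)=0$ by the convention $\pdet(0)=1$, and both displayed chains are immediate since $a_\alpha\ge0$. So we may assume $\rho\ge1$.

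\emph{Upper bound.} Each $\sigma_j\le\|B\|\le\alpha$, so $\log\sigma_j^2\le\log\alpha^2=a_\alpha$. Summing over $j$ gives $\ell(B)\le\rho\,a_\alpha$. Since $a_\alpha\ge0$ (because $\alpha\ge1$) and $\rho\le D$, this yields $\ell(B)\le D a_\alpha$. Substituting into $S(B)=a_\alpha-\ell(B)/D$ then gives $S(B)\ge0$.

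\emph{Lower bound, the integrality step.} This is the only place integrality enters, and it is the step needing care. I would show $\pdet(B^{\mathsf T}B)\ge1$.
\begin{itemize}
\item The pseudodeterminant of the positive semidefinite matrix $M=B^{\mathsf T}B$ equals the $\rho$th elementary symmetric function of its eigenvalues.
\item That function equals the sum of all principal $\rho\times\rho$ minors of $M$, since these are the coefficients of the characteristic polynomial.
\item $M$ has integer entries, so each such minor is an integer.
\item Each minor is nonnegative, because $M$ is positive semidefinite.
\item The sum is positive, being the product of the $\rho$ positive eigenvalues.
\end{itemize}
A positive integer is at least $1$, so $\pdet(M)\ge1$ and hence $\ell(B)\ge0$. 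As a cross-check, Cauchy--Binet gives the same conclusion: $\pdet(M)=\sum_{I,J}\det(B_{I,J})^2$, where the sum is over $\rho\times\rho$ submatrices, and a full-rank integer submatrix has nonzero integer determinant. From $\ell(B)\ge0$ we get $S(B)\le a_\alpha$, completing \eqref{eq:integer-trace-bound}.

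One point to check is that $\ell$ is computed for the integer matrix $B$ itself, not the embedded $\overline B$. This causes no problem, because the isometries $V_L,V_R$ preserve singular values, and the singular values relevant to $S(B)$ are those of $B/\alpha$.
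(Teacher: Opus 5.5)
Your proof is correct and follows the paper's argument. The upper bound comes from $\sigma_j\le\|B\|\le\alpha$ and $\rho\le D$, the lower bound comes from $\pdet(BB^{\mathsf T})$ being a positive integer, and both bounds on $S(B)$ follow from the identity $S(B)=a_\alpha-\ell(B)/D$. The only cosmetic difference is that you certify integrality via the sum of principal $\rho\times\rho$ minors of the integer Gram matrix $B^{\mathsf T}B$, whereas the paper applies Cauchy--Binet directly to write $\pdet(BB^{\mathsf T})=\sum_{I,J}\det(B_{I,J})^2$, which you already note as a cross-check.
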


\begin{proof}
For $\rho=\rank B>0$, Cauchy--Binet gives
\[
 \pdet(BB^{\mathsf T})
 =\sum_{|I|=|J|=\rho}\det(B_{I,J})^2,
\]
where $I,J$ are row and column index sets.
At least one minor is a nonzero integer, so $\ell(B)\geq0$.
Since every positive singular value is at most $\alpha$,
$\ell(B)\leq\rho\log\alpha^2\leq D a_\alpha$.
The assertion for $S(B)$ follows from
\eqref{eq:complementary-trace-identity}.
The zero matrix satisfies the same bounds by convention.
\end{proof}

\subsection{Amplitude estimation and query complexity}

We estimate $P_q$ by applying amplitude estimation to
$\mathcal A_q$ and its inverse. This use of state preparation
and amplitude estimation to determine a normalized trace follows
Gribling~\cite[Lemmas~9.3 and~11.9]{Gribling2019Thesis}.
The success-probability dependence is retained in the complexity
bound below.

\begin{theorem}[Quantum estimation of upper Hodge Laplacian
log pseudodeterminants]
\label{thm:quantum-pseudologdet}
Assume \cref{ass:block-trace}, $s_{\alpha,\gamma}>0$, and a
supplied bound $S(B)\leq S_\ast\leq2s_{\alpha,\gamma}$.
For $0<\varepsilon_{\mathrm n}\leq1$ and $0<\nu<1$,
there is a quantum algorithm returning $\widehat\ell_{\mathrm n}$
such that
\begin{equation}
 \Pr\!\left[
 \left|\widehat\ell_{\mathrm n}-\frac{\ell(B)}D\right|
 \leq\varepsilon_{\mathrm n}
 \right]\geq1-\nu,
 \label{eq:normalized-success}
\end{equation}
using
\begin{equation}
 \widetilde O\!\left[
 \left(\frac{\sqrt{s_{\alpha,\gamma}}}{\gamma}+1\right)
 \left(
 \frac{\sqrt{S_\ast}}{\varepsilon_{\mathrm n}}
 +\frac1{\sqrt{\varepsilon_{\mathrm n}}}
 \right)\log\frac2\nu
 \right]
 \label{eq:normalized-block-queries}
\end{equation}
block-encoding queries.
\end{theorem}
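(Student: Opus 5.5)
The plan is to combine \cref{lem:complementary-log-polynomial}, \cref{lem:bottom-left-qsvt}, and amplitude estimation, with an explicit split of the error budget.

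\textbf{Polynomial and circuit.} Fix an approximation parameter $\xi$, to be chosen below. Take the even polynomial $p$ from \cref{lem:complementary-log-polynomial} and set $q=p/\Lambda$. Then $M(q)\leq 1/4$, so the hypothesis $(1-x^2)q(x)^2\leq1$ of \cref{lem:bottom-left-qsvt} holds with room to spare. That lemma yields the circuit $\mathcal A_q$, which uses $d_q+1$ block-encoding queries per application and has the success probability $P_q$ of \eqref{eq:complementary-probability}.

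\textbf{Bias of $P_q$.} I compare $P_q$ with $P_{\rm ideal}=S(B)/\Lambda^2$ one singular value at a time. By \eqref{eq:singular-gap-promise} every $x_j$ lies in $\{0\}\cup[\gamma,1]$.

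For $x_j\geq\gamma$, the ideal term is $(1-x_j^2)h(x_j)^2=w_\alpha(x_j)$. By \eqref{eq:complementary-poly-outer},
\[
|(1-x^2)(p^2-h^2)|\leq (1-x^2)\,\xi\,(2h+\xi).
\]
Since $\sqrt{1-x^2}\,h(x)=\sqrt{w_\alpha(x)}\leq\sqrt{2s_{\alpha,\gamma}}$, this is $O(\xi\sqrt{s_{\alpha,\gamma}}+\xi^2)$.

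For $x_j=0$, \eqref{eq:complementary-poly-kernel} gives $|p(0)^2-a_\alpha|\leq\xi(2b_\alpha+\xi)$.

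Averaging over $j$ and dividing by $\Lambda^2$ gives
\[
|P_q-P_{\rm ideal}|\leq c\,\xi(\sqrt{s_{\alpha,\gamma}}+\xi)/\Lambda^2 .
\]
The output is $\widehat\ell_{\rm n}=a_\alpha-\Lambda^2\widehat P$. An error $\delta$ in $\widehat P$ therefore becomes an error $\Lambda^2\delta$ in $\ell(B)/D$. It suffices to choose $\xi=\Theta(\varepsilon_{\rm n}/(1+\sqrt{s_{\alpha,\gamma}}))$, which keeps the bias contribution below $\varepsilon_{\rm n}/2$. This $\xi$ enters $d_q$ only logarithmically.

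\textbf{Amplitude estimation.} It remains to estimate $P_q$ to additive accuracy $\delta=\varepsilon_{\rm n}/(2\Lambda^2)$. Standard amplitude estimation with $m$ applications of $\mathcal A_q$ achieves error
\[
O\!\left(\sqrt{P_q(1-P_q)}/m+1/m^2\right)
\]
with constant success probability. The $\log(2/\nu)$ factor comes from median amplification.

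For the variance term I use $P_q\leq P_{\rm ideal}+\text{bias}\lesssim (S_\ast+\varepsilon_{\rm n})/\Lambda^2$. This gives
\[
m=O\!\left(\frac{\sqrt{P_q}}{\delta}+\frac{1}{\sqrt\delta}\right)
=O\!\left(\frac{\Lambda\sqrt{S_\ast}}{\varepsilon_{\rm n}}+\frac{\Lambda}{\sqrt{\varepsilon_{\rm n}}}\right).
\]
Here the $\sqrt{\varepsilon_{\rm n}}$ piece of $\sqrt{P_q}$ is absorbed into the second term.

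\textbf{Total query count.} Multiplying by $d_q+1$ and using $\Lambda=160\sqrt{s_{\alpha,\gamma}}$ together with \eqref{eq:complementary-degree}, the total is
\[
\Lambda d_q\approx \sqrt{s_{\alpha,\gamma}}\,(\gamma^{-1}+s_{\alpha,\gamma}^{-1/2})=\frac{\sqrt{s_{\alpha,\gamma}}}{\gamma}+1,
\]
up to logarithmic factors. This reproduces \eqref{eq:normalized-block-queries}.

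The main obstacle is the amplitude-estimation step. Its error guarantee depends on the unknown $P_q$, so I will use the variant whose bound scales with the true amplitude; the iterative or median variant works. I must confirm that the supplied $S_\ast$, not $S_{\max}$, controls the first term. I also need to check that the bias does not inflate $P_q$ beyond the $\sqrt{\varepsilon_{\rm n}}$ term when $S_\ast$ is tiny. The case $s_{\alpha,\gamma}\to0$ is handled by the additive $+1$ in the bound.
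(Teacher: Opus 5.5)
Your argument follows the paper's proof closely. It uses the same polynomial and bottom-left circuit, and the same per-singular-value bias bound of order $\xi\sqrt{s_{\alpha,\gamma}}+\xi^2$. It uses the same probability-dependent amplitude estimation with $P_q\lesssim(S_\ast+\varepsilon_{\mathrm n})/\Lambda^2$, and the same final product $\Lambda(d_q+1)=\widetilde O(\sqrt{s_{\alpha,\gamma}}/\gamma+1)$. Your choice $\xi=\Theta(\varepsilon_{\mathrm n}/(1+\sqrt{s_{\alpha,\gamma}}))$ is a harmless variant of the paper's $\xi=\varepsilon_{\mathrm n}/(16\sqrt{2s_{\alpha,\gamma}})$, and it gives $\xi<1/4$ with no extra assumption.

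There is a genuine gap when $S_\ast$, and possibly $s_{\alpha,\gamma}$, is small compared with $\varepsilon_{\mathrm n}$. Your remark that the additive $+1$ handles $s_{\alpha,\gamma}\to0$ does not cover this regime. There your formula $m=O(\Lambda\sqrt{S_\ast}/\varepsilon_{\mathrm n}+\Lambda/\sqrt{\varepsilon_{\mathrm n}})$ can be far below one. Yet any run of amplitude estimation applies $\mathcal A_q$ at least once. By \cref{prop:weighted-degree-normalization}, a polynomial with $M(p)\le40\sqrt{s_{\alpha,\gamma}}$ and $|p(1)-1|\le\xi$ has degree $\Omega(s_{\alpha,\gamma}^{-1/2})$.

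For instance, take $\alpha=1$, $\gamma=e^{-s_{\alpha,\gamma}}$ and $s_{\alpha,\gamma}\ll\varepsilon_{\mathrm n}$. The claimed bound is then $\widetilde O(\varepsilon_{\mathrm n}^{-1/2}\log(2/\nu))$, whereas each run already costs $\Omega(s_{\alpha,\gamma}^{-1/2})$ queries. The $\widetilde O$ does not suppress powers of $1/s_{\alpha,\gamma}$. The $+1$ only records that $\Lambda s_{\alpha,\gamma}^{-1/2}=O(1)$ inside $\Lambda(d_q+1)$; it does not bound $(d_q+1)\lceil m\rceil$ when $m<1$.

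The missing step is a zero-query branch, which the paper places first. If $S_\ast\le\varepsilon_{\mathrm n}/2$, return $a_\alpha$. This is within $\varepsilon_{\mathrm n}/2$ of $\ell(B)/D$, because $\ell(B)/D=a_\alpha-S(B)$ and $0\le S(B)\le S_\ast$.

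Otherwise $\varepsilon_{\mathrm n}<2S_\ast\le4s_{\alpha,\gamma}$, so $\Lambda/\sqrt{\varepsilon_{\mathrm n}}>80$. Your $m$ is then bounded below by a constant, rounding it up costs only a constant factor, and the rest of your argument goes through as written. In your parametrization, skipping amplitude estimation whenever $\Lambda^2\le\varepsilon_{\mathrm n}/2$ would serve equally well.
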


Here $\widetilde O$ suppresses logarithmic factors in $1/\gamma$
and $(1+s_{\alpha,\gamma})/\varepsilon_{\mathrm n}$.
One may take $S_\ast=S_{\max}$, or $S_\ast=a_\alpha$ for
integer $B$ by \cref{lem:integer-trace-bound}.
The cost of obtaining any separately supplied bound $S_\ast$
is excluded.

\begin{proof}
If $S_\ast\leq\varepsilon_{\mathrm n}/2$, return $a_\alpha$
to arithmetic error at most $\varepsilon_{\mathrm n}/2$.
Since $\ell(B)/D=a_\alpha-S(B)$ and $0\leq S(B)\leq S_\ast$,
this achieves the required accuracy without block-encoding queries.

Otherwise, $\varepsilon_{\mathrm n}<2S_\ast\leq4s_{\alpha,\gamma}$.
The algorithm estimates the success probability of $\mathcal A_q$
and evaluates
\begin{equation}
 \widehat\ell_{\mathrm n}
 :=a_\alpha-\Lambda^2\widehat P.
 \label{eq:normalized-output-definition}
\end{equation}
Let $P_{\rm impl}$ be the success probability of a fixed
implemented circuit satisfying
$|P_{\rm impl}-P_q|\leq\delta_{\rm impl}$.
Amplitude estimation uses this circuit and its inverse.

Before division by $\Lambda$, the ideal success amplitudes have
absolute value at most $\sqrt{2s_{\alpha,\gamma}}$, and the
polynomial approximation changes each amplitude by at most $\xi$.
Consequently, on the event
$|\widehat P-P_{\rm impl}|\leq\delta_{\rm AE}$,
\begin{equation}
 \left|\widehat\ell_{\mathrm n}-\frac{\ell(B)}D\right|
 \leq
 2\sqrt{2s_{\alpha,\gamma}}\xi+\xi^2
 +\Lambda^2(\delta_{\rm impl}+\delta_{\rm AE})+\chi,
 \label{eq:complete-error-bound}
\end{equation}
where $\chi$ bounds the arithmetic error in
\eqref{eq:normalized-output-definition}.
Choose
\begin{equation}
 \xi=\frac{\varepsilon_{\mathrm n}}
          {16\sqrt{2s_{\alpha,\gamma}}},
 \qquad
 \delta_{\rm impl}\leq
       \frac{2\varepsilon_{\mathrm n}}{5\Lambda^2},
 \qquad
 \delta_{\rm AE}\leq
       \frac{\varepsilon_{\mathrm n}}{5\Lambda^2},
 \qquad
 \chi\leq\frac{\varepsilon_{\mathrm n}}5.
 \label{eq:algorithm-error-budget}
\end{equation}
The inequalities $\varepsilon_{\mathrm n}<4s_{\alpha,\gamma}$
and $\varepsilon_{\mathrm n}\leq1$ imply $\xi<1/4$ and
\[
 2\sqrt{2s_{\alpha,\gamma}}\xi+\xi^2
 <
 \frac{\varepsilon_{\mathrm n}}8
 +\frac{\varepsilon_{\mathrm n}}{128}
 <
 \frac{\varepsilon_{\mathrm n}}5.
\]
Thus \eqref{eq:complete-error-bound} is at most
$\varepsilon_{\mathrm n}$.
The implementation tolerances sufficient for
\eqref{eq:algorithm-error-budget} are given in
\cref{subsec:implementation-errors}.

With this choice of $\xi$,
\cref{lem:complementary-log-polynomial} supplies an even polynomial
with an even degree bound
\begin{equation}
 d_q=O\!\left[
 (\gamma^{-1}+s_{\alpha,\gamma}^{-1/2})
 \log\frac{C(1+s_{\alpha,\gamma})}
          {\gamma\varepsilon_{\mathrm n}}
 \right],
 \label{eq:estimator-degree}
\end{equation}
where $C>0$ is an absolute constant.

To bound the amplitude-estimation cost, the same error estimates give
\begin{equation}
 |P_{\rm impl}-P_{\rm ideal}|
 \leq\frac{3\varepsilon_{\mathrm n}}{5\Lambda^2},
 \qquad
 P_{\rm impl}\leq
 \frac{S_\ast+3\varepsilon_{\mathrm n}/5}{\Lambda^2}.
 \label{eq:implemented-probability-bound}
\end{equation}
Amplitude estimation with parameter $m$ returns an estimate satisfying
\begin{equation}
 |\widehat P-P_{\rm impl}|
 \leq
 \frac{2\pi\sqrt{P_{\rm impl}(1-P_{\rm impl})}}m
 +\frac{\pi^2}{m^2}
 \label{eq:probability-dependent-AE}
\end{equation}
with probability at least $8/\pi^2$, using $O(m)$ applications
of the circuit and its inverse
\cite[Theorem~12]{BrassardEtAl2002Amplitude}.
By \eqref{eq:implemented-probability-bound}, choosing
\[
 m=\Theta\!\left(
 \frac{\Lambda\sqrt{S_\ast+\varepsilon_{\mathrm n}}}
      {\varepsilon_{\mathrm n}}
 +\frac{\Lambda}{\sqrt{\varepsilon_{\mathrm n}}}
 \right)
\]
with a sufficiently large constant achieves the required
$\delta_{\rm AE}$.
Taking the median of $O(\log(2/\nu))$ independent estimates
reduces the failure probability to at most $\nu$.
Since $S_\ast>\varepsilon_{\mathrm n}/2$, the total number of
applications of $\mathcal A_q$ or its inverse is
\begin{equation}
 M_{\rm AE}
 =O\!\left(
 \left[
 \frac{\Lambda\sqrt{S_\ast}}{\varepsilon_{\mathrm n}}
 +\frac{\Lambda}{\sqrt{\varepsilon_{\mathrm n}}}
 \right]\log\frac2\nu
 \right).
 \label{eq:normalized-state-queries}
\end{equation}
Together with \eqref{eq:complete-error-bound}, this proves
\eqref{eq:normalized-success}.
Each application uses $d_q+1$ block-encoding queries.
Multiplying \eqref{eq:normalized-state-queries} by $d_q+1$
and using $\Lambda=160\sqrt{s_{\alpha,\gamma}}$ gives
\eqref{eq:normalized-block-queries}.
\end{proof}

The second term in \eqref{eq:normalized-state-queries} accounts
for the $m^{-2}$ term in amplitude estimation.
When $S_\ast>\varepsilon_{\mathrm n}/2$, it is at most a constant
multiple of the first term.
The query count depends on the supplied bound $S_\ast$;
using the unknown value $S(B)$ would require an adaptive algorithm.

\begin{corollary}[Additive error in the unnormalized log pseudodeterminant]
\label{cor:unnormalized-pseudologdet}
Under the assumptions of \cref{thm:quantum-pseudologdet},
additive error $0<\varepsilon\leq D$ in $\ell(B)$ and failure
probability at most $\nu$ can be achieved with
\begin{equation}
 \widetilde O\!\left[
 \left(\frac{\sqrt{s_{\alpha,\gamma}}}{\gamma}+1\right)
 \left(
 \frac{D\sqrt{S_\ast}}{\varepsilon}
 +\sqrt{\frac D\varepsilon}
 \right)\log\frac2\nu
 \right]
 \label{eq:unnormalized-block-queries}
\end{equation}
block-encoding queries.
\end{corollary}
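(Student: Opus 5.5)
The corollary follows from \cref{thm:quantum-pseudologdet} by rescaling the accuracy. I will run that algorithm with normalized accuracy $\varepsilon_{\mathrm n}:=\varepsilon/D$ and output $\widehat\ell:=D\,\widehat\ell_{\mathrm n}$. Since $|\widehat\ell-\ell(B)|=D\,|\widehat\ell_{\mathrm n}-\ell(B)/D|$, the event in \eqref{eq:normalized-success} with $\varepsilon_{\mathrm n}=\varepsilon/D$ is exactly the event $|\widehat\ell-\ell(B)|\leq\varepsilon$. That event has probability at least $1-\nu$, so the failure probability carries over unchanged.

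Next I check that the hypotheses of \cref{thm:quantum-pseudologdet} hold with this choice. The theorem requires $0<\varepsilon_{\mathrm n}\leq1$, which is exactly the assumption $0<\varepsilon\leq D$. The other inputs are \cref{ass:block-trace}, $s_{\alpha,\gamma}>0$, the bound $S_\ast$ and $\nu$. None of these depends on the accuracy, so they are inherited verbatim.

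Substituting $\varepsilon_{\mathrm n}=\varepsilon/D$ into \eqref{eq:normalized-block-queries} turns $\sqrt{S_\ast}/\varepsilon_{\mathrm n}$ into $D\sqrt{S_\ast}/\varepsilon$ and $1/\sqrt{\varepsilon_{\mathrm n}}$ into $\sqrt{D/\varepsilon}$. The prefactor $(\sqrt{s_{\alpha,\gamma}}/\gamma+1)$ and the factor $\log(2/\nu)$ are unchanged. This gives \eqref{eq:unnormalized-block-queries}.

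The only point needing care is the hidden logarithms in $\widetilde O$. These are logarithmic in $1/\gamma$ and $(1+s_{\alpha,\gamma})/\varepsilon_{\mathrm n}=D(1+s_{\alpha,\gamma})/\varepsilon$, so after rescaling they include $\log D$. I will state that the $\widetilde O$ in the corollary is understood to suppress these factors as well, consistent with how the theorem's $\widetilde O$ was defined.

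The output scaling $D\widehat\ell_{\mathrm n}$ is exact classical arithmetic, up to the arithmetic error $\chi$. That error is already absorbed in the theorem's budget at scale $\varepsilon_{\mathrm n}$, so it becomes at most $\varepsilon/5$ after multiplication by $D$.
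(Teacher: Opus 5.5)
Your proposal is correct and matches the paper's proof: set $\varepsilon_{\mathrm n}=\varepsilon/D$, invoke \cref{thm:quantum-pseudologdet}, and return $D\widehat\ell_{\mathrm n}$, counting the multiplication by $D$ inside the arithmetic error $\chi$. Your checks that $\varepsilon\leq D$ gives $\varepsilon_{\mathrm n}\leq1$, and that the suppressed logarithms in $\widetilde O$ now include $\log D$, make explicit what the paper leaves implicit.
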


\begin{proof}
Set $\varepsilon_{\mathrm n}=\varepsilon/D$ and return
$D\widehat\ell_{\mathrm n}$, including this multiplication
in the arithmetic error.
\end{proof}

For gate complexity, let $G_B$ bound the cost of a controlled
block-encoding query, $G_\Pi$ the cost of a reflection about
the ancilla-zero subspace, and $G_{\rm phase}$ the cost of a
synthesized QSVT phase rotation. Let $G_{\rm AE}$ be the
amortized additional cost per application of $\mathcal A_q$
or its inverse, including state preparation, the reflections
and controls used by amplitude estimation, and its final processing.
The quantum gate complexity is
\begin{equation}
 O\!\left(
 M_{\rm AE}
 \left[
 G_{\rm AE}
 +(d_q+1)(G_B+G_\Pi+G_{\rm phase})
 \right]
 \right).
 \label{eq:normalized-gates}
\end{equation}
All costs are evaluated at the precisions specified in
\cref{subsec:implementation-errors}.
Classical computation of polynomial coefficients and QSVT phases
is not included in this quantum gate count.

\subsection{Clique complexes case}

Let $G=([N],E)$ be a finite simple undirected graph, where
$[N]=\{1,\ldots,N\}$. Its clique complex $X=\Cl(G)$ consists
of all vertex subsets whose distinct vertices are pairwise
connected by edges:
\[
 \Cl(G)
 =
 \bigl\{
 \sigma\subseteq[N]:
 \{u,v\}\in E
 \text{ for all distinct }u,v\in\sigma
 \bigr\}.
\]
Every subset of a clique is again a clique, so this collection
is a simplicial complex. It includes the empty face and all
single vertices. An $r$-dimensional face of $X$ is therefore
a clique on $r+1$ vertices. Under the direct mapping described in
\cref{subsec:simplex-encoding}, a computational basis state of Hamming weight $r+1$
represents such a face exactly when its selected vertices
form a clique in $G$ with
$B=\mathcal B_r$ and $D=D_r$.
If $\partial_r^{\mathrm{full}}$ is the degree-$r$ boundary
of the full simplex on $[N]$, extended by zero on other
Hamming weights, then
\begin{equation}
 \mathcal B_r
 =\Pi_{r-1}\partial_r^{\mathrm{full}}\Pi_r.
 \label{eq:padded-boundary}
\end{equation}

The required initial state is
\begin{equation}
 \ket{\Phi_{D_r}}
 =\frac1{\sqrt{D_r}}
  \sum_{\substack{\sigma\subseteq[N]\\|\sigma|=r+1}}
  \ket{\sigma}\ket{\sigma}.
 \label{eq:candidate-trace-state}
\end{equation}
It can be prepared from a Dicke state of Hamming weight $r+1$,
followed by $N$ controlled-NOT gates to a zero-initialized
second register~\cite{BartschiEidenbenz2019Dicke}.
Thus
\begin{equation}
 G_{\Phi,r}
 =G_{\mathrm{Dicke}}(N,r+1)+O(N),
 \label{eq:dicke-trace-cost}
\end{equation}
where $G_{\mathrm{Dicke}}$ is the Dicke-state preparation
gate count, with rotation-synthesis costs included at the
required accuracy.
The state contains all $(r+1)$-vertex subsets.
Conditioning on membership in $X$ would instead normalize by
$f_r$, with success probability
\begin{equation}
 \frac{f_r}{D_r}
 \label{eq:simplex-postselection}
\end{equation}
before amplification.

McArdle, Gily\'en, and Berta give an exact
$(\sqrt N,2,0)$ block encoding using one membership-oracle
call in degree $r$ and one in degree $r-1$
\cite[Appendix~C.2.2]{McArdle2022AQubits}.
The membership oracles check the required Hamming weight and
simplex membership, and uncompute their work registers.
Consequently,
\begin{equation}
 \alpha_r=\sqrt N,\qquad
 0<\gamma_r\leq
 \frac{\sigma_{\min}^{+}(B_r)}{\sqrt N},
 \label{eq:mcardle-normalization}
\end{equation}
where $\gamma_r$ is a supplied lower bound.
The two block-encoding ancillas are separate from membership
workspace. The circuit construction is described in
\cref{app:implementation-details}.

For an adjacency oracle
\begin{equation}
 O_E\ket{u,v,b}
 =\ket{u,v,b\oplus[\{u,v\}\in E]},
 \label{eq:edge-oracle}
\end{equation}
where the bracket is the edge indicator and $\oplus$ is
addition modulo two, a degree-$r$ membership test uses
$O(r^2)$ adjacency queries for $r\geq1$.
Writing $G_{\mathrm{mem},r}$ for the gate complexity of the
controlled membership oracle, the block-encoding gate complexity is
\begin{equation}
 G_{B,r}
 =\widetilde O\!\left(
 N+G_{\mathrm{mem},r}+G_{\mathrm{mem},r-1}
 \right).
 \label{eq:boundary-gate-cost}
\end{equation}
This includes the membership circuits and synthesis of the
$O(N)$ elementary rotations in the boundary construction.
For this gate bound, $\widetilde O$ also suppresses logarithmic
synthesis factors.

For $N\geq2$, the algorithm parameters are
\begin{equation}
 a_r=\log N,\qquad
 s_r=\log N+\log(1/\gamma_r),\qquad
 \Lambda_r=160\sqrt{s_r},\qquad
 S_{\ast,r}=\log N.
 \label{eq:clique-estimator-parameters}
\end{equation}
The last equality uses \cref{lem:integer-trace-bound}.
Since $s_r\geq\log2$, the polynomial degree is
$\widetilde O(\gamma_r^{-1})$.
For $\varepsilon_{\mathrm n}\leq1$, the block-encoding
query complexity simplifies to
\[
 \widetilde O\!\left(
 \frac{\Lambda_r\sqrt{\log N}}
      {\gamma_r\varepsilon_{\mathrm n}}
 \log\frac2{\nu_r}
 \right),
\]
where $\nu_r$ is the failure probability assigned to degree $r$.

\begin{corollary}[Quantum algorithm for estimating the logarithm
of a critical-group order]
\label{cor:quantum-critical-order}
Let $X=\Cl(G)$, let $0\leq i<\dim X$, and fix
$\varepsilon>0$ and $0<\nu<1$.
Suppose $X$ satisfies \cref{cor:root-free-critical} and
$\wtH_q(X;\Z)=0$ for $0\leq q\leq i-1$.
Assume \cref{ass:block-trace} for each
$\mathcal B_r$, $1\leq r\leq i+1$, with
$D_r=\binom{N}{r+1}$, $\alpha_r=\sqrt N$, and supplied gaps
$\gamma_r$ satisfying \eqref{eq:mcardle-normalization}.

Choose $0<\varepsilon_{\rm arith}<\varepsilon$,
$0<\varepsilon_r\leq D_r$, and $0<\nu_r<1$ such that
\begin{equation}
 \varepsilon_{\rm arith}
 +\sum_{r=1}^{i+1}\varepsilon_r\leq\varepsilon,
 \qquad
 \sum_{r=1}^{i+1}\nu_r\leq\nu.
 \label{eq:error-allocation}
\end{equation}
Estimate each $\ell_r(X)$ to additive error $\varepsilon_r$,
evaluate $\ell_0(X)=\log N$, and combine the results using
\eqref{eq:critical-pdet-boxed}.
If the evaluation of $\ell_0$ and the final sum has error at most
$\varepsilon_{\rm arith}$, the result estimates $\log|K_i(X)|$
to additive error $\varepsilon$ with probability at least $1-\nu$.
The block-encoding query complexity is
\begin{equation}
 \widetilde O\!\left(
 \sum_{r=1}^{i+1}
 \frac{D_r\Lambda_r\sqrt{\log N}}
      {\gamma_r\varepsilon_r}
 \log\frac2{\nu_r}
 \right).
 \label{eq:critical-query-complexity}
\end{equation}
\end{corollary}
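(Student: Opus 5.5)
The corollary follows by combining three earlier results: the boxed identity \eqref{eq:critical-pdet-boxed}, \cref{cor:unnormalized-pseudologdet} applied once per degree, and a union bound.

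\emph{Exactness of the target.} Under the hypotheses, \cref{cor:root-free-critical} gives that $K_i(X)$ is finite. The extra assumption $\wtH_q(X;\Z)=0$ for $0\leq q\leq i-1$ removes the correction terms, so
\[
\log|K_i(X)|=\sum_{r=0}^{i+1}(-1)^{i+1-r}\ell_r(X).
\]
By the augmentation convention \eqref{eq:pi-zero}, $\ell_0(X)=\log f_0$. Since $\Cl(G)$ contains every vertex of $[N]$, $f_0=N$, so $\ell_0$ is known exactly up to arithmetic rounding. For $r\geq1$, the padding of \cref{subsec:simplex-encoding} gives $\ell(\mathcal B_r)=\ell_r(X)$. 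Hence estimating each padded boundary's unnormalized log pseudodeterminant is exactly what is needed.

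\emph{Per-degree estimation and error propagation.} For each $1\leq r\leq i+1$, invoke \cref{cor:unnormalized-pseudologdet} with $B=\mathcal B_r$, $D=D_r$, $\varepsilon=\varepsilon_r\leq D_r$ and failure probability $\nu_r$. This yields $\widehat\ell_r$ with $|\widehat\ell_r-\ell_r(X)|\leq\varepsilon_r$ except with probability $\nu_r$. The applicability conditions are the assumed \cref{ass:block-trace} with $\alpha_r=\sqrt N\geq1$ and gap $\gamma_r$. We also need the supplied bound $S_{\ast,r}=\log N$, which is valid by \cref{lem:integer-trace-bound}: $\mathcal B_r$ is an integer matrix with $\|\mathcal B_r\|\leq\sqrt N$, because the exact block encoding certifies $\|\overline B\|\leq\alpha$. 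The condition $S_\ast\leq 2s_{\alpha,\gamma}$ holds since $a_r=\log N\leq s_r$. If $s_r=0$ we would need $N=1$, which is excluded because $\dim X>i\geq0$ forces $N\geq2$.

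The output is $\widehat K=\sum_{r}(-1)^{i+1-r}\widehat\ell_r$, with $\ell_0$ inserted and the arithmetic error bounded by $\varepsilon_{\rm arith}$. Since the signs are $\pm1$, the triangle inequality gives total error at most $\varepsilon_{\rm arith}+\sum_r\varepsilon_r\leq\varepsilon$. A union bound over the $i+1$ independent runs gives failure probability at most $\sum_r\nu_r\leq\nu$.

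\emph{Query count.} Substitute $s_r=\log N+\log(1/\gamma_r)$ and $S_\ast=\log N$ into \eqref{eq:unnormalized-block-queries}. The prefactor $\sqrt{s_r}/\gamma_r+1$ is $O(\Lambda_r/\gamma_r)$, because $\gamma_r\leq1$ and $s_r\geq\log2$. The term $D_r\sqrt{\log N}/\varepsilon_r$ dominates $\sqrt{D_r/\varepsilon_r}$: since $\varepsilon_r\leq D_r$ we have $D_r/\varepsilon_r\geq\sqrt{D_r/\varepsilon_r}$, and $\sqrt{\log N}\geq\sqrt{\log 2}$ is bounded below by a constant. Summing over $r$ gives \eqref{eq:critical-query-complexity}.

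The only step needing any care is this bookkeeping: checking that each hypothesis of \cref{thm:quantum-pseudologdet} (in particular $\|\mathcal B_r\|\leq\alpha_r$ and $s_r>0$) is met for every degree, and that the lower-order amplitude-estimation term is absorbed. Everything else is a direct substitution.
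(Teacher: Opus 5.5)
Your proposal is correct and follows the paper's proof: apply \cref{cor:unnormalized-pseudologdet} in each degree, absorb the $\sqrt{D_r/\varepsilon_r}$ amplitude-estimation term using $\varepsilon_r\le D_r$ and $\log N\ge\log 2$, and then combine the degrees with the triangle inequality and a union bound. Your explicit checks that $S_{\ast,r}=\log N$ is admissible, that $\|\mathcal B_r\|\le\sqrt N$, that $s_r>0$, and that $f_0=N$ spell out hypotheses the paper leaves implicit.
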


\begin{proof}
Apply \cref{cor:unnormalized-pseudologdet} in each degree.
Because $\varepsilon_r/D_r\leq1$ and $\log N\geq\log2$,
the second amplitude-estimation term is bounded by a constant
multiple of the first. The triangle inequality gives the
total additive error, and the union bound gives the failure
probability.
\end{proof}

The implementation tolerances in degree $r$ use
$\varepsilon_{\mathrm n}=\varepsilon_r/D_r$.
In particular, its normalized arithmetic error satisfies
$\chi_r\leq\varepsilon_r/(5D_r)$.
The additional error  $\varepsilon_{\rm arith}$ applies
to the final classical postprocessing.
\Cref{app:error-allocation} optimizes the error allocation and
includes conversion to adjacency-query complexity.

\subsection{Summary of resource requirements}
\label{subsec:resource-ledger}

The resource bounds distinguish block-encoding queries,
state-preparation calls, adjacency queries, and quantum gates.
They assume the stated coherent oracle access and supplied
singular-value gaps. See \cref{tab:resource-ledger} in the appendix
for a summary of the parameters and implementation tolerances.

Normalization by $f_r$ instead of $D_r$ requires preparing
a purification of the maximally mixed state on the simplex
subspace, including its preparation cost.
Independent repetitions and computations for different boundary
matrices can be executed in parallel when separate processors
and concurrent access to independent oracle instances are
available. This can reduce execution time without reducing
the total number of queries or gates.

\section{Multipartite and torsion-bearing benchmarks}
\label{sec:multipartite}
 
The cost of the estimator in \cref{sec:quantum-algorithm} is controlled by
four inputs for each boundary map: the block-encoding normalization $\alpha_r$,
the normalized gap $\gamma_r$, the candidate dimension $D_r$, and a supplied
upper bound $S_{\ast,r}$ on the complementary normalized sum.  For integral
clique boundaries, \cref{lem:integer-trace-bound} permits
$S_{\ast,r}=\log N$.  A small gap directly inflates the degree of the QSVT
polynomial.  In this section we construct clique complexes for which all four
inputs are under explicit control and for which the torsion, the quantity we
are ultimately after, is known exactly.  They serve as known-answer benchmarks,
and one of them also supports a provable query separation.
 
We proceed in three steps.  First, in \cref{subsec:multipartite-basic}, we
compute every boundary spectrum of the balanced multipartite clique complex
$X_{m,k}$ in closed form.  After the $\sqrt N$ normalization used by the
clique-complex block encoding, every boundary map has norm one and condition
number at most $\sqrt k$.  This family has no torsion.  Second, in
\cref{subsec:torsion-clique-benchmark}, we plant torsion by joining $X_{m,k}$
with a fixed $31$-vertex clique complex that triangulates the real projective
plane. The join keeps a normalized gap of at least $\sqrt{c_Y/N}$ for a fixed
constant $c_Y>0$, and its homology torsion has order $2^{(m-1)^k}$; when
$m=2^k$, the estimator recovers the logarithm of this order to relative error
with polynomially many queries.  Third, in
\cref{subsec:matched-query-separation}, we glue $n$ small blocks at a single
vertex, each of which is a projective plane or a contractible tree according
to one bit of an input string $x$.  The per-simplex log pseudodeterminant of
the result is proportional to the Hamming weight of $x$, which turns our
estimation problem into approximate counting and yields a quadratic query
separation.
\subsection{Balanced multipartite complex}
\label{subsec:multipartite-basic}

Let $m,k\geq2$, and let $G_{m,k}=K_{m,\ldots,m}$ be the complete balanced
$k$-partite graph, with $k$ parts of size $m$.  Its clique complex is
\begin{equation}
  X_{m,k}:=\Cl(G_{m,k})=\chg{\overline K_m}^{\ast k},
  \qquad N_X:=mk,
  \label{eq:multipartite-join}
\end{equation}
where $\chg{\overline K_m}$ is the discrete complex on $m$ vertices and $\ast$ is the
simplicial join.  Thus a face chooses at most one vertex from each part and
$\dim X_{m,k}=k-1$.  The join K\"unneth formula gives
\begin{equation}
  \wtH_q(X_{m,k};\Z)\cong
  \begin{cases}
    \Z^{(m-1)^k},&q=k-1,\\
    0,&q\neq k-1,
  \end{cases}
  \label{eq:multipartite-homology}
\end{equation}
so this first family has no homology torsion.  Its role is to provide exact,
well-conditioned spectral data; \cref{subsec:torsion-clique-benchmark} adds
torsion.

Complete colorful (multipartite clique) complexes were analyzed in all degrees in~\cite[Section~6.1, Equations~(30)--(33)]{DuvalKlivansMartin2011Cellular}; the
balanced top-degree case is also used in~\cite[Appendix~F]{BerryEtAl2024QTDA}.
We record the balanced specialization and the boundary-map consequences needed
by the algorithm.  Here $\lambda_{\min}^{+}$ and $\chg{\zeta_{\min}^{+}}$ denote the
smallest positive eigenvalue and singular value, respectively, and
$\kappa(M)=\|M\|/\chg{\zeta_{\min}^{+}}(M)$ on the nonzero singular subspace.

\begin{theorem}[Spectra and conditioning]
\label{thm:multipartite-hodge-spectrum}
For $0\leq j\leq k-1$,
\begin{equation}
  \Spec\!\left(\Delta_j^{X_{m,k}}\right)
  =\{(k-j-1)m,(k-j)m,\ldots,km\},
  \label{eq:multipartite-hodge-spectrum}
\end{equation}
where $(s+k-j-1)m$ has multiplicity
\begin{equation}
  \binom{k}{j+1}\binom{j+1}{s}(m-1)^{j+1-s},
  \qquad 0\leq s\leq j+1.
  \label{eq:multipartite-hodge-multiplicity}
\end{equation}
Consequently,
\begin{equation}
  \wtbeta_j(X_{m,k})=0\ (0\leq j\leq k-2),\qquad
  \wtbeta_{k-1}(X_{m,k})=(m-1)^k,
  \label{eq:multipartite-betti}
\end{equation}
and
\begin{equation}
  \lambda_{\min}^{+}(\Delta_j)
  =\max\{k-j-1,1\}m\geq m=\frac{N_X}{k}.
  \label{eq:multipartite-hodge-gap}
\end{equation}
For $0\leq r\leq k-1$, the positive squared singular values of $B_r$ are
$(k-a)m$, $0\leq a\leq r$, with multiplicities
\begin{equation}
  \chg{\mu_{r,a}}
  =\binom{k}{a}(m-1)^a\binom{k-a-1}{r-a}.
  \label{eq:boundary-multiplicity}
\end{equation}
Hence
\begin{equation}
  \left\|\frac{B_r}{\sqrt{N_X}}\right\|=1,
  \qquad
  \chg{\zeta_{\min}^{+}}\!\left(\frac{B_r}{\sqrt{N_X}}\right)
  =\sqrt{\frac{k-r}{k}},
  \qquad
  \kappa(B_r)=\sqrt{\frac{k}{k-r}}.
  \label{eq:boundary-conditioning}
\end{equation}
\end{theorem}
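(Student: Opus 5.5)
The plan is to realize $X_{m,k}=\overline K_m^{\ast k}$ as an iterated join and to use the fact that the augmented real chain complex of a join is the tensor product of the augmented chain complexes. Precisely, $\widetilde C_\bullet(X\ast Y)\cong \widetilde C_\bullet(X)\otimes\widetilde C_\bullet(Y)$ with a degree shift by one, the differential carrying the Koszul sign, and the oriented-simplex bases mapping to tensor products of oriented-simplex bases.

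First I will check that this identification is an isometry for the standard inner products. I will then expand $\Delta=\partial^{\mathsf T}\partial+\partial\partial^{\mathsf T}$ for $\partial=\partial_X\otimes 1\pm 1\otimes\partial_Y$ and show that the cross terms cancel because of the sign. This gives
\begin{equation*}
\Delta^{X\ast Y}=\Delta^X\otimes 1+1\otimes\Delta^Y .
\end{equation*}
I expect this sign bookkeeping to be the main technical obstacle, since everything else is arithmetic.

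For a single factor $\overline K_m$ with the augmented convention, $\Delta_{-1}=B_0B_0^{\mathsf T}=[m]$. In degree $0$ we have $\Delta_0=B_0^{\mathsf T}B_0=J_m$, which has eigenvalue $m$ once (the constant vector) and $0$ with multiplicity $m-1$. A $j$-face of $X_{m,k}$ chooses $j+1$ of the $k$ parts to be in degree $0$, and the remaining $k-j-1$ parts sit in degree $-1$, each contributing eigenvalue $m$. Among the $j+1$ chosen parts, suppose $s$ take eigenvalue $m$ and $j+1-s$ take eigenvalue $0$. The total eigenvalue is then $(s+k-j-1)m$ with multiplicity $\binom{k}{j+1}\binom{j+1}{s}(m-1)^{j+1-s}$, which is \eqref{eq:multipartite-hodge-spectrum}–\eqref{eq:multipartite-hodge-multiplicity}.

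The Betti numbers \eqref{eq:multipartite-betti} then follow from \eqref{eq:hodge-decomposition}. The eigenvalue $0$ occurs only when $k-j-1=0$ and $s=0$, giving multiplicity $(m-1)^k$ in degree $k-1$. The gap \eqref{eq:multipartite-hodge-gap} is read off directly. For $j<k-1$ the minimum positive value is $(k-j-1)m$, at $s=0$. For $j=k-1$ the minimum positive value is $m$, at $s=1$.

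For the boundary spectra I will use the Hodge decomposition. The nonzero spectrum of $\Delta_r$ is the disjoint union of the nonzero spectra of $B_r^{\mathsf T}B_r$ and $B_{r+1}B_{r+1}^{\mathsf T}$, and $B_r^{\mathsf T}B_r$ has the same nonzero spectrum as $B_rB_r^{\mathsf T}$. Writing $\mu_{r,a}$ for the multiplicity of $(k-a)m$ in the positive spectrum of $B_rB_r^{\mathsf T}$, this gives the recursion
\begin{equation*}
\mu_{r,a}+\mu_{r+1,a}=\binom{k}{r+1}\binom{r+1}{a}(m-1)^a .
\end{equation*}
Here I have matched $(s+k-r-1)m=(k-a)m$, so $s=r+1-a$. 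The base case is $\mu_{0,0}=1$ with $\pi_0=km$. The recursion then determines each $\mu_{r+1,\cdot}$ uniquely by induction on $r$.

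To verify the claimed formula \eqref{eq:boundary-multiplicity}, Pascal's rule gives $\binom{k-a-1}{r-a}+\binom{k-a-1}{r+1-a}=\binom{k-a}{r+1-a}$. Combined with $\binom{k}{a}\binom{k-a}{r+1-a}=\binom{k}{r+1}\binom{r+1}{a}$, this checks the recursion.

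The multiplicity $\mu_{r,a}$ vanishes unless $0\le a\le r$. So the squared singular values of $B_r$ range over $(k-a)m$ with $0\le a\le r$. Dividing by $N_X=mk$, the largest is $1$ (at $a=0$) and the smallest is $(k-r)/k$ (at $a=r$). This yields \eqref{eq:boundary-conditioning}.
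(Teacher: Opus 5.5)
Your proposal is correct, but it follows a different route from the paper in both halves. For the Hodge spectrum, the paper simply cites the complete-colorful spectrum of Duval--Klivans--Martin (their Equation~(32)) and specializes it. You instead derive it from additivity of the total Laplacian under joins, $\Delta^{X\ast Y}=\Delta^X\otimes 1+1\otimes\Delta^Y$. Your expected obstacle is harmless: writing $\partial=\partial_X\otimes1+\epsilon\otimes\partial_Y$ with $\epsilon$ the grading sign, $\epsilon$ anticommutes with $\partial_X$ and $\partial_X^{\mathsf T}$, so the cross terms cancel. This is the augmented join-spectrum formula the paper later quotes from Berry et al.\ for $Z_{m,k}$, so your version is more self-contained. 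For the boundary singular values, the paper works directly in the tensor-product model. It splits $\R^m=\R u\oplus W$, fixes the $a$ factors lying in $W$, and identifies the remaining boundary as contraction by $\sqrt m(1,\ldots,1)$ on $\Lambda^{r+1-a}\R^{k-a}$. That contraction has the single nonzero singular value $\sqrt{(k-a)m}$, with multiplicity $\binom{k-a-1}{r-a}$. You never analyse $B_r$ itself. You peel the boundary spectra off the already computed Hodge spectra, using the splitting of the nonzero spectrum of $\Delta_r$ into those of $B_r^{\mathsf T}B_r$ and $B_{r+1}B_{r+1}^{\mathsf T}$. You then confirm the closed form by Pascal's rule and uniqueness of the recursion's solution from $\mu_{0,0}=1$. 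Your route is purely spectral and needs only a binomial identity, but it depends on the full Hodge spectrum in every degree below $r$ and yields no singular vectors. The paper's route is independent of the Hodge spectrum, exhibits the singular subspaces, and explains structurally why the multiplicity factors as $\binom{k}{a}(m-1)^a\binom{k-a-1}{r-a}$. One small addition for the conditioning and gap claims: state explicitly that the extreme values are attained. The boundary multiplicities $\mu_{r,0}=\binom{k-1}{r}$ and $\mu_{r,r}=\binom{k}{r}(m-1)^r$ are positive for $0\le r\le k-1$ and $m\ge2$. The relevant Hodge multiplicities at $s=0$ and $s=1$ are likewise positive.
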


\begin{proof}
Equations \eqref{eq:multipartite-hodge-spectrum} and
\eqref{eq:multipartite-hodge-multiplicity} are the balanced specialization of
the complete-colorful spectrum in~\cite[Equation~(32)]{DuvalKlivansMartin2011Cellular}, grouped by the number
$s$ of constant directions.  Equations \eqref{eq:multipartite-betti} and
\eqref{eq:multipartite-hodge-gap} follow immediately.

Under the shifted tensor-product model for a join, write
$\R^m=\R u\oplus W$, where
$u=m^{-1/2}(1,\ldots,1)$, $\dim W=m-1$, $B_0u=\sqrt m$, and $B_0W=0$.
Fixing $a$ factors in $W$ leaves $q=k-a$ copies of the two-term complex
$\R u\xrightarrow{\sqrt m}\R$.  In degree $r$, the remaining boundary is
contraction by $\sqrt m(1,\ldots,1)\in\R^q$ on
$\Lambda^{r+1-a}\R^q$.  Its nonzero singular value is $\sqrt{(k-a)m}$ with
multiplicity $\binom{k-a-1}{r-a}$.  Choosing the $a$ factors and their
$W$-directions gives \eqref{eq:boundary-multiplicity}; the extremal values
$a=0,r$ give \eqref{eq:boundary-conditioning}.
\end{proof}

The face data and normalized gap are summarized by
\begin{equation}
  \sum_{t=0}^{k}f_{t-1}x^t=(1+mx)^k,
  \qquad
  f_j=\binom{k}{j+1}m^{j+1},
  \qquad
  \lambda_{\min}^{+}\!\left(\frac{\Delta_j}{N_X}\right)\geq\frac1k.
  \label{eq:multipartite-face-counts}
\end{equation}
Thus $X_{m,k}$ has $(m+1)^k$ faces including the empty face, while its exact
boundary log pseudodeterminants are
\begin{equation}
  \ell_r(X_{m,k})
  =\sum_{a=0}^{r}\chg{\mu_{r,a}}\log\!\bigl((k-a)m\bigr).
  \label{eq:multipartite-exact-logdet}
\end{equation}
Moreover, $X_{m,k}^{(i)}$ is the independence complex of a rank-$(i+1)$
truncation of a partition matroid and is therefore shellable~\cite{Bjorner1992MatroidShellability}. The discussion following
Corollary~4.4 of~\cite{dmtcs:2909} then supplies the required torsion-free
spanning trees. With~\eqref{eq:multipartite-homology}, this verifies the
hypotheses of~\eqref{eq:critical-equals-tree} for $0\leq i\leq k-2$ and gives
\begin{equation}
  \log|K_i(X_{m,k})|
  =\sum_{r=0}^{i+1}(-1)^{i+1-r}
    \sum_{a=0}^{r}\chg{\mu_{r,a}}\log\!\bigl((k-a)m\bigr).
  \label{eq:multipartite-critical-order}
\end{equation}

\subsection{A torsion-bearing benchmark}
\label{subsec:torsion-clique-benchmark}

Let $T$ be the explicit projective-plane tree in \cref{ex:rp2-tree}, and let
$Y:=\operatorname{sd}T$ be its barycentric subdivision.  Its $31$ vertices are
the nonempty faces of $T$, and its simplices are chains of such faces.  Hence
$Y$ is the clique complex of the comparability graph $\Gamma_T$, in which two
faces are adjacent when one contains the other.  Define
\begin{equation}
  Z_{m,k}:=X_{m,k}\ast Y
  =\Cl\!\left(G_{m,k}\vee\Gamma_T\right),
  \qquad N_Z:=mk+31,
  \qquad d:=\dim Z_{m,k}=k+2,
  \label{eq:torsion-join-benchmark}
\end{equation}
where the graph join $\vee$ adds every edge between the two sets of vertices.  The
construction inherits complementary features from its two factors, as made
precise in the following proposition.

\begin{proposition}[Homology torsion and boundary gap]
\label{prop:torsion-join-benchmark}
For $m,k\geq2$,
\begin{equation}
  \wtH_q(Z_{m,k};\Z)
  \cong
  \begin{cases}
    (\Z/2\Z)^{(m-1)^k},&q=k+1,\\
    0,&q\neq k+1.
  \end{cases}
  \label{eq:torsion-join-homology}
\end{equation}
Consequently, the full complex $Z_{m,k}$ satisfies the certified-tree criterion
with ambient complex $X=Z_{m,k}$, and
\begin{equation}
  \log\left|\wtH_{d-1}(Z_{m,k};\Z)\right|
  =(m-1)^k\log2
  =\frac12\sum_{r=0}^{d}(-1)^{d-r}\ell_r(Z_{m,k}).
  \label{eq:torsion-join-log-order}
\end{equation}
There is also a fixed, precomputable constant $c_Y>0$ such that
\begin{equation}
  \chg{\zeta_{\min}^{+}}\!\left(
    \frac{B_r(Z_{m,k})}{\sqrt{N_Z}}
  \right)
  \geq\sqrt{\frac{c_Y}{N_Z}},
  \qquad 0\leq r\leq d.
  \label{eq:torsion-join-boundary-gap}
\end{equation}
\end{proposition}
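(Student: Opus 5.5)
The proof has three parts: homology of the join, the certified-tree criterion and log identity, and the boundary gap.

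\emph{Homology.} I will use the join Künneth formula for reduced homology,
\[
\wtH_{n+1}(A\ast B;\Z)\cong\bigoplus_{i+j=n}\wtH_i(A)\otimes\wtH_j(B)\;\oplus\bigoplus_{i+j=n-1}\Tor(\wtH_i(A),\wtH_j(B)).
\]
Take $A=X_{m,k}$ with homology \eqref{eq:multipartite-homology}, which is free, equal to $\Z^{(m-1)^k}$, and concentrated in degree $k-1$. Take $B=Y$, which has the homology of $\R P^2$: $\wtH_1(Y)=\Z/2\Z$ and the other groups vanish, since barycentric subdivision preserves topology. The Tor terms vanish because $\wtH(A)$ is free. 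The only surviving tensor term is $\Z^{(m-1)^k}\otimes\Z/2\Z$ with $n=k$, which gives \eqref{eq:torsion-join-homology}.

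\emph{Certified tree and log identity.} $Z_{m,k}$ is pure of dimension $k+2$, being a join of pure complexes of dimensions $k-1$ and $2$. Its rational homology vanishes in all degrees, so it is APC. With $X=T=Z_{m,k}$, condition (i) of \cref{prop:minimal-filling} is $\wtH_d(Z;\Q)=0$, and condition (ii) is $\wtH_{d-1}(Z;\Q)=0$. Hence $Z$ is a certified tree in itself. The lower groups $\wtH_q(Z;\Z)$ vanish for $q\le d-2=k$, so the correction-free identity \eqref{eq:certified-tree-root-free} applies with $k$ replaced by $d$. Combining it with the computed order $2^{(m-1)^k}$ gives \eqref{eq:torsion-join-log-order}.

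\emph{Boundary gap.} This is the main step. Under the shifted tensor-product model for joins used in the proof of \cref{thm:multipartite-hodge-spectrum}, the augmented chain complex of $A\ast B$ is the tensor product of the augmented complexes, up to a degree shift. Hence the total Laplacian satisfies
\[
\Delta^{A\ast B}=\Delta^{A}\otimes I+I\otimes\Delta^{B},
\]
with signs that do not affect $BB^{\mathsf T}+B^{\mathsf T}B$. Every eigenvalue of $\Delta^{Z}$ is therefore a sum $\lambda+\mu$ of eigenvalues of the factors' augmented Laplacians (including degree $-1$). The positive ones are at least $\min\{\lambda^+_{\min}(A),\lambda^+_{\min}(Y)\}$. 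The augmented Laplacians of a nonempty complex have no kernel in degree $-1$, and by \cref{thm:multipartite-hodge-spectrum} we have $\lambda^+_{\min}(A)\ge m\ge2$.

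Next I need the step from the Hodge spectrum to the boundary maps. Each positive squared singular value of $B_r$ is an eigenvalue of $L^{\down}_r$ restricted to $\im B_r^{\mathsf T}$, where it coincides with $\Delta_r$. So $(\zeta^+_{\min}(B_r))^2\ge\lambda^+_{\min}$ taken over all degrees of $\Delta^Z$. I then set $c_Y:=\min\{2,\min_j\lambda^+_{\min}(\Delta_j^{Y})\}$, where the inner minimum runs over all augmented degrees of $Y$. This constant is positive and can be computed once from the 31-vertex complex. Dividing by $N_Z$ gives \eqref{eq:torsion-join-boundary-gap}.

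The obstacle I anticipate is justifying the additive-spectrum statement in the augmented, sign-twisted setting, including the kernel. A kernel vector of $\Delta^{Z}$ pairs a zero mode of one factor with a zero mode of the other. For $A$ and $Y$ with augmentation, the zero modes are exactly the homology representatives, so this pairing only produces the torsion-free kernel and leaves the gap unaffected. I would check this carefully against the conventions used in the proof of \cref{thm:multipartite-hodge-spectrum}.
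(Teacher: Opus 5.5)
Your proposal is correct and follows essentially the same route as the paper. You use the join K\"unneth formula with vanishing Tor terms, then purity plus rational acyclicity to invoke the certified-tree criterion and the correction-free alternating identity, and finally the additive augmented join spectrum together with the Hodge splitting for the boundary gap. The only difference is in the gap step: the paper uses that the augmented Laplacians of $Y$ are positive definite, so $\Delta^{Z}$ has no kernel and every eigenvalue is at least $\lambda_{\min}(\Delta^{Y}_q)$ for some $q$. This makes your kernel worry and your appeal to the multipartite gap $m\geq 2$ unnecessary, though both are harmless.
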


\begin{proof}
The only nonzero reduced homology groups of the factors are
$\wtH_{k-1}(X_{m,k};\Z)\cong\Z^{(m-1)^k}$ and
$\wtH_1(Y;\Z)\cong\Z/2\Z$.  The integral join K\"unneth formula gives
\eqref{eq:torsion-join-homology}; its Tor terms vanish because the homology of
$X_{m,k}$ is free.  The join is pure and rationally acyclic, so its top
boundary is injective over $\Q$ and spans the rational $(d-1)$-cycle space.
Thus \cref{prop:minimal-filling} applies with the ambient complex equal to
$Z_{m,k}$, and \eqref{eq:certified-tree-root-free} gives
\eqref{eq:torsion-join-log-order}.

Since $Y$ is fixed and rationally acyclic, its augmented Hodge Laplacians are
positive definite.  Set
\begin{equation}
  c_Y:=\min\!\left\{1,\lambda_{\min}(\Delta_q^Y):-1\leq q\leq2\right\}>0.
  \label{eq:cY-definition}
\end{equation}
By the augmented join-spectrum formula~\cite[Corollary~1]{BerryEtAl2024QTDA}, each Hodge eigenvalue of $Z_{m,k}$ is a
sum containing an eigenvalue of $Y$, and is therefore at least $c_Y$.  The
orthogonal Hodge splitting places every positive squared singular value of
$B_r$ in the spectrum of $\Delta_{r-1}$; division by $N_Z$ proves
\eqref{eq:torsion-join-boundary-gap}.
\end{proof}

The explicit certified bound $c_Y\geq31^{-89}$ is proved in
\cref{app:join-resource-proof}.

Finally, take $m=2^k$ and $k\geq4$, and write
$L_{m,k}:=(m-1)^k\log2$ for the log torsion order in
\eqref{eq:torsion-join-log-order}.  For
$D_r=\binom{N_Z}{r+1}$, the inequality $k+3<N_Z/2$ makes $D_d$ the largest
relevant candidate dimension, and
\begin{equation}
  \frac{\max_{0\leq r\leq d}D_r}{L_{m,k}}
  =\frac{\binom{N_Z}{k+3}}{(m-1)^k\log2}
  \leq\frac{e^3}{\log2}\,m^3(2e)^k
  =N_Z^{O(1)}.
  \label{eq:torsion-join-candidate-ratio}
\end{equation}
Here $N_Z/(k+3)\leq m$, $\binom ns\leq(en/s)^s$, and
$m/(m-1)\leq2$.  To estimate $L_{m,k}$ to relative error
$0<\delta\leq1$, it suffices to use the exact value $\ell_0=\log N_Z$ and,
for $1\leq r\leq d$, estimate $\ell_r/D_r$ to additive error
\begin{equation}
  \varepsilon_{\mathrm n,r}
  :=\min\!\left\{1,\frac{2\delta L_{m,k}}{dD_r}\right\}.
  \label{eq:torsion-join-precision}
\end{equation}
The resulting estimate of $L_{m,k}$ then has error at most $\delta L_{m,k}$.  Because
$\Gamma_T$ is fixed, the graph has succinct adjacency access; assigning failure
probability $\nu/d$ per degree and combining
\eqref{eq:torsion-join-candidate-ratio},
\eqref{eq:torsion-join-boundary-gap}, and
the bound $S_{\ast,r}=\log N_Z$ from \cref{lem:integer-trace-bound},
\cref{thm:quantum-pseudologdet} gives boundary-block-encoding query complexity polynomial in
$N_Z$, $\delta^{-1}$, and $\log(1/\nu)$.  This is a relative-error benchmark
for the log torsion order; \cref{sec:interpretation} states the general
limitations.

\begin{corollary}[Resource comparison with explicit boundary construction]
\label{cor:join-resource-comparison}
Let $m=2^k$, $k\geq4$, $N=N_Z$, and
$L_{m,k}:=(m-1)^k\log2$.  For $0\leq r\leq d$, set
\[
 D_r=\binom{N}{r+1},
 \qquad
 R_{m,k}:=\max_{0\leq r\leq d}\frac{D_r}{L_{m,k}}.
\]
Assume the access conditions of \cref{ass:block-trace} for the
zero-extended boundary matrices, with
$\alpha_r=\sqrt N$ and $\gamma_r=\sqrt{c_Y/N}$.
For $0<\delta\leq1$ and $0<\nu<1$, there is a quantum algorithm
returning $\widehat L$ such that
\[
 \Pr\!\left[
 |\widehat L-L_{m,k}|\leq\delta L_{m,k}
 \right]\geq1-\nu,
\]
using a total of
\begin{equation}
 Q_B=\widetilde O\!\left(
 \sqrt{\frac{N}{c_Y}}
 \left(d+\frac{d^2R_{m,k}}{\delta}\right)
 \log\frac{2d}{\nu}\right)
 \label{eq:join-resource-queries}
\end{equation}
queries to the boundary block encodings in degrees $1,\ldots,d$.
The degree-zero term $\ell_0(Z_{m,k})=\log N$ is evaluated
classically.  The number of adjacency-oracle queries is $O(d^2Q_B)$.

If each multipartite vertex is supplied with its part label and its
index within that part, adjacency can be computed reversibly from
these labels and the fixed graph $\Gamma_T$.  Under the synthesis
assumptions of \cref{subsec:implementation-errors}, the algorithm uses
$\widetilde O(NQ_B)$ quantum gates, excluding the classical computation
of polynomial coefficients and QSVT phases.

The top boundary matrix satisfies
\begin{equation}
 f_d(Z_{m,k})=60m^k,
 \qquad
 \operatorname{nnz}\!\bigl(B_d(Z_{m,k})\bigr)
 =60(k+3)m^k
 =N^{\Theta(\log N)}.
 \label{eq:join-explicit-boundary-size}
\end{equation}
Thus explicitly listing its nonzero entries has superpolynomial cost,
while the stated quantum query and gate bounds are polynomial in $N$
for fixed $\delta$ and $\nu$.
\end{corollary}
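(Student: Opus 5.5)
The plan has three parts: the error and failure accounting, the query count, and the size statements about $B_d(Z_{m,k})$.

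\emph{Accuracy and failure probability.} We will invoke \cref{thm:quantum-pseudologdet} once in each degree $1\le r\le d$, with $\alpha_r=\sqrt N$, $\gamma_r=\sqrt{c_Y/N}$, failure probability $\nu/d$, integer bound $S_{\ast,r}=a_r=\log N$ from \cref{lem:integer-trace-bound}, and normalized precision $\varepsilon_{\mathrm n,r}$ as in \eqref{eq:torsion-join-precision}. The required gap promise is exactly \eqref{eq:torsion-join-boundary-gap}. Multiplying each estimate by $D_r$ gives $\ell_r$ to additive error at most $2\delta L_{m,k}/d$. We then combine these estimates with the exact $\ell_0=\log N$ through \eqref{eq:torsion-join-log-order}, which carries the factor $\tfrac12$. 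The triangle inequality bounds the total error by $\tfrac12\cdot d\cdot 2\delta L/d=\delta L$. To absorb arithmetic error we shrink the constants slightly, and a union bound gives success probability at least $1-\nu$.

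\emph{Query count.} Here $s_r=\log N+\tfrac12\log(N/c_Y)$, so $\sqrt{s_r}/\gamma_r=\widetilde O(\sqrt{N/c_Y})$. Since $S_\ast=\log N$ and $\varepsilon_{\mathrm n,r}\le1$, the term $1/\sqrt{\varepsilon_{\mathrm n}}$ is dominated by $1/\varepsilon_{\mathrm n}$. We also have $1/\varepsilon_{\mathrm n,r}\le 1+dD_r/(2\delta L)\le 1+dR_{m,k}/\delta$. Each degree therefore costs $\widetilde O\bigl(\sqrt{N/c_Y}(1+dR/\delta)\log(2d/\nu)\bigr)$, and summing over the $d$ degrees gives \eqref{eq:join-resource-queries}. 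For adjacency queries, each block-encoding query makes two membership calls, each using $O(d^2)$ adjacency queries; this gives $O(d^2Q_B)$. For gates, we use \eqref{eq:boundary-gate-cost}. With labelled vertices, reversible adjacency costs $\mathrm{polylog}(N)$ for the multipartite pairs and $O(1)$ for the fixed graph $\Gamma_T$, so a membership test costs $\widetilde O(d^2)=\widetilde O(\mathrm{polylog}\,N)$. Then $G_B=\widetilde O(N)$. Dicke preparation, reflections and phases add at most $\widetilde O(N)$ per application, so \eqref{eq:normalized-gates} gives $\widetilde O(NQ_B)$.

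\emph{Size of $B_d(Z_{m,k})$.} Top faces of a join are unions of top faces of the factors, so $f_d=f_{k-1}(X_{m,k})f_2(Y)$. By \eqref{eq:multipartite-face-counts}, $f_{k-1}(X_{m,k})=m^k$. The top faces of $Y=\operatorname{sd}T$ are maximal chains: each of the $10$ triangles contributes $3!=6$ flags, so $f_2(Y)=60$. Each column of $B_d$ has $d+1=k+3$ nonzero entries, which gives $60(k+3)m^k$. With $m=2^k$ we get $N=k2^k+31$, so $\log N=\Theta(k)$, and $m^k=2^{k^2}=N^{\Theta(\log N)}$. The step needing most care is this last asymptotic, namely showing that $k^2=\Theta(\log^2N)$; it follows from $k\le\log_2 N\le 2k$ for $k\ge4$. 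Everything else is bookkeeping on top of \cref{thm:quantum-pseudologdet} and \cref{prop:torsion-join-benchmark}.
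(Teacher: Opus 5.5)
Your proposal is correct and follows the paper's proof essentially step for step. The shared steps are: a degreewise application of \cref{thm:quantum-pseudologdet} with $\varepsilon_{\mathrm n,r}$ from \eqref{eq:torsion-join-precision}, failure probability $\nu/d$ and $S_{\ast,r}=\log N$; the bound $\varepsilon_{\mathrm n,r}^{-1}\le 1+dD_r/(2\delta L_{m,k})$ summed over the $d$ degrees; $O(d^2)$ adjacency queries per boundary call; and $f_d=60m^k$ with $\log N=\Theta(k)$.

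One small correction concerns your gate count. A membership test on the $N$-qubit occupation register cannot cost only $\widetilde O(d^2)$ gates, because the at most $d+1$ selected labels must first be extracted by a reversible scan costing $\widetilde O(N)$, as in the paper. This leaves $G_B=\widetilde O(N)$ and the $\widetilde O(NQ_B)$ bound unchanged. Also, the closing polynomiality claim needs $R_{m,k}=N^{O(1)}$ from \eqref{eq:torsion-join-candidate-ratio}, which you should cite explicitly.
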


The proof is given in \cref{app:join-resource-proof}.  This comparison
concerns explicit boundary construction, not all classical algorithms:
given $m$ and $k$, the closed form $L_{m,k}=(m-1)^k\log2$ is itself
classically efficient to evaluate, and the entire graph can be read
with $O(N^2)$ adjacency queries.

\subsection{A block-local bit-oracle separation}
\label{subsec:matched-query-separation}

The closed-form family above is a scaling benchmark.  For a block-local bit-oracle lower
bound, recall $Y=\Cl(\Gamma_T)$ from \cref{subsec:torsion-clique-benchmark}, fix a
spanning tree $S\subseteq\Gamma_T$, and set $\Gamma_0=S$, $\Gamma_1=\Gamma_T$.  For
$x\in\{0,1\}^n$, put
$W_x:=\Cl(\Gamma_{x_1}\vee_{\mathrm v}\cdots\vee_{\mathrm v}\Gamma_{x_n})$ and
$w(x):=\sum_a x_a$.  Here $\vee_{\mathrm v}$ denotes the one-vertex wedge,
obtained by identifying one chosen vertex from each graph; it is distinct from
the graph join $\vee$ in \eqref{eq:torsion-join-benchmark}.  A clique lies in one
wedge summand apart from the shared vertex, so $W_x$ is a clique complex with
$\wtH_1(W_x;\Z)\cong(\Z/2\Z)^{w(x)}$.

The complex $Y$ has $60$ triangles and is its unique two-tree.  The matrix-tree
recurrence and \cref{ex:rp2-tree} give
$L_Y:=\ell_2(Y)=\log(4\tau_1(Y))>0$.  Use its $60n$ block-local triangles and
$90n$ block-local edges as candidate spaces, retaining a triangle block precisely
when $x_a=1$.  With zero padding,
\begin{equation}
  B_2(W_x)=\operatorname{diag}(x)\otimes B_2(Y),
  \qquad
  \frac{\ell_2(W_x)}{60n}=\frac{L_Y}{60}\frac{w(x)}n,
  \qquad
  \log|\Tor\wtH_1(W_x;\Z)|=w(x)\log2.
  \label{eq:matched-gadget-identities}
\end{equation}
Both algorithms receive the bit oracle
$O_x\ket{a,b}=\ket{a,b\mathbin\oplus x_a}$; here it is equivalent, up to
constant overhead, to membership for a block-local triangle.

\begin{proposition}[Quadratic block-local query separation]
\label{prop:matched-query-separation}
For $0<\varepsilon\leq1/12$, there is a quantum algorithm that estimates
the candidate-normalized log pseudodeterminant $\ell_2(W_x)/(60n)$ to additive
error $(L_Y/60)\varepsilon$, with success probability at least
$2/3$, using $O(\varepsilon^{-1})$ coherent queries to $O_x$.
Every randomized classical algorithm using classical queries to $O_x$ and
the same guarantee requires
$\Omega(\min\{n,\varepsilon^{-2}\})$ queries.  Hence
$\varepsilon=\Theta(n^{-1/2})$ gives
$O(\sqrt n)$ quantum queries versus $\Omega(n)$ classical queries.
\end{proposition}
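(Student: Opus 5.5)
By \eqref{eq:matched-gadget-identities}, estimating $\ell_2(W_x)/(60n)$ to additive error $(L_Y/60)\varepsilon$ is exactly estimating the Hamming fraction $w(x)/n$ to additive error $\varepsilon$. The proof therefore reduces to the quantum and classical query complexities of approximate counting, in two independent parts.

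\textbf{Upper bound.} The plan is to bypass the generic QSVT estimator, whose gap and normalization factors are irrelevant here, and use a tailored amplitude-estimation routine. First, apply a Hadamard-type uniform superposition to prepare $n^{-1/2}\sum_a\ket{a}\ket{0}$. One query to $O_x$ then yields the state
\[
 n^{-1/2}\sum_a\ket{a}\ket{x_a},
\]
whose probability of the flag being $1$ is $w(x)/n$. Amplitude estimation with parameter $m$, as in \eqref{eq:probability-dependent-AE} (\cite[Theorem~12]{BrassardEtAl2002Amplitude}), uses $O(m)$ queries to $O_x$ and $O_x^\dagger=O_x$. It achieves error at most $2\pi/m+\pi^2/m^2\leq\varepsilon$ with probability at least $8/\pi^2>2/3$ once $m=\Theta(\varepsilon^{-1})$. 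Multiplying the output by the known classical constant $L_Y/60$ gives the stated estimate with $O(\varepsilon^{-1})$ queries. I would also remark that this is consistent with the block-encoding picture. The block encoding of $\operatorname{diag}(x)\otimes B_2(Y)$ is built from one $O_x$ query, and a fixed-degree QSVT on $B_2(Y)$'s known spectrum could replace the flag. The direct flag construction is simpler, so I would use it.

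\textbf{Lower bound.} The plan is a reduction from distinguishing biased strings. Assume $n$ exceeds a suitable constant multiple of $\varepsilon^{-2}$; otherwise replace $\varepsilon$ by $\varepsilon'\approx n^{-1/2}\geq\varepsilon$, noting that a more accurate algorithm also solves the coarser problem, to get the $\Omega(n)$ branch. Consider uniformly random strings $x$ of Hamming weight $n/2$ versus weight $n/2+3\varepsilon n$. Any algorithm meeting the additive-$\varepsilon$ guarantee with probability $2/3$ distinguishes these two cases; this is where $\varepsilon\leq1/12$ is used, ensuring $1/2+3\varepsilon\leq3/4$ so that both weights are valid. Against a random permutation of the positions, a $q$-query classical algorithm sees essentially a sample without replacement. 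The sampled ones-counts then follow hypergeometric distributions with means differing by $3\varepsilon q$ and standard deviation $\Theta(\sqrt q)$. Bounding the total-variation distance, for instance via KL divergence or Hellinger distance between hypergeometric (or, for $q\le n/2$, binomial-comparable) laws, by $O(\varepsilon\sqrt q)$ plus a without-replacement correction, forces $q=\Omega(\varepsilon^{-2})$ when $q\leq n/2$. Adaptivity does not help, by Yao's principle combined with symmetrization over the permutation. Together the two regimes give $\Omega(\min\{n,\varepsilon^{-2}\})$. Setting $\varepsilon=\Theta(n^{-1/2})$ then yields $O(\sqrt n)$ quantum queries versus $\Omega(n)$ classical queries.

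The main obstacle is making the classical lower bound rigorous in the regime where $q$ is comparable to $n$, where without-replacement effects matter. I would handle this either by citing the standard randomized query lower bound for approximate counting or the Hamming-weight gap problem, or by capping at $q\leq n/2$ and using the hypergeometric-versus-binomial comparison, which loses only constants.
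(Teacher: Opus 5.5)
Your proposal is correct. The classical lower bound is the same argument the paper gives: Yao's principle applied to uniformly random strings of two nearby central Hamming weights. The paper states this in a single sentence. You supply the missing details: adaptive distinct queries to a uniformly random fixed-weight string are just draws without replacement, and the dichotomy $q>n/2$ versus $q\leq n/2$ yields the $\min\{n,\varepsilon^{-2}\}$ form.

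The quantum upper bound follows a genuinely different route. You use $\ell_2(W_x)/(60n)=(L_Y/60)\,w(x)/n$ to reduce the task to approximate counting. You then run amplitude estimation directly on the one-query state $n^{-1/2}\sum_a\ket{a}\ket{x_a}$ and rescale by the known constant $L_Y/60$.

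The paper stays inside its own log-pseudodeterminant estimator:
\begin{itemize}
\item One $O_x$ query, composed with a fixed encoding of $B_2(Y)$, gives a block encoding of $\operatorname{diag}(x)\otimes B_2(Y)$ with $\alpha=\sqrt6$.
\item The nonzero singular values of $B_2(W_x)/\sqrt6$ lie in the fixed finite set coming from $B_2(Y)$. So a fixed-degree even polynomial can interpolate $b_\alpha$ at $0$ and $h$ at those points.
\item A constant rescaling enforces the weighted QSVT bound.
\item The bottom-left circuit on $\ket{\Phi_{60n}}$, with amplitude estimation, then costs $O(1)$ queries per application and $O(\varepsilon^{-1})$ applications.
\end{itemize}

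Your route is shorter and needs no interpolation or weighted-norm argument. It also makes transparent that the quantum side of the separation is exactly Brassard--H{\o}yer--Mosca--Tapp approximate counting, so the advantage is inherited from counting rather than from spectral structure.

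The paper's route instead shows that its general spectral algorithm itself attains the $O(\varepsilon^{-1})$ bound. That algorithm is applied to the boundary block encoding and never invokes the Hamming-weight reduction. This is the sense of the abstract's claim about a tailored amplitude-estimation routine for the log pseudodeterminant.

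Both routes use prior knowledge of the fixed gadget $Y$: you through the constant $L_Y$, the paper through the interpolation nodes. Neither is more general in that respect.
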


\begin{proof}
The wedge decomposition and block diagonal boundary prove
\eqref{eq:matched-gadget-identities}.  A one-query encoding of
$\operatorname{diag}(x)$, composed with a fixed block encoding of $B_2(Y)$,
implements the required boundary block encoding.  Since $B_2(Y)$ is fixed, the possible
nonzero singular values of $B_2(W_x)/\sqrt6$ form a fixed finite set.  Hence one
may choose a fixed-degree real even polynomial that takes the complementary
target value $b_\alpha$ at zero and the value $h(x)$ from
\eqref{eq:complementary-log-target} at every such nonzero singular value.  A
constant rescaling enforces the weighted QSVT bound.  The bottom-left
construction, with this fixed rescaling undone in classical postprocessing,
and amplitude estimation then use $O(\varepsilon^{-1})$ coherent queries.  Here
$\alpha=\sqrt6$, $\gamma=\sqrt{c_Y/6}$, and the integer-matrix
bound permits $S_\ast=\log6$, all independently of $n$ and $x$.  By the second
identity in \eqref{eq:matched-gadget-identities}, Yao's principle shows that
distinguishing nearby central Hamming weights requires
$\Omega(\min\{n,\varepsilon^{-2}\})$ classical queries, which proves the claimed
lower bound.
\end{proof}

\begin{remark}[The constants]
\label{rem:gadget-constants}
For the gadget, $\|B_2(Y)\|=\sqrt6$ and
$\chg{\zeta_{\min}^{+}}(B_2(Y))=\sqrt{c_Y}=0.312869\ldots$, so
\cref{ass:block-trace} holds with $\alpha=\sqrt6$, $D=60n$, normalized gap
$\gamma=\sqrt{c_Y/6}=0.127728\ldots$, and $S_\ast=\log6$.  The quantities
$\alpha$, $\gamma$, and $S_\ast$ are independent of $n$ and $x$.
\end{remark}
Under the promise $n/3\leq w(x)\leq2n/3$, the stated accuracy is relative error at
most $3\varepsilon$ in the log torsion order.  For fixed $\varepsilon$ both query
bounds are constant: the separation uses shrinking precision and a promise-adapted
block-local candidate basis, not all global subsets.  No gate-complexity separation is
claimed.

\section{Interpretation and scope}\label{sec:interpretation}

As summarized in \cref{fig:quantum-pipeline}, the quantum routine returns an
estimate $\widehat{\ell_r(X)/D_r}$ of the boundary log pseudodeterminant
normalized by the candidate-space dimension $D_r$.  Internally, amplitude
estimation estimates a success probability, which the known rescaling and
offset subtraction convert into this normalized quantity.  Defining
\[
  \widehat\ell_r
  :=
  D_r\,\widehat{\ell_r(X)/D_r},
\]
the resulting estimates are combined using the identities in
\cref{tab:invariant-map} and \cref{sec:matrix-tree}.

Their arithmetic interpretation depends on the stated topological hypotheses.
The lower-homology orders $t_q$ and, for the certified-tree branch, the
certificate itself are supplied outside the estimator's guarantee.  Even in
the critical-group branch, the output is not generally a homology-torsion
order: by \cref{prop:critical-exact-sequence}, $|K_i(X)|$ also contains the
index of the image of the upper Laplacian in the integral boundaries.  The
projective-plane example, where $|K_1(T)|=4$ but
$|\wtH_1(T;\Z)|=2$, makes this distinction explicit.  Individual boundary
log pseudodeterminants and critical-group orders can also depend on the
simplicial representation even when integral homology is unchanged.

\paragraph{Error propagation and precision.}
If
\[
  \left|
    \widehat{\ell_r(X)/D_r}-\ell_r(X)/D_r
  \right|
  \leq\varepsilon_{\mathrm n,r},
\]
then $\widehat\ell_r$ has additive error
$\varepsilon_r=D_r\varepsilon_{\mathrm n,r}$.  If $\log t_q$ is supplied
with additive error $\delta_q$, then, with $\ell_0$ inserted exactly,
\begin{equation}
  |\widehat{\log\tau_k}-\log\tau_k|
  \leq
  \sum_{r=1}^{k}\varepsilon_r
  +2\sum_{q=0}^{k-2}\delta_q.
  \label{eq:interpretation-error-propagation}
\end{equation}
The same bound applies to the critical-group logarithm after setting
$k=i+1$, and half of the right-hand side applies to the certified-tree
homology logarithm.  Cancellation among alternating terms can make the
target small without reducing this worst-case error, so relative accuracy
in each $\ell_r$ does not imply relative accuracy in the final result.

For a positive order $M$, additive error $\varepsilon$ in $\log M$ confines
$M$ to a factor $e^{\pm\varepsilon}$.  For $M>1$, relative error $\delta$ in
$\log M$ confines the corresponding order only to
$[M^{1-\delta},M^{1+\delta}]$.  Exact recovery by exponentiation and rounding
generally requires additive log error on the scale of $1/M$.  Even an exact
order does not determine the invariant factors of the group, which require
additional integer-lattice information.

These guarantees remain conditional on the block-encoding,
state-preparation, singular-gap, normalized-sum-bound, and implementation
assumptions of \cref{ass:block-trace}.  Boundary-query complexity alone does
not imply the same quantum gate complexity; the additional implementation
costs are recorded in \eqref{eq:normalized-gates}.

\paragraph{What the query separation shows.}
For the candidate-normalized target $\ell_2(W_x)/(60n)$,
\cref{prop:matched-query-separation} gives
$O(\varepsilon^{-1})$ coherent bit-oracle queries, whereas every randomized
classical algorithm with classical access to the same bit oracle and the same
guarantee requires $\Omega(\min\{n,\varepsilon^{-2}\})$ queries.  These bounds
give a quadratic separation in the shrinking-precision regime, for example
$O(\sqrt n)$ quantum queries versus $\Omega(n)$ classical queries when
$\varepsilon=\Theta(n^{-1/2})$.  At fixed $\varepsilon$, both query bounds are
constant.  This comparison counts queries to an oracle that reveals which
fixed local blocks are present.  It does not establish an advantage in total
gate complexity or for the all-subsets clique encoding.

The family $Z_{m,k}$ serves a different purpose.  Its polynomial query and,
under the stated implementation assumptions, gate bounds show the resource
scaling obtained by avoiding explicit construction of a boundary matrix with
$N^{\Theta(\log N)}$ entries.  Because its answer is also available in closed
form, this is a scaling benchmark rather than evidence of general classical
hardness.

\paragraph{Filtrations.}
After the degree-wise rescaling above, the routine yields estimates of the
levelwise summary
\begin{equation}
  t\longmapsto
  \bigl(\ell_0(X_t),\ldots,\ell_{i+1}(X_t)\bigr).
  \label{eq:levelwise-summary}
\end{equation}
The algebraic hypotheses and correction data must be supplied at every level
where an arithmetic interpretation is claimed.  These scalar summaries do not
encode the maps induced by $X_s\hookrightarrow X_t$, so they do not produce an
integral persistence barcode.  No stability or noise-robustness bound for the
spectral curves is proved here.
\section{Conclusion}\label{sec:conclusion}

Quantum algorithms for topological data analysis have largely focused on the
kernel of the Hodge Laplacian and therefore on Betti numbers. This work shows
that the nonzero boundary spectrum is also an algorithmically accessible
source of arithmetic information. The quantum step estimates normalized
boundary log pseudodeterminants without a separate rank input, while
matrix-tree identities provide the conditions under which these spectral
quantities acquire a torsion-sensitive interpretation. Keeping these two
steps separate makes the scope of the method transparent and allows the same
spectral primitive to support several topological applications.

More broadly, our results clarify what is required for meaningful quantum
savings in this setting. Large chain spaces alone are in general insufficient:
meaningful efficiency and interpretation guarantees also depend on coherent
access, spectral gaps, precision requirements, and, where appropriate, the
relevant topological certificates. Our examples show that these requirements
are compatible with nontrivial torsion and can yield savings from avoiding
explicit boundary-matrix construction and, in a controlled oracle setting, a
quantum query advantage. Extending this framework to less structured and
data-derived complexes, together with matched classical comparisons and robust
behavior across filtrations, is a natural next step toward torsion-sensitive
quantum topological data analysis.

\section*{Acknowledgements}
\addcontentsline{toc}{section}{Acknowledgements}

\noindent \textbf{DT} acknowledges the following support. This work was supported by the Dutch National Growth Fund (NGF), as part of the Quantum Delta NL program.

\noindent \textbf{AIL} acknowledges support from the DFG under Germany's Excellence Strategy - EXC-2123 QuantumFrontiers-2 - 390837967, the Quantum Valley Lower Saxony, the BMBF project Quics, the QuantERA project ResourceQ, and the Research Council of Lithuania under the Program ‘University Excellence Initiatives’ of the Ministry of Education, Science and Sports of the Republic of Lithuania (Measure No. 12-001-01-01-01 ‘Improving the Research and Study Environment’), Project No. S-A-UEI-23-11.

\noindent \textbf{MYR} is supported by the project Divide and Quantum (Project No.~1389.20.241) of the research programme NWA-ORC, which is (partly) financed by the Dutch Research Council (NWO).

\noindent \textbf{All authors} are grateful to the organizers of the CQT Quantum Hackamonth at the National University of Singapore for providing the opportunity to initiate this project. Special thanks to Alessandro Luongo for useful discussions.

\noindent \textbf{AI use statement.} The authors used free and paid versions
of GPT, Claude, and Grammarly, and most notably Fable 5.0 and GPT-5.6 Sol, for
brainstorming, finding references, checking notation and formula indices,
proofreading, producing figures, and exploring alternative proofs. GPT-5.6 Sol
also assisted with adapting the appendix-only squared-amplitude alternative to
the paper's existing framework.

We emphasize that these tools were used interactively and under targeted human
direction, not as autonomous contributors or substitutes for scientific
judgment. The work resulted from months of human collaboration, discussion,
and exchange of ideas, conducted both in person and online.

All authors independently reviewed and approved the manuscript and take full
responsibility for every result and claim, including those developed with AI
assistance.

\clearpage

\bibliographystyle{plain}
\bibliography{bibliography}

@article{apers2023,
  author        = {Apers, Simon and Gribling, Sander and Sen, Sayantan and Szab{\'o}, D{\'a}niel},
  title         = {A (Simple) Classical Algorithm for Estimating {B}etti Numbers},
  journal       = {Quantum},
  volume        = {7},
  pages         = {1202},
  year          = {2023}
}

@phdthesis{Gribling2019Thesis,
  author = {Gribling, Sander Jan},
  title  = {Applications of Optimization to Factorization Ranks
            and Quantum Information Theory},
  school = {Tilburg University},
  year   = {2019},
  note   = {CentER Dissertation Series, volume 603},
  doi    = {10.26116/center-lis-1925},
  url    = {https://ir.cwi.nl/pub/28887/Thesis_S_Gribling.pdf}
}

@inproceedings{DuvalKlivansMartin2011FPSAC,
  author    = {Duval, Art M. and Klivans, Caroline J.
               and Martin, Jeremy L.},
  title     = {Critical Groups of Simplicial Complexes},
  booktitle = {23rd International Conference on Formal Power Series
               and Algebraic Combinatorics (FPSAC 2011)},
  series    = {Discrete Mathematics and Theoretical Computer
               Science Proceedings},
  volume    = {AO},
  pages     = {269--280},
  year      = {2011},
  doi       = {10.46298/dmtcs.2909},
  url       = {https://dmtcs.episciences.org/2909}
}

@inproceedings{Rudolph2026,
  author    = {Rudolph, Dorian},
  title     = {Towards a Universal Gateset for {QMA$_1$}},
  booktitle = {51st International Symposium on Mathematical
               Foundations of Computer Science (MFCS 2026)},
  series    = {Leibniz International Proceedings in Informatics},
  volume    = {386},
  pages     = {98:1--98:19},
  publisher = {Schloss Dagstuhl -- Leibniz-Zentrum f{\"u}r Informatik},
  year      = {2026},
  doi       = {10.4230/LIPIcs.MFCS.2026.98},
  url       = {https://drops.dagstuhl.de/entities/document/10.4230/LIPIcs.MFCS.2026.98}
}

@article{QueffelecZarouf2019,
  author        = {Queff{\'e}lec, Herv{\'e} and Zarouf, Rachid},
  title         = {On {Bernstein}'s Inequality for Polynomials},
  journal       = {Analysis and Mathematical Physics},
  volume        = {9},
  pages         = {1181--1207},
  year          = {2019},
  doi           = {10.1007/s13324-019-00294-x},
  eprint        = {1903.10801},
  archivePrefix = {arXiv},
  url           = {https://arxiv.org/abs/1903.10801}
}

@article{BerryEtAl2024QTDA,
  author        = {Berry, Dominic W. and Su, Yuan and Gyurik, Casper and King, Robbie and Basso, Joao and Del Toro Barba, Alexander and Rajput, Abhishek and Wiebe, Nathan and Dunjko, Vedran and Babbush, Ryan},
  title         = {Analyzing Prospects for Quantum Advantage in Topological Data Analysis},
  journal       = {PRX Quantum},
  volume        = {5},
  number        = {1},
  pages         = {010319},
  year          = {2024},
  doi           = {10.1103/PRXQuantum.5.010319},
  eprint        = {2209.13581},
  archivePrefix = {arXiv},
  primaryClass  = {quant-ph}
}

@article{https://doi.org/10.1112/S0024609397003305,
  author        = {Biggs, Norman},
  title         = {Algebraic Potential Theory on Graphs},
  journal       = {Bulletin of the London Mathematical Society},
  volume        = {29},
  number        = {6},
  pages         = {641--682},
  year          = {1997},
  doi           = {10.1112/S0024609397003305}
}

@article{BoissonnatMaria2019,
  author        = {Boissonnat, Jean-Daniel and Maria, Cl{\'e}ment},
  title         = {Computing Persistent Homology with Various Coefficient Fields in a Single Pass},
  journal       = {Journal of Applied and Computational Topology},
  volume        = {3},
  number        = {1--2},
  pages         = {59--84},
  year          = {2019},
  doi           = {10.1007/s41468-019-00025-y}
}

@article{CarlssonEtAl2008,
  author        = {Carlsson, Gunnar and Ishkhanov, Tigran and de Silva, Vin and Zomorodian, Afra},
  title         = {On the Local Behavior of Spaces of Natural Images},
  journal       = {International Journal of Computer Vision},
  volume        = {76},
  number        = {1},
  pages         = {1--12},
  year          = {2008},
  doi           = {10.1007/s11263-007-0056-x}
}

@article{crichigno2024,
  author        = {Crichigno, Marcos and Kohler, Tamara},
  title         = {Clique Homology is {QMA$_1$}-Hard},
  journal       = {Nature Communications},
  volume        = {15},
  pages         = {9846},
  year          = {2024}
}

@article{DuvalKlivansMartin2011Cellular,
  author        = {Duval, Art M. and Klivans, Caroline J. and Martin, Jeremy L.},
  title         = {Cellular Spanning Trees and {L}aplacians of Cubical Complexes},
  journal       = {Advances in Applied Mathematics},
  volume        = {46},
  number        = {1--4},
  pages         = {247--274},
  year          = {2011},
  doi           = {10.1016/j.aam.2010.05.005},
  eprint        = {0908.1956},
  archivePrefix = {arXiv},
  primaryClass  = {math.CO}
}

@article{dmtcs:2909,
  author        = {Duval, Art M. and Klivans, Caroline J. and Martin, Jeremy L.},
  title         = {Critical Groups of Simplicial Complexes},
  journal       = {Annals of Combinatorics},
  volume        = {17},
  number        = {1},
  pages         = {53--70},
  year          = {2013},
  doi           = {10.1007/s00026-012-0168-z},
  eprint        = {1101.3981},
  archivePrefix = {arXiv},
  primaryClass  = {math.CO}
}

@article{duval2008simplicialmatrixtreetheorems,
  author        = {Duval, Art M. and Klivans, Caroline J. and Martin, Jeremy L.},
  title         = {Simplicial Matrix-Tree Theorems},
  journal       = {Transactions of the American Mathematical Society},
  volume        = {361},
  number        = {11},
  pages         = {6073--6114},
  year          = {2009},
  doi           = {10.1090/S0002-9947-09-04898-3},
  eprint        = {0802.2576},
  archivePrefix = {arXiv},
  primaryClass  = {math.CO}
}

@article{Frosini2013,
  author        = {Frosini, Patrizio},
  title         = {Stable Comparison of Multidimensional Persistent Homology Groups with Torsion},
  journal       = {Acta Applicandae Mathematicae},
  volume        = {124},
  number        = {1},
  pages         = {43--54},
  year          = {2013},
  doi           = {10.1007/s10440-012-9769-0}
}

@article{gyurik2022,
  author        = {Gyurik, Casper and Cade, Chris and Dunjko, Vedran},
  title         = {Towards Quantum Advantage via Topological Data Analysis},
  journal       = {Quantum},
  volume        = {6},
  pages         = {855},
  year          = {2022}
}

@article{Hayakawa_2022,
  author        = {Hayakawa, Ryu},
  title         = {Quantum Algorithm for Persistent {B}etti Numbers and Topological Data Analysis},
  journal       = {Quantum},
  volume        = {6},
  pages         = {873},
  year          = {2022},
  doi           = {10.22331/q-2022-12-07-873}
}

@article{torsion_in_med,
  author        = {Li, Zhen and Qi, Mingming and Huang, Juyuan and Zhang, Wei and Tan, Xu and Chen, Yifan},
  title         = {Geometry-Enhanced Graph Neural Networks Accelerate {circRNA} Therapeutic Target Discovery},
  journal       = {Frontiers in Genetics},
  volume        = {16},
  pages         = {1633391},
  year          = {2025},
  doi           = {10.3389/fgene.2025.1633391}
}

@article{lloyd2015quantumalgorithmstopologicalgeometric,
  author        = {Lloyd, Seth and Garnerone, Silvano and Zanardi, Paolo},
  title         = {Quantum Algorithms for Topological and Geometric Analysis of Data},
  journal       = {Nature Communications},
  volume        = {7},
  pages         = {10138},
  year          = {2016},
  doi           = {10.1038/ncomms10138},
  eprint        = {1408.3106},
  archivePrefix = {arXiv},
  primaryClass  = {quant-ph}
}

@article{Lorenzini2008SmithNF,
  author        = {Lorenzini, Dino J.},
  title         = {{S}mith Normal Form and {L}aplacians},
  journal       = {Journal of Combinatorial Theory, Series B},
  volume        = {98},
  number        = {6},
  pages         = {1271--1300},
  year          = {2008},
  doi           = {10.1016/j.jctb.2008.02.002}
}

@article{Martyn2021GrandAlgorithms,
  author        = {Martyn, John M. and Rossi, Zane M. and Tan, Andrew K. and Chuang, Isaac L.},
  title         = {Grand Unification of Quantum Algorithms},
  journal       = {PRX Quantum},
  volume        = {2},
  number        = {4},
  pages         = {040203},
  year          = {2021},
  doi           = {10.1103/PRXQuantum.2.040203}
}

@article{McArdle2022AQubits,
  author        = {McArdle, Sam and Gily{\'e}n, Andr{\'a}s and Berta, Mario},
  title         = {A Streamlined Quantum Algorithm for Topological Data Analysis with Exponentially Fewer Qubits},
  journal       = {Quantum},
  volume        = {10},
  pages         = {2058},
  year          = {2026},
  doi           = {10.22331/q-2026-04-10-2058},
  eprint        = {2209.12887},
  archivePrefix = {arXiv},
  primaryClass  = {quant-ph}
}

@article{ObayashiYoshiwaki2023,
  author        = {Obayashi, Ippei and Yoshiwaki, Michio},
  title         = {Field Choice Problem in Persistent Homology},
  journal       = {Discrete \& Computational Geometry},
  volume        = {70},
  number        = {3},
  pages         = {645--670},
  year          = {2023},
  doi           = {10.1007/s00454-023-00544-7}
}

@article{Scali_2024,
  author        = {Scali, Stefano and Umeano, Chukwudubem and Kyriienko, Oleksandr},
  title         = {Quantum Topological Data Analysis via the Estimation of the Density of States},
  journal       = {Physical Review A},
  volume        = {110},
  number        = {4},
  pages         = {042616},
  year          = {2024},
  doi           = {10.1103/PhysRevA.110.042616}
}

@article{SchaubEtAl2020Hodge,
  author        = {Schaub, Michael T. and Benson, Austin R. and Horn, Paul and Lippner, Gabor and Jadbabaie, Ali},
  title         = {Random Walks on Simplicial Complexes and the Normalized {H}odge 1-{L}aplacian},
  journal       = {SIAM Review},
  volume        = {62},
  number        = {2},
  pages         = {353--391},
  year          = {2020},
  doi           = {10.1137/18M1201019}
}

@article{ShenEtAl2025TorGNN,
  author        = {Shen, Cong and Liu, Xiang and Luo, Jiawei and Xia, Kelin},
  title         = {Torsion Graph Neural Networks},
  journal       = {IEEE Transactions on Pattern Analysis and Machine Intelligence},
  volume        = {47},
  number        = {4},
  pages         = {2946--2956},
  year          = {2025},
  doi           = {10.1109/TPAMI.2025.3528449}
}

@article{WeiWei2025PersistentTopologicalLaplacians,
  author        = {Wei, Xiaoqi and Wei, Guo-Wei},
  title         = {Persistent Topological {L}aplacians---A Survey},
  journal       = {Mathematics},
  volume        = {13},
  number        = {2},
  pages         = {208},
  year          = {2025},
  doi           = {10.3390/math13020208}
}

@article{ZomorodianCarlsson2005,
  author        = {Zomorodian, Afra and Carlsson, Gunnar},
  title         = {Computing Persistent Homology},
  journal       = {Discrete \& Computational Geometry},
  volume        = {33},
  number        = {2},
  pages         = {249--274},
  year          = {2005},
  doi           = {10.1007/s00454-004-1146-y}
}

@inproceedings{BartschiEidenbenz2019Dicke,
  author        = {B{\"a}rtschi, Andreas and Eidenbenz, Stephan},
  title         = {Deterministic Preparation of {D}icke States},
  booktitle     = {Fundamentals of Computation Theory},
  series        = {Lecture Notes in Computer Science},
  volume        = {11651},
  pages         = {126--139},
  publisher     = {Springer},
  year          = {2019},
  doi           = {10.1007/978-3-030-25027-0_9},
  eprint        = {1904.07358},
  archivePrefix = {arXiv},
  primaryClass  = {quant-ph}
}

@inproceedings{Gilyen2019QuantumArithmetics,
  author        = {Gily{\'e}n, Andr{\'a}s and Su, Yuan and Low, Guang Hao and Wiebe, Nathan},
  title         = {Quantum Singular Value Transformation and Beyond: Exponential Improvements for Quantum Matrix Arithmetics},
  booktitle     = {Proceedings of the 51st Annual ACM SIGACT Symposium on Theory of Computing},
  pages         = {193--204},
  publisher     = {ACM},
  year          = {2019},
  doi           = {10.1145/3313276.3316366},
  eprint        = {1806.01838},
  archivePrefix = {arXiv},
  primaryClass  = {quant-ph}
}

@inproceedings{king2024,
  author        = {King, Robbie and Kohler, Tamara},
  title         = {Promise Clique Homology on Weighted Graphs is {QMA$_1$}-Hard and Contained in {QMA}},
  booktitle     = {Proceedings of the 65th IEEE Symposium on Foundations of Computer Science (FOCS)},
  publisher     = {IEEE},
  year          = {2024}
}

@inproceedings{luongo2024quantumalgorithmsspectralsums,
  author        = {Luongo, Alessandro and Shao, Changpeng},
  title         = {Quantum Algorithms for Spectral Sums},
  booktitle     = {Proceedings of the AAAI Conference on Artificial Intelligence},
  volume        = {40},
  pages         = {24178--24188},
  publisher     = {AAAI Press},
  year          = {2026},
  doi           = {10.1609/aaai.v40i29.39597},
  eprint        = {2011.06475},
  archivePrefix = {arXiv},
  primaryClass  = {quant-ph}
}

@incollection{Bjorner1992MatroidShellability,
  author        = {Bj{\"o}rner, Anders},
  title         = {The Homology and Shellability of Matroids and Geometric Lattices},
  booktitle     = {Matroid Applications},
  editor        = {White, Neil},
  series        = {Encyclopedia of Mathematics and its Applications},
  volume        = {40},
  pages         = {226--283},
  publisher     = {Cambridge University Press},
  address       = {Cambridge},
  year          = {1992},
  doi           = {10.1017/CBO9780511662041.008}
}

@incollection{BrassardEtAl2002Amplitude,
  author        = {Brassard, Gilles and H{\o}yer, Peter and Mosca, Michele and Tapp, Alain},
  title         = {Quantum Amplitude Amplification and Estimation},
  booktitle     = {Quantum Computation and Quantum Information},
  editor        = {Lomonaco, Samuel J.},
  series        = {Contemporary Mathematics},
  volume        = {305},
  pages         = {53--74},
  publisher     = {American Mathematical Society},
  address       = {Providence, RI},
  year          = {2002},
  doi           = {10.1090/conm/305/05215},
  eprint        = {quant-ph/0005055},
  archivePrefix = {arXiv},
  primaryClass  = {quant-ph}
}

@book{edelsbrunner2010,
  author        = {Edelsbrunner, Herbert and Harer, John L.},
  title         = {Computational Topology: An Introduction},
  publisher     = {American Mathematical Society},
  address       = {Providence, RI},
  year          = {2010}
}

@book{hatcher2002,
  author        = {Hatcher, Allen},
  title         = {Algebraic Topology},
  publisher     = {Cambridge University Press},
  address       = {Cambridge},
  year          = {2002}
}

@misc{cade2021,
  author        = {Cade, Chris and Crichigno, P. Marcos},
  title         = {Complexity of Supersymmetric Systems and the Cohomology Problem},
  howpublished  = {arXiv:2107.00011},
  year          = {2021},
  eprint        = {2107.00011},
  archivePrefix = {arXiv},
  primaryClass  = {quant-ph}
}

@misc{kerenidis2022quantummachinelearningsubspace,
  author        = {Kerenidis, Iordanis and Prakash, Anupam},
  title         = {Quantum Machine Learning with Subspace States},
  howpublished  = {arXiv:2202.00054},
  year          = {2022},
  eprint        = {2202.00054},
  archivePrefix = {arXiv},
  primaryClass  = {quant-ph}
}

@misc{nghiem2025torsion,
  author        = {Nghiem, Nhat A.},
  title         = {Towards Quantum Topological Data Analysis: Torsion Detection},
  howpublished  = {arXiv:2508.19943},
  year          = {2025},
  eprint        = {2508.19943},
  archivePrefix = {arXiv},
  primaryClass  = {quant-ph}
}

@misc{strelchuk2026average,
  author        = {Strelchuk, Sergii and Subramanian, Sathyawageeswar and Weso{\l}owski, Adam},
  title         = {Average-Case Hardness of {B}etti Number Estimation},
  howpublished  = {arXiv:2609.12777},
  year          = {2026},
  eprint        = {2609.12777},
  archivePrefix = {arXiv},
  primaryClass  = {quant-ph}
}

@misc{wesolowski2026torsion,
  author        = {Weso{\l}owski, Adam},
  title         = {Torsion Detection in Clique Complexes is Conditionally {QMA$_1$}-Hard},
  howpublished  = {arXiv:2609.14110},
  year          = {2026},
  eprint        = {2609.14110},
  archivePrefix = {arXiv},
  primaryClass  = {quant-ph}
}

@misc{nghiemBerryPhan2026torsion,
  author        = {Nghiem, Nhat A. and Berry, Dominic W. and Phan, Trung V.},
  title         = {Quantum Topological Data Analysis Beyond {B}etti Numbers: Complexity Hardness \& An Algorithm for Torsion Witness},
  howpublished  = {arXiv:2609.28112},
  year          = {2026},
  eprint        = {2609.28112},
  archivePrefix = {arXiv},
  primaryClass  = {quant-ph}
}

\clearpage
\appendix

\section{Kernel-suppressing squared-amplitude alternative}
\label{app:squared-amplitude-estimator}
For comparison with the bottom-left construction, we outline a
kernel-suppressing squared-amplitude variant and give resource bounds
conditional on the polynomial parameters specified below.
{
Under the hypotheses of \cref{cor:root-free-critical}, when the lower integral
homology corrections vanish, the identity is
\begin{align}
  \log|K_i(X)|
  &=
  \sum_{r=0}^{i+1}(-1)^{i+1-r}
  \log\pdet\!\left(L_{r-1}^{\up}\right) \notag\\
  &=
  \sum_{r=0}^{i+1}
  \sum_{\lambda_j(L_{r-1}^{\up})>0}
  (-1)^{i+1-r}\log\lambda_j(L_{r-1}^{\up}),
  \qquad
  L_{r-1}^{\up}:=B_rB_r^{\mathsf T} .
  \label{eq:squared-route-root-free}
\end{align}
}
{
We combine the block encoding of $\mathcal{B}_r$ from~\cite{McArdle2022AQubits}, QSVT from~\cite{Gilyen2019QuantumArithmetics,Martyn2021GrandAlgorithms}, and quantum
spectral-sum estimation from~\cite{luongo2024quantumalgorithmsspectralsums}.  The first step is to implement
a block encoding of the boundary operator.}
{
After embedding the rectangular boundary in compatible zero-padded domain and
codomain spaces, the block encoding satisfies
\begin{equation}
  \bigl(\bra{0^{n_{B,r}}}\otimes I\bigr)U_{\mathcal{B}_r}
  \bigl(\ket{0^{n_{B,r}}}\otimes I\bigr)
  =
  \frac{\mathcal{B}_r}{\sqrt N}.
  \label{eq:squared-route-boundary-block}
\end{equation}
}
Here $n_{B,r}$ denotes the number of block-encoding ancillas, and
$\mathcal{B}_r$ is the zero-padded boundary from
\eqref{eq:padded-boundary}.  For a clique complex,
coherent simplex-membership access is needed; this means a reversible
membership oracle that can act on superpositions.

{
Next, we apply a polynomial transformation of $\mathcal{B}_r/\sqrt N$ via QSVT with a
polynomial $p_r(x)$ approximating the square root of a logarithm.}
{
Let
\begin{equation}
  0<\gamma_r\leq
  \frac{\zeta_{\min}^{+}(\mathcal{B}_r)}{\sqrt N},
  \qquad
  \mathcal D_{\gamma_r}
  :=
  [-1,-\gamma_r]\cup[\gamma_r,1],
  \label{eq:squared-route-gap-domain}
\end{equation}
where $\zeta_{\min}^{+}(\mathcal{B}_r)$ is the smallest positive singular value.  On the
promised spectrum, fix a constant $\chg{c_\star}>1$, independent of the problem
parameters, and take the target function
\begin{equation}
  g_r(x)
  =
  \begin{cases}
    \sqrt{\log(\chg{c_\star}/x^2)}, & x\in\mathcal D_{\gamma_r},\\
    0, & |x|\leq\gamma_r/2,
  \end{cases}
  \qquad
  p_r(x)\approx g_r(x).
  \label{eq:squared-amplitude-overview}
\end{equation}
}
The interval $\gamma_r/2<|x|<\gamma_r$ is a transition region containing no
promised singular value.
{
For this alternative route we keep the polynomial dependence explicit.  Fix
$\chg{\eta_r}\in(0,1)$ and assume that an even real polynomial $p_r$ of degree $\chg{d_{\log,r}}$
is supplied such that
\begin{align}
  |p_r(x)-g_r(x)|&\leq\chg{\eta_r}
  &&(|x|\in[\gamma_r,1]),\label{eq:squared-route-polynomial-nonzero}\\
  |p_r(x)|&\leq\chg{\eta_r}
  &&(|x|\leq\gamma_r/2).\label{eq:squared-route-polynomial-zero}
\end{align}
Define
\begin{equation}
  \Lambda_r^{\mathrm{sq}}
  :=
  \max\!\left\{
    1,\sqrt{\log(\chg{c_\star}/\gamma_r^2)},
    \lVert p_r\rVert_{[-1,1]}
  \right\},
  \qquad
  \widetilde p_r:=p_r/\Lambda_r^{\mathrm{sq}}.
  \label{eq:squared-route-polynomial-rescaling}
\end{equation}
Then $\widetilde p_r$ is an admissible even QSVT polynomial.  Such a $p_r$
exists after choosing a continuous even interpolation across the transition
window and applying polynomial approximation; \add{an explicit bound
on $\chg{d_{\log,r}}$ is given in \cref{thm:squared-route-degree} below.}  The bounds below
are otherwise conditional on the supplied parameters
$(\chg{d_{\log,r}},\Lambda_r^{\mathrm{sq}},\chg{\eta_r})$.}

{\color{addition}
\paragraph{Degree bound.}
The fixed shift in \eqref{eq:squared-amplitude-overview} keeps the argument of
the square root strictly positive at $|x|=1$, removing the endpoint branch
point.

\begin{theorem}[Degree of the squared-amplitude polynomial]
\label{thm:squared-route-degree}
For every $\gamma_r\in(0,1]$ and $\chg{\eta_r}\in(0,1)$ there is an even real
polynomial $p_r$ satisfying
\eqref{eq:squared-route-polynomial-nonzero}--\eqref{eq:squared-route-polynomial-zero}
with
\begin{equation}
  \chg{d_{\log,r}}
  =
  O_{c_\star}\!\left(
    \frac{1}{\gamma_r}
    \operatorname{polylog}\frac{1}{\gamma_r\chg{\eta_r}}
  \right),
  \qquad
  \Lambda_r^{\mathrm{sq}}
  =
  O_{c_\star}\!\left(\sqrt{\log(c_\star/\gamma_r^2)}\right).
  \label{eq:squared-route-degree}
\end{equation}
\end{theorem}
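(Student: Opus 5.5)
We build $p_r$ as a smoothed step function times a polynomial approximation of the logarithmic target. Write $g_r(x)=\Theta_r(x)\,G(x)$ on the relevant regions, where $G(x):=\sqrt{\log(c_\star/x^2)}$ for $\gamma_r/2\le|x|\le1$ and $\Theta_r$ is an even cutoff equal to $1$ on $\mathcal D_{\gamma_r}$ and $0$ on $|x|\le\gamma_r/2$.

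\emph{Step 1: approximate the cutoff.} We use the standard QSVT rectangle (sign-function) polynomials of Gily\'en et al.~\cite{Gilyen2019QuantumArithmetics}, symmetrized to be even. This gives an even polynomial $P_\Theta$ of degree $O(\gamma_r^{-1}\log(1/\eta'))$ with $|P_\Theta|\le1$ on $[-1,1]$, $|P_\Theta-1|\le\eta'$ on $\mathcal D_{\gamma_r}$, and $|P_\Theta|\le\eta'$ on $|x|\le\gamma_r/2$.

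\emph{Step 2: approximate $G$ away from zero.} We need an even polynomial $P_G$ with $|P_G-G|\le\eta'$ on $\gamma_r/2\le|x|\le1$ and $\|P_G\|_{[-1,1]}=O(\sqrt{\log(c_\star/\gamma_r^2)})$.

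\begin{itemize}
\item Substitute $y=x^2$ and approximate $\sqrt{\log(c_\star/y)}$ on $[\gamma_r^2/4,1]$. Because $c_\star>1$, the function is analytic and bounded away from its branch points there; this is the role of the shift noted in the paragraph before the statement.
\item Direct Chebyshev approximation in $y$ would cost degree about $\gamma_r^{-2}$, which is too much. Instead we work in $x$ on $\{\gamma_r/2\le|x|\le1\}$.
\item We use the Luongo--Shao / Gily\'en-type construction~\cite{luongo2024quantumalgorithmsspectralsums,Gilyen2019QuantumArithmetics}. Approximate $\log(c_\star/x^2)$ by an even polynomial of degree $O(\gamma_r^{-1}\log(\log(1/\gamma_r)/\eta'))$; for example, use the expansion $\log(1/u)$ with $u=x^2$ truncated appropriately and then multiplied by a cutoff. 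Then compose with a polynomial approximation of $\sqrt{\cdot}$ on the interval $[\log c_\star,\log(4c_\star/\gamma_r^2)]$.
\item The left endpoint of that interval is bounded away from $0$ by the constant $\log c_\star$. So $\sqrt{\cdot}$ is analytic in a Bernstein ellipse whose parameter depends only on $c_\star$, after rescaling by the interval length $O(\log(1/\gamma_r))$. The outer degree is therefore $O_{c_\star}(\log(1/\gamma_r)\log(1/\eta'))$.
\item Composition multiplies degrees, giving $O_{c_\star}(\gamma_r^{-1}\operatorname{polylog}(1/(\gamma_r\eta')))$.
\end{itemize}

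\emph{Step 3: combine.} Set $p_r:=P_\Theta P_G$. This is even, and its degree is the sum of the two degrees.

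\begin{itemize}
\item On $\mathcal D_{\gamma_r}$: $|p_r-g_r|\le|P_\Theta-1||P_G|+|P_G-G|\le\eta'\|P_G\|+\eta'$.
\item On $|x|\le\gamma_r/2$: $|p_r|\le\eta'\|P_G\|$.
\end{itemize}

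Choosing $\eta'=\eta_r/(2+2\|P_G\|)$ gives \eqref{eq:squared-route-polynomial-nonzero}--\eqref{eq:squared-route-polynomial-zero}, at only a $\log\log(1/\gamma_r)$ cost inside the polylog. We also get $\|p_r\|_{[-1,1]}\le\|P_G\|=O_{c_\star}(\sqrt{\log(c_\star/\gamma_r^2)})$, which yields the stated bound on $\Lambda_r^{\mathrm{sq}}$ through \eqref{eq:squared-route-polynomial-rescaling}.

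\emph{Main obstacle.} The hard part is the uniform sup-norm control of $P_G$ on all of $[-1,1]$, including the region $|x|<\gamma_r/2$ where $\log(c_\star/x^2)$ blows up. The inner logarithm polynomial must stay bounded there; otherwise the composed square root leaves its domain of good approximation. We would first try the bounded-log construction of~\cite{Gilyen2019QuantumArithmetics}: approximate $\log$ only on the gapped domain, with a global bound $O(\log(1/\gamma_r))$ on $[-1,1]$. We then enlarge the interval on which $\sqrt{\cdot}$ is approximated to $[\tfrac12\log c_\star-1,\,O(\log(1/\gamma_r))]$ and clip from below by an affine shift. If this does not work, we would fall back on approximating $g_r$ directly: take a continuous even interpolant across the transition window and apply a Jackson-type bound with a $\gamma_r^{-1}$-scale modulus of smoothness. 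The paper's text before the statement indicates exactly this fallback. It still yields degree $\widetilde O(\gamma_r^{-1})$, but with a worse polylog, which the $O_{c_\star}(\cdot)$ statement tolerates.
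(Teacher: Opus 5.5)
\paragraph{Verdict.} The proposal has a genuine gap in Step~2, exactly where you place your ``main obstacle'', and the fixes you suggest do not close it.

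\paragraph{The gap.} Step~3 uses $\|P_G\|_{[-1,1]}=O_{c_\star}(\sqrt{\log(c_\star/\gamma_r^2)})$ twice: to fix $\eta'$ and to bound $\Lambda_r^{\mathrm{sq}}$. The composition in Step~2 does not deliver this bound. A Chebyshev (Bernstein-ellipse) approximant $Q$ of $\sqrt{\cdot}$ fitted on $[\log c_\star,\log(4c_\star/\gamma_r^2)]$ is controlled only on that interval. On $|x|<\gamma_r/2$, a bounded-log approximant $I$ of $\log(c_\star/x^2)$ is only known to satisfy $|I|\leq K=\Theta(\log(1/\gamma_r))$, and it may be negative, i.e.\ beyond the branch point of $\sqrt{\cdot}$. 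There, Chebyshev's growth inequality bounds $Q\circ I$ only by $e^{O(\deg Q)}$, not by $O(\sqrt{\log(c_\star/\gamma_r^2)})$.

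\paragraph{Why your fixes fail.} An affine shift translates the range of $I$ but cannot clip it. For $c_\star\leq e^2$, your enlarged interval $[\tfrac12\log c_\star-1,\dots]$ already contains the branch point $0$, so no polynomial approximates $\sqrt{\cdot}$ on it with geometric convergence. The Jackson fallback does not prove the statement either. For an interpolant of slope of order $\gamma_r^{-1}\sqrt{\log(1/\gamma_r)}$, Jackson-type bounds give degree polynomial in $1/\eta_r$ (order $\eta_r^{-1}$, or $\eta_r^{-1/k}$ with $k$th moduli), not $\operatorname{polylog}(1/\eta_r)$. This is decisive, because $\eta_r$ is later taken of order $\varepsilon_r/(D_r\Lambda_r^{\mathrm{sq}})$.

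\paragraph{How the paper avoids this.} The paper never composes. It applies the bounded multi-interval theorem \cite[Theorem~68]{Gilyen2019QuantumArithmetics} once, to the piecewise target equal to $\sqrt{\log(c_\star/x^2)}$ on $\pm[3\gamma_r/4,1]$ and to $0$ on $[-\gamma_r/2,\gamma_r/2]$, with $\delta=\kappa_{c_\star}\gamma_r$. The outer branches are holomorphic on disks whose radii exceed the interval radii by $2\delta$; these disks stay $\Omega_{c_\star}(\gamma_r)$ from $0$ and avoid $\pm\sqrt{c_\star}$. Cauchy estimates then give $B=O_{c_\star}(\gamma_r^{-1}\sqrt{\log(c_\star/\gamma_r^2)})$. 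The theorem returns one polynomial that is accurate on all three intervals and globally bounded, of degree $O_{c_\star}(\gamma_r^{-1}\log(e/(\gamma_r\eta_r)))$. Cutoff and target are handled together.

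\paragraph{Repairing your own route.} Your product framework can also be fixed cheaply, because it asks for more than it needs.
\begin{itemize}
\item $P_\Theta$ multiplies $P_G$ by at most $\eta'$ on $|x|\leq\gamma_r/2$, and $P_G$ is accurate on $|x|\geq\gamma_r/2$, transition window included. So it suffices that $\log\sup_{|x|\leq\gamma_r/2}|P_G|$ be polylogarithmic.
\item Chebyshev's growth inequality gives this for any even $P_G$ of degree $\widetilde O(\gamma_r^{-1})$ that is bounded on $\gamma_r/2\leq|x|\leq1$.
\item Your dismissal of Chebyshev approximation in $y=x^2$ is mistaken. The singularity at $y=0$ lies only $\gamma_r^2/4$ beyond the endpoint of $[\gamma_r^2/4,1]$. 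The Bernstein parameter is therefore $1+\Theta_{c_\star}(\gamma_r)$, and degree $O_{c_\star}(\gamma_r^{-1}\log(1/(\gamma_r\eta_r)))$ suffices.
\item Fix the accuracy of $P_G$ first, then set the cutoff accuracy to $\eta_r/(2+2\sup|P_G|)$. This also removes the circularity in your single choice of $\eta'$.
\item Bound $\|p_r\|_{[-1,1]}$ region by region, not by $\|P_G\|$.
\end{itemize}
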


\begin{proof}
Put $a=3/4$, let
$\kappa_{c_\star}:=\min\{1/32,(\sqrt{c_\star}-1)/8\}$, and set
$\delta:=\kappa_{c_\star}\gamma_r$.  Apply the bounded multi-interval
approximation theorem~\cite[Theorem~68]{Gilyen2019QuantumArithmetics}
to $I_-=[-1,-a\gamma_r]$, $I_0=[-\gamma_r/2,\gamma_r/2]$, and
$I_+=[a\gamma_r,1]$.  Use the auxiliary target
$\sqrt{\log(c_\star/x^2)}$ on $I_-\cup I_+$ and zero on $I_0$.

These intervals are separated by $\Theta(\gamma_r)$, and the chosen width
$\delta$ satisfies the hypotheses of the cited theorem.  The outer branches
extend holomorphically to disks whose radii exceed the corresponding interval
radii by $2\delta$.  These disks remain a distance
$\Omega_{c_\star}(\gamma_r)$ from the origin and avoid
$\pm\sqrt{c_\star}$.  Cauchy's estimates therefore verify the theorem's
Taylor-coefficient condition with a common $B\geq1$ satisfying
$B=O_{c_\star}\!\left(\gamma_r^{-1}
\sqrt{\log(c_\star/\gamma_r^2)}\right)$.
The target on the enlarged real intervals is bounded by
$O_{c_\star}\!\left(\sqrt{\log(c_\star/\gamma_r^2)}\right)$.

Applying the cited theorem with
$\eta'_r:=\min\{\eta_r,(6B)^{-1}\}$ gives a polynomial $P$ of degree
$O_{c_\star}\!\left(\gamma_r^{-1}
\log\!\frac{e}{\gamma_r\eta_r}\right)$ with the required approximation and
global norm bounds.  Finally,
$p_r(x):=(\operatorname{Re}P(x)+\operatorname{Re}P(-x))/2$ is real and even
and has no larger degree, error, or norm.  Since
$[\gamma_r,1]\subseteq[a\gamma_r,1]$, this proves
\eqref{eq:squared-route-polynomial-nonzero}--
\eqref{eq:squared-route-polynomial-zero} and the stated bounds.
\end{proof}
}

{
Next, we apply amplitude estimation using the flagged valid-simplex trace state
$\chg{\ket{\Phi_{f_r}}}$,}
{
\begin{equation}
  \chg{\ket{\Phi_{f_r}}}
  :=
  \frac{1}{\sqrt{\chg{f_r}}}
  \sum_{j=1}^{\chg{f_r}}
  \ket{\sigma_r^j}\ket{\sigma_r^j},
  \label{eq:squared-route-valid-trace-state}
\end{equation}
}
{
The state preparation of $\chg{\ket{\Phi_{f_r}}}$ starts from the Dicke state,
a uniform superposition of $(r+1)$-Hamming-weight basis states, followed by
flagging basis states representing valid $r$-simplices and copying them into
another qubit register.  Writing $\chg{D_r}=\binom{N}{r+1}$, this gives a
probabilistic implementation from the state}
\begin{equation}
  {
  \sqrt{\frac{\chg{f_r}}{\chg{D_r}}}\,
  \chg{\ket{\Phi_{f_r}}}\ket0
  +
  \ket{\mathrm{garb}}\ket1,}
  \label{eq:squared-route-flagged-trace-state}
\end{equation}
where $\ket{\mathrm{garb}}$ is subnormalized and contains the basis states
representing invalid $r$-simplex candidates.

{
After the polynomial transformation, the component with the flag and QSVT
signal ancillas equal to zero has the form}
{
\begin{equation}
  \ket{0^{n_{B,r}}}\,
  \sqrt{\frac{\chg{f_r}}{\chg{D_r}(\Lambda_r^{\mathrm{sq}})^2}}\,
  \left(
    p_r^{\mathrm{SV}}\!\left(\frac{\mathcal{B}_r}{\sqrt N}\right)
    \otimes I
  \right)
  \chg{\ket{\Phi_{f_r}}}
  +
  \ket{\perp}.
  \label{eq:squared-route-final-state}
\end{equation}
}
Here $p_r^{\mathrm{SV}}$ denotes the right-space singular-value
transformation, $\Lambda_r^{\mathrm{sq}}$ is the rescaling factor used to keep the QSVT
polynomial bounded, and $\ket{\perp}$ is orthogonal to the all-zero signal
subspace.

{
By amplitude estimation, we obtain the probability of that all-zero outcome.}
{
Because a probability is a squared norm, the trace identity is
\begin{align}
  P_r^{\mathrm{amp}}
  &=
  \frac{1}{\chg{D_r}(\Lambda_r^{\mathrm{sq}})^2}
  \Tr_{\mathrm{valid},r}\!\left[
    p_r^{\mathrm{SV}}\!\left(\frac{\mathcal{B}_r}{\sqrt N}\right)^\ast
    p_r^{\mathrm{SV}}\!\left(\frac{\mathcal{B}_r}{\sqrt N}\right)
  \right] \notag\\
  &\approx
  \frac{1}{\chg{D_r}(\Lambda_r^{\mathrm{sq}})^2}
  \sum_{\zeta_j(\mathcal{B}_r)>0}
  \log\!\left(\frac{\chg{c_\star}N}{\zeta_j(\mathcal{B}_r)^2}\right) \notag\\
  &=
  \frac{\rank(\mathcal{B}_r)\log(\chg{c_\star}N)-\ell(\mathcal{B}_r)}
       {\chg{D_r}(\Lambda_r^{\mathrm{sq}})^2},
  \qquad
  \ell(\mathcal{B}_r)
  =
  \sum_{\zeta_j(\mathcal{B}_r)>0}\log\zeta_j(\mathcal{B}_r)^2
  =\ell(B_r).
  \label{eq:squared-route-amplitude}
\end{align}
}
Here $\Tr_{\mathrm{valid},r}$ is the trace over the valid $r$-simplex
subspace; invalid candidates remain in the flagged complement.  In the second
line, $\zeta_j(\mathcal{B}_r)^2=\lambda_j(L_{r-1}^{\up})$.
Consequently, the squared-amplitude estimate of $\ell(B_r)=\ell(\mathcal{B}_r)$ is
{
\begin{equation}
  \ell(B_r)=\ell(\mathcal{B}_r)
  \approx
  \rank(\mathcal{B}_r)\log(\chg{c_\star}N)
  -
  \chg{D_r}(\Lambda_r^{\mathrm{sq}})^2P_r^{\mathrm{amp}}.
  \label{eq:squared-route-rank-correction}
\end{equation}
}
More precisely, if the implemented normalized QSVT block differs from
$\widetilde p_r^{\mathrm{SV}}(\mathcal{B}_r/\sqrt N)$ in operator norm by at
most $e_{\mathrm{svt},r}$, and amplitude estimation returns $\widehat P_r$
with probability error at most $\delta_{\mathrm{AE},r}$, define
\begin{equation}
  \widehat\ell_r
  :=
  \rank(\mathcal{B}_r)\log(\chg{c_\star}N)-\chg{D_r}(\Lambda_r^{\mathrm{sq}})^2\widehat P_r.
  \label{eq:squared-route-estimate}
\end{equation}
Then
\begin{equation}
  |\widehat\ell_r-\ell(B_r)|
  \leq
  \chg{D_r}(2\Lambda_r^{\mathrm{sq}}\chg{\eta_r}+\chg{\eta_r^2})
  +\chg{D_r}(\Lambda_r^{\mathrm{sq}})^2
  \left(
    \delta_{\mathrm{AE},r}
    +2e_{\mathrm{svt},r}
    +e_{\mathrm{svt},r}^2
  \right).
  \label{eq:squared-route-error-budget}
\end{equation}
For $0<\varepsilon_r\leq \chg{D_r}$, it is sufficient to choose
\begin{equation}
  \chg{\eta_r}\leq\frac{\varepsilon_r}{9\chg{D_r}\Lambda_r^{\mathrm{sq}}},
  \qquad
  \delta_{\mathrm{AE},r}
  \leq\frac{\varepsilon_r}{3\chg{D_r}(\Lambda_r^{\mathrm{sq}})^2},
  \qquad
  e_{\mathrm{svt},r}
  \leq\frac{\varepsilon_r}{9\chg{D_r}(\Lambda_r^{\mathrm{sq}})^2}.
  \label{eq:squared-route-error-allocation}
\end{equation}
Thus $\chg{d_{\log,r}}$ in the resource bounds is understood to be the degree needed for
\eqref{eq:squared-route-polynomial-nonzero}--
\eqref{eq:squared-route-polynomial-zero} at the value of $\chg{\eta_r}$ required by
\eqref{eq:squared-route-error-allocation}.
{
The last step is to apply this construction to the required boundary
operators, followed by summing the rank-corrected estimates $\widehat\ell_r$
with the alternating signs in \eqref{eq:squared-route-root-free}.}
{
More precisely, the augmented term $\ell(B_0)=\log N$ is inserted exactly,
and the quantum procedure is applied for $r=1,\ldots,i+1$.}

\paragraph{Resource accounting.}
{
The complexity analysis has the following components.}
\begin{enumerate}[label=(\roman*),leftmargin=2em]
  \item {
  The block encoding $U_{\mathcal{B}_r}$ is supplied by~\cite{McArdle2022AQubits} using Clifford loaders from~\cite{kerenidis2022quantummachinelearningsubspace} and a clique-membership
  oracle.}
  Its gate cost depends on the coherent input model and is denoted by
  $G_{B,r}$.

  \item {
  The number of queries to $U_{\mathcal{B}_r}$ needed to construct the polynomial
  transformation is the degree $\chg{d_{\log,r}}$ of the polynomial $p_r$.}

  \item {
  Amplitude estimation calls the preceding polynomial transformation
  inversely proportionally to the requested probability error.}
  {
  To obtain additive error $\varepsilon_r$ in $\ell(B_r)$, assuming that the
  rank correction is supplied exactly, this requires
\begin{equation}
    O\!\left(
      \frac{\chg{D_r}(\Lambda_r^{\mathrm{sq}})^2}{\varepsilon_r}
      \log\frac{2}{\nu_r}
    \right)
  \label{eq:squared-route-ae-uses}
\end{equation}
  uses of the amplitude-estimation unitary and its inverse, where $\nu_r$ is
  the permitted failure probability.}
  If $P_{\mathrm{flag},r}$ prepares the flagged Dicke state in
  \eqref{eq:squared-route-flagged-trace-state}, including membership testing
  and copying, the same bound applies to uses of $P_{\mathrm{flag},r}$,
  $P_{\mathrm{flag},r}^{\dagger}$, and their associated amplitude-estimation
  reflections.  Their gate cost depends on the coherent membership-access
  model and is additional to the boundary-block query count below.

  \item {
  Sequential execution sums the costs over $r$.  With sufficient hardware,
  parallel execution can reduce the circuit depth to the largest single-degree
  depth, while the total gate and query counts remain additive.}

  \item {
  In terms of the approximation degrees, the resulting
  boundary-block-encoding query count is
  \begin{equation}
    O\!\left(
      \sum_{r=1}^{i+1}
      \frac{\chg{d_{\log,r}}\chg{D_r}(\Lambda_r^{\mathrm{sq}})^2}{\varepsilon_r}
      \log\frac{2}{\nu_r}
    \right).
    \label{eq:squared-route-query-count}
  \end{equation}
  }
\end{enumerate}

{\color{addition}
Substituting the value of $\chg{\eta_r}$ required by
\eqref{eq:squared-route-error-allocation} into \eqref{eq:squared-route-degree}
gives $\chg{d_{\log,r}}=\widetilde O_{c_\star}(\gamma_r^{-1})$, with the
suppressed logarithms depending on $1/\gamma_r$ and
$\chg{D_r}\Lambda_r^{\mathrm{sq}}/\varepsilon_r$, so
\eqref{eq:squared-route-query-count} becomes
\begin{equation}
  \widetilde O\!\left(
    \sum_{r=1}^{i+1}
    \frac{\chg{D_r}(\Lambda_r^{\mathrm{sq}})^2}{\gamma_r\varepsilon_r}
    \log\frac{2}{\nu_r}
  \right).
  \label{eq:squared-route-query-count-explicit}
\end{equation}
Up to logarithmic factors, this route has the same leading
$\chg{D_r}/(\gamma_r\varepsilon_r)$ dependence as the primary route, while
additionally requiring the rank correction.}

\paragraph{Why the rank correction appears.}
For this construction, write $A_r=\mathcal{B}_r/\sqrt N$ and let
$p_r^{\mathrm{SV}}(A_r)$ denote the operator obtained by applying $p_r$ to the
singular values of $A_r$.  Equation~\eqref{eq:squared-route-amplitude} then
shows explicitly that amplitude estimation returns the squared-norm trace
\begin{equation}
  \Tr\!\left[
    p_r^{\mathrm{SV}}(A_r)^\ast p_r^{\mathrm{SV}}(A_r)
  \right],
  \label{eq:squared-route-squared-trace}
\end{equation}
not $\Tr p_r^{\mathrm{SV}}(A_r)$.  It therefore requires
$\rank(\mathcal{B}_r)$ as an additional input or a separate rank-estimation
subroutine.  By contrast, the primary bottom-left construction prescribes a
nonzero kernel value, so the kernel and nonzero-spectrum contributions combine
as in \eqref{eq:complementary-trace-identity} and the rank cancels.  Both routes
use the same block-encoding, QSVT, and amplitude-estimation primitives, but the
primary route uses $\ket{\Phi_{D_r}}$, whereas this variant uses the flagged
valid-simplex component $\ket{\Phi_{f_r}}$.

\begin{remark}[Rank correction]
\label{rem:squared-amplitude}
The preceding clique-normalized correction has the following generic form.
Suppose QSVT implements the normalized polynomial
$q=p/\Lambda^{\mathrm{sq}}$, with $p$
approximating the squared-amplitude target in
\eqref{eq:squared-amplitude-overview}, and estimates the corresponding success
probability.  That probability is the squared norm
\begin{equation}
  p_{\mathrm{amp}}
  =
  \frac1D\Tr\!\left[q^{\mathrm{SV}}(A)^\ast
                         q^{\mathrm{SV}}(A)\right].
  \label{eq:squared-route-generic-amplitude}
\end{equation}
Here $q^{\mathrm{SV}}(A)$ denotes the operator obtained by applying $q$ to the
singular values of $A$ through QSVT.
It is not $\Tr q^{\mathrm{SV}}(A)$.  With $\alpha=\sqrt N$,
\begin{equation}
  D(\Lambda^{\mathrm{sq}})^2p_{\mathrm{amp}}
  \approx
  \rank(B)\log(\chg{c_\star}N)-\ell(B),
  \label{eq:squared-amplitude-rank}
\end{equation}
so the squared-amplitude estimator requires a separate rank correction.
The primary bottom-left construction instead assigns the complementary kernel
weight $a_\alpha$, producing the rank cancellation in
\eqref{eq:complementary-trace-identity} without such a subroutine.
If the supplied rank has absolute error
$\varepsilon_{\mathrm{rank},r}$, it contributes at most
$\varepsilon_{\mathrm{rank},r}\log(c_\star N)$ to the log-error budget,
in addition to the cost of obtaining that estimate.  The choice between the
primary and squared-amplitude estimators is independent of the algebraic choice
between alternating boundary pseudodeterminants and a rooted reduced
determinant.
\end{remark}

\section{Analytic properties and weighted polynomial approximation}
\label{app:complementary-polynomial-proof}

This appendix proves \cref{lem:complementary-log-polynomial} and the
bounds relating polynomial degree and weighted norm. The approximation
polynomial is $p$, and the QSVT polynomial is $q=p/\Lambda$, with
$\Lambda$ defined in \eqref{eq:weighted-normalization}. An even degree
bound $d_q$ for $p$ gives the QSVT sequence length $n_{\rm Q}=d_q+1$.
Throughout,
\[
 a_\alpha=\log\alpha^2\geq0,\qquad
 t_\gamma=\log(1/\gamma),\qquad
 s_{\alpha,\gamma}=a_\alpha+t_\gamma>0,\qquad
 b_\alpha=\sqrt{a_\alpha}.
\]
For a real polynomial $p$, write
\begin{equation}
 M(p):=\max_{x\in[-1,1]}\sqrt{1-x^2}\,|p(x)|.
 \label{eq:weighted-polynomial-norm}
\end{equation}
We write $\|p\|_{[-1,1]}=\max_{x\in[-1,1]}|p(x)|$ for the uniform
norm. The implemented polynomial satisfies $M(q)=M(p)/\Lambda$;
this is the maximum magnitude of the bottom-left QSVT amplitude,
including the factor $\sqrt{1-x^2}$.

\subsection{Analytic continuation and uniform polynomial approximation}

The function
\[
 h(x)=\sqrt{\frac{\log(1/x^2)}{1-x^2}}
 \quad(0<|x|<1),\qquad h(\pm1)=1,
\]
is even, with the endpoint values defined by continuity. On the right half-plane,
use the principal logarithm to define
\[
 F(z)=\frac{2\operatorname{Log}z}{z^2-1},\qquad F(1)=1.
\]
The singularity at $1$ is removable. Away from $1$, neither the numerator
nor the denominator vanishes in this half-plane. Hence $F$ is
holomorphic and nonzero there. Since the half-plane is simply connected,
$F$ has a holomorphic square root $h_+$, uniquely specified by
$h_+(1)=1$. On the positive real axis $F$ is positive, so $h_+$ agrees
with $h$ on $(0,1]$. These observations also justify the endpoint
definition in \eqref{eq:complementary-log-target}.

\begin{lemma}[Uniform approximation]
\label{lem:complementary-uniform-approximation}
For $0<\xi<1/4$, there is a real even polynomial $p$ satisfying
\eqref{eq:complementary-poly-outer} and
\eqref{eq:complementary-poly-kernel}, with
\begin{equation}
 \|p\|_{[-1,1]}\leq2\sqrt{1+s_{\alpha,\gamma}},\qquad
 \deg p=O\!\left(
  \gamma^{-1}\log\frac{C\sqrt{1+s_{\alpha,\gamma}}}{\gamma\xi}
 \right).
 \label{eq:complementary-uniform-approximation}
\end{equation}
Here and below, $C>1$ and the constants implicit in $O(\cdot)$ are
universal.
\end{lemma}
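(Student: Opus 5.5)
The approach is the same multi-interval construction used for \cref{thm:squared-route-degree}. We apply the bounded multi-interval approximation theorem \cite[Theorem~68]{Gilyen2019QuantumArithmetics} to three intervals:
\[
 I_-=[-1,-a\gamma],\qquad I_0=[-\gamma/2,\gamma/2],\qquad I_+=[a\gamma,1],\qquad a=3/4 .
\]
On $I_0$ the target is the constant $b_\alpha=\sqrt{a_\alpha}$. On $I_-\cup I_+$ the target is the even extension of the holomorphic branch $h_+$ constructed above. The final step symmetrizes the output,
\[
 p(x):=\tfrac12\bigl(\operatorname{Re}P(x)+\operatorname{Re}P(-x)\bigr),
\]
which produces a real even polynomial without increasing the degree, the error, or the uniform norm. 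Since $[\gamma,1]\subseteq I_+$, the bounds \eqref{eq:complementary-poly-outer} and \eqref{eq:complementary-poly-kernel} then follow directly.

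The first step is to verify the analytic hypotheses of the cited theorem. Set $\delta=\kappa\gamma$ with a small universal constant $\kappa$, for example $1/32$.
\begin{enumerate}[label=(\roman*),leftmargin=2em]
 \item The constant branch on $I_0$ is trivially entire and bounded by $\sqrt{a_\alpha}\leq\sqrt{s_{\alpha,\gamma}}$.
 \item For the outer branch, we need a region containing every point within distance $2\delta$ of $I_+$, and we must bound $|h_+|$ there. The natural choice is the set of points at distance at least $\gamma/2$ from $0$ inside the right half-plane, together with a neighbourhood of the removable point $z=1$.
 \item On this region, $|\operatorname{Log}z|\leq\log(2/\gamma)+\pi/2$.
 \item Also on this region, $|z^2-1|$ is bounded below except near $z=1$. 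Near $z=1$ the quotient $F$ is analytic, so a maximum-principle argument on a fixed disk around $1$ bounds $F$ by an absolute constant.
\end{enumerate}
Together these give $|h_+|=O(\sqrt{1+s_{\alpha,\gamma}})$ on the enlarged region. Cauchy's estimates on disks of radius $\Theta(\gamma)$ then give the Taylor-coefficient condition with $B=O(\gamma^{-1}\sqrt{1+s_{\alpha,\gamma}})$. The theorem therefore yields degree $O(\gamma^{-1}\log(B/\xi))$, which is \eqref{eq:complementary-uniform-approximation} once the logarithm is absorbed into $\log\frac{C\sqrt{1+s}}{\gamma\xi}$. The theorem's norm guarantee, with maximum target value at most $\sqrt{1+s_{\alpha,\gamma}}$ plus $\xi<1/4$, gives $\|p\|\leq 2\sqrt{1+s_{\alpha,\gamma}}$.

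The step I expect to be the main obstacle is this region analysis near $z=1$. The cited theorem asks for holomorphy on a neighbourhood of each whole interval, and $I_+$ reaches the endpoint $1$, where $\log(1/x^2)$ and $1-x^2$ both vanish. We must check that the square root has no branch point there; it does not, because $F(1)=1\neq0$. We must also check that $F$ has no zeros in the enlarged disk. The zeros of $\operatorname{Log}z$ occur only at $z=1$, and there $F$ equals $1$, so none arise. A quantitative lower bound on $|F|$ near $[a\gamma,1]$ is still required, both to keep the chosen square-root branch holomorphic and to bound it. The one remaining point to handle explicitly is the left endpoint of $I_+$, where $F\approx -2\log z$ is large and nonvanishing, so no zero occurs there either.
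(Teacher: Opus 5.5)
Your overall construction is the paper's: the same application of \cite[Theorem~68]{Gilyen2019QuantumArithmetics} to $[-1,-3\gamma/4]$, $[-\gamma/2,\gamma/2]$, $[3\gamma/4,1]$ with $\delta=\gamma/32$, the same targets ($h_+$, its reflection, and $b_\alpha$), and the same real-part symmetrization. Your case split for bounding $F$ is a valid, if cruder, substitute for the paper's estimate. Away from $1$ you use $|\operatorname{Log}z|\le\log(2/\gamma)+\pi/2$ and $|z^2-1|\ge|z-1|$, and near $1$ the maximum principle. The paper instead uses the integral representation $\operatorname{Log}z/(z-1)=\int_0^1dv/(1-v+vz)$, which gives $|F|\le2\log(1/u_\gamma)/(1-u_\gamma^2)<4(1+s_{\alpha,\gamma})$ on the whole Cauchy disk, with $u_\gamma=11\gamma/16$.

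\textbf{Where the argument breaks: the norm bound.} The step that fails is the explicit bound $\|p\|_{[-1,1]}\le2\sqrt{1+s_{\alpha,\gamma}}$. You justify it by saying the target never exceeds $\sqrt{1+s_{\alpha,\gamma}}$, and this is false. For $\alpha=1$ one has $1+s_{\alpha,\gamma}=1+\log(1/\gamma)$, whereas $h(3\gamma/4)^2\ge2\log(1/\gamma)$, which is larger once $\gamma<1/e$. Even at $\gamma=1$, $h(3/4)^2=\tfrac{32}{7}\log\tfrac43\approx1.31$ exceeds $1+s_{\alpha,\gamma}$ when $a_\alpha$ is small. Your region estimate gives only $|h_+|=O(\sqrt{1+s_{\alpha,\gamma}})$, so it cannot supply the constant $2$ either.

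\textbf{How the paper obtains the constant.} It makes the bound on $F$ explicit and then normalizes. All three local functions are divided by $L_0=8\sqrt{1+s_{\alpha,\gamma}}$, so each has modulus at most $1/4$ on its disk; for the constant branch this uses $b_\alpha\le\sqrt{s_{\alpha,\gamma}}$. The theorem is then applied with $B_*=4/\gamma$ and $\eta_0=\min\{\xi/L_0,\gamma/24\}$, giving $\|P\|_{[-1,1]}\le1/4$. Finally it sets $p(x)=\tfrac{L_0}{2}\bigl(\operatorname{Re}P(x)+\operatorname{Re}P(-x)\bigr)$. Alternatively, you can repair your version directly. The function $2\log(1/x)/(1-x^2)$ is decreasing on $(0,1]$, so the real target is largest at the left end of the enlarged interval. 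Bounding it there leaves room for the additional $\xi$.

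\textbf{Two smaller points.} First, the ``main obstacle'' you identify is not one. $F$ is holomorphic and zero-free on the whole right half-plane, which is simply connected, so $h_+$ is already a holomorphic branch there. Cauchy's estimate needs only the upper bound $|h_+|=\sqrt{|F|}$; no lower bound on $|F|$ is required, and the paper uses none.

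Second, the theorem's hypothesis concerns the Taylor series about the centre $m_+=(1+3\gamma/4)/2$ of $I_+$, weighted at radius $r_++\delta$. The Cauchy estimate must therefore be taken on the single disk $|z-m_+|\le r_++2\delta$, whose radius is about $1/2$. Disks of radius $\Theta(\gamma)$, or a $2\delta$-neighbourhood of $I_+$, do not suffice. Your region does contain this disk, since its real parts are at least $11\gamma/16$. The margin $\delta$ between the two radii is what produces the factor $(r_++2\delta)/\delta=\Theta(1/\gamma)$ in $B$.
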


\begin{proof}
Apply the bounded multi-interval approximation theorem
\cite[Theorem~68]{Gilyen2019QuantumArithmetics} to
\[
 I_-=[-1,-3\gamma/4],\qquad
 I_0=[-\gamma/2,\gamma/2],\qquad
 I_+=[3\gamma/4,1],\qquad \delta=\gamma/32.
\]
All interval radii are at least $\delta$, and the gaps between consecutive
intervals are $8\delta$. Thus the minimum of the neighborhood parameters
and these gaps, as defined in the theorem, is $\delta$.
The center and radius of $I_+$ are
\[
 m_+=\frac{1+3\gamma/4}{2},\qquad
 r_+=\frac{1-3\gamma/4}{2}.
\]
On the closed disk $|z-m_+|\leq R_+:=r_++2\delta$, one has
$\operatorname{Re}z\geq u_\gamma:=11\gamma/16>0$.
The identity
\[
 \frac{\operatorname{Log}z}{z-1}
 =\int_0^1\frac{dv}{1-v+vz}
\]
holds throughout the right half-plane, including the continuous value
at $z=1$. Thus
\begin{align*}
 |F(z)|
 &\leq\frac{2}{1+u_\gamma}
       \int_0^1\frac{dv}{1-v+vu_\gamma}
   =\frac{2\log(1/u_\gamma)}{1-u_\gamma^2}\\
 &\leq\frac{512}{135}\log\frac{16}{11\gamma}
   <4\log\frac{16}{11\gamma}
   \leq4(1+s_{\alpha,\gamma}).
\end{align*}
Set $L_0=8\sqrt{1+s_{\alpha,\gamma}}$ and use the local functions
$h_+(z)/L_0$ on the positive interval, $h_+(-z)/L_0$ on the negative
interval, and $b_\alpha/L_0$ on the central interval. Each function
has modulus at most $1/4$ on its enlarged interval and on the
corresponding disk used below.

If $a_k$ are the Taylor coefficients of $h_+/L_0$ about $m_+$,
Cauchy's estimate gives
\[
 \sum_{k\geq0}(r_++\delta)^k|a_k|
 \leq\frac14\sum_{k\geq0}
       \left(\frac{r_++\delta}{R_+}\right)^k
 =\frac{R_+}{4\delta}
 =\frac4\gamma-\frac52\leq\frac4\gamma.
\]
The reflected function satisfies the same estimate, and the corresponding
weighted sum of absolute Taylor coefficients for the constant function
is at most $1/4$. Hence $B_*=4/\gamma$ bounds these sums for all three
functions. With three intervals, the approximation error
\[
 \eta_0=\min\{\xi/L_0,\gamma/24\}
\]
satisfies the theorem's admissibility condition
$\eta_0\leq(2B_*\cdot3)^{-1}$. The theorem gives a polynomial $P$
with uniform norm at most $1/4$ on $[-1,1]$, error at most $\eta_0$ on each
interval, and degree
\[
 O\!\left(\gamma^{-1}\log\frac{12}{\gamma\eta_0}\right)
 =O\!\left(\gamma^{-1}
        \log\frac{C\sqrt{1+s_{\alpha,\gamma}}}{\gamma\xi}\right).
\]
Taking coefficientwise real parts and averaging with reflection gives
\[
 p(x)=\frac{L_0}{2}
       \bigl(\operatorname{Re}P(x)+\operatorname{Re}P(-x)\bigr).
\]
Taking real parts and symmetrizing preserve the approximation bounds
and do not increase the degree or uniform norm. Multiplication by $L_0$
then gives the stated bounds for $p$.
\end{proof}

\subsection{Fixed-interval auxiliary approximations}

\begin{lemma}[Auxiliary polynomials]
\label{lem:complementary-fixed-interval}
Define
\[
 R(x)=\frac{\sqrt{-2\log x}}{\arccos x}
 \quad(3/4\leq x<1),\qquad R(1)=1.
\]
For $0<\eta<1/4$, there are real even polynomials $r_\eta,c_\eta$
of degree $O(\log(C/\eta))$ such that
\begin{align*}
 |r_\eta(x)-R(|x|)|&\leq\eta,&
 |c_\eta(x)|&\leq\eta
 &&(3/4\leq|x|\leq1),\\
 |r_\eta(x)|&\leq\eta,&
 |c_\eta(x)-1|&\leq\eta
 &&(|x|\leq1/2),
\end{align*}
and
\[
 \|r_\eta\|_{[-1,1]}\leq2,\qquad
 \|c_\eta\|_{[-1,1]}\leq1.
\]
\end{lemma}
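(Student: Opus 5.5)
The plan is to obtain both polynomials from the same bounded multi-interval approximation theorem \cite[Theorem~68]{Gilyen2019QuantumArithmetics} used in \cref{lem:complementary-uniform-approximation}. Here, however, the interval geometry is fixed and independent of $\gamma$, so the degree depends only on $\eta$. Take $I_-=[-1,-3/4]$, $I_0=[-1/2,1/2]$, $I_+=[3/4,1]$ and a fixed width $\delta$, say $\delta=1/32$. The gaps $1/4$ and all interval radii are then at least $\delta$.

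For $c_\eta$ the local targets are $0$ on $I_\pm$ and $1$ on $I_0$. Constants trivially satisfy the Taylor-coefficient condition, with $B_*$ an absolute constant.

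For $r_\eta$ the local targets are $R(z)$ on $I_+$, $R(-z)$ on $I_-$, and $0$ on $I_0$. After dividing by a fixed scale (say $L_0=8$), the theorem yields a polynomial $P$ with error $\eta/L_0$ on each interval, norm at most $1/4$, and degree $O(\log(C/\eta))$. Final real, even polynomials come from $\tfrac{L_0}{2}(\operatorname{Re}P(x)+\operatorname{Re}P(-x))$, exactly as before. For $c_\eta$ I will use scale $L_0=4$, so the uniform norm is at most $1$; for $r_\eta$ with $L_0=8$ the norm is at most $2$. Both are as claimed. If the admissibility condition $\eta_0\le (6B_*)^{-1}$ bites, I will shrink $\eta_0=\min\{\eta/L_0,c\}$ by an absolute constant, which only changes constants in the degree.

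The main obstacle is analyticity of $R$ near $x=1$, where both $\sqrt{-2\log x}$ and $\arccos x$ have square-root branch points. The ratio is nevertheless holomorphic there. Write $-2\log x=(1-x)\,g(x)$ with $g(x)=2\int_0^1\frac{dv}{1-v+vx}$, holomorphic and positive near $[3/4,1]$ with $g(1)=2$. Similarly, $\arccos x=\sqrt{2(1-x)}\,k(x)$, where $k$ is holomorphic near $x=1$ with $k(1)=1$; this follows from $\arccos(1-w)=\sqrt{2w}\,(1+w/12+\cdots)$, a convergent series for $|w|<2$. Hence $R(x)=\sqrt{g(x)/2}/k(x)$, and the only remaining checks are that $g$ stays off the negative real axis and $k$ stays nonzero on a fixed complex neighborhood of $[3/4,1]$.

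I would verify this on the disk about $m_+=7/8$ of radius $1/8+2\delta=3/16$, which stays in $\operatorname{Re}z\ge 11/16$ and within distance $<2$ of $1$. On that disk $\operatorname{Re}(1-v+vz)>0$, so $g$ has positive real part. For $k$, I would bound $|k-1|$ by the series tail and show it is below $1$ for $|w|\le 5/16$. Then $|R|\le$ an absolute constant $\le L_0/4$ on the disk, and Cauchy's estimate gives a bounded Taylor sum, $B_*=O(1)$. Should the constants fail, I would shrink $\delta$ further. Because the intervals are fixed, any absolute choice works and only affects the constant $C$.
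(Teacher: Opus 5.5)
Your proposal is correct and follows essentially the same route as the paper. You use the same three fixed intervals and the same multi-interval approximation theorem, and the same cancellation of the common $\sqrt{1-x}$ branch factor via the integral form of $\operatorname{Log}$ and the arcsine series: your $\sqrt{g/2}/k$ is the paper's $\sqrt{L(z)}/K((1-z)/2)$. This is followed, as in the paper, by Cauchy estimates, real-part symmetrization, and rescaling. The only differences are constants: you take $\delta=1/32$ and scales $8,4$, where the paper takes $\delta=1/64$ and scales $4,2$.
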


\begin{proof}
We first show that $R$ has an analytic continuation to a neighborhood of $1$. Put
\[
 L(z)=\frac{-\operatorname{Log}z}{1-z},\qquad L(1)=1,
 \qquad
 K(w)=\sum_{k\geq0}\kappa_k w^k,
 \qquad
 \kappa_k=\frac{\binom{2k}{k}}{4^k(2k+1)}.
\]
On $|z-1|<1/2$, the integral representation from the preceding proof
shows that $L$ is holomorphic, has positive real part, and obeys
$|L(z)|\leq2$. The coefficients $\kappa_k$ are positive and sum
to $\pi/2$, by the Taylor series for $\arcsin$. Therefore
\[
 \left|K\!\left(\frac{1-z}{2}\right)-1\right|
 \leq\frac14\left(\frac\pi2-1\right)<\frac17.
\]
It follows that
\[
 \frac{\sqrt{L(z)}}{K((1-z)/2)}
\]
is holomorphic and bounded by $2$ on that disk. On $[3/4,1]$ it
equals $R(x)$: the identity
\[
 (\arccos x)^2=2(1-x)K((1-x)/2)^2
\]
follows from $\arccos x=2\arcsin\sqrt{(1-x)/2}$, with the value
at $x=1$ obtained by continuity.

Apply \cite[Theorem~68]{Gilyen2019QuantumArithmetics} to the three
intervals
\[
 [-1,-3/4],\qquad[-1/2,1/2],\qquad[3/4,1],
 \qquad\delta=1/64.
\]
For the positive interval, the disk centered at $7/8$ with radius
$5/32$ lies in $|z-1|<1/2$ and has radius equal to the interval
radius plus $2\delta$. The same holds after reflection for the negative
interval. Use the analytic continuation of $R/4$, its reflection, and
the zero function to construct $r_\eta/4$. Their uniform norms on the
corresponding disks are at most $1/2$, and the weighted sums of absolute
Taylor coefficients are at most $(1/2)(5/32)/\delta=5$. Use the constant
functions zero, $1/2$, zero to construct $c_\eta/2$; the corresponding
coefficient sums are at most $1/2$.
The intervals and neighborhood widths are fixed, and the bounds on the
weighted coefficient sums are universal constants. Choose
errors $\eta_r=\min\{\eta/4,1/30\}$ and $\eta_c=\eta/2$.
These errors satisfy the theorem's condition for three intervals with
bounds $5$ and $1/2$ on the weighted sums of absolute Taylor
coefficients, respectively, and give degree $O(\log(C/\eta))$. Both
rescaled auxiliary polynomials have uniform norm at most $1/2$ on $[-1,1]$.
Rescale and symmetrize
as in the preceding proof to obtain the asserted bounds.
\end{proof}

\subsection{Polynomial construction with a bound on the weighted norm}

\begin{proof}[Proof of \cref{lem:complementary-log-polynomial}]
If $s_{\alpha,\gamma}\geq1/64$,
\cref{lem:complementary-uniform-approximation} gives
\[
 M(p)\leq2\sqrt{1+s_{\alpha,\gamma}}
 \leq2\sqrt{65}\sqrt{s_{\alpha,\gamma}}
 <17\sqrt{s_{\alpha,\gamma}},
\]
with the required degree. It remains to consider
$0<s_{\alpha,\gamma}<1/64$.

Choose the auxiliary order $n$ to be the largest positive odd integer
satisfying $n\leq1/(4\sqrt{s_{\alpha,\gamma}})$. Since the upper bound exceeds
two, this choice satisfies
\begin{equation}
 \frac1{12\sqrt{s_{\alpha,\gamma}}}
 \leq n\leq\frac1{4\sqrt{s_{\alpha,\gamma}}}.
 \label{eq:weighted-chebyshev-order}
\end{equation}
Let $\theta_\gamma=\arccos\gamma$. Integrating
$\tan\theta\geq\theta$ gives $\log(\sec\theta)\geq\theta^2/2$,
and therefore
\[
 \theta_\gamma\leq\sqrt{2t_\gamma}
 \leq\sqrt{2s_{\alpha,\gamma}},\qquad
 n\theta_\gamma\leq\frac{\sqrt2}{4}<\frac\pi4.
\]
Also $\gamma\geq e^{-s_{\alpha,\gamma}}>3/4$.

Let $T_n,U_{n-1}$ be the Chebyshev polynomials of the first and
second kind, respectively, so
\[
 T_n(\cos\theta)=\cos(n\theta),\qquad
 \sin\theta\,U_{n-1}(\cos\theta)=\sin(n\theta).
\]
Using the coefficients $\kappa_k$ from the preceding proof, define
\begin{align}
 F_m(y)&=\sum_{k=0}^m\kappa_k y^{2k+1},
 \label{eq:weighted-arcsine-truncation}\\
 b_{n,m}(x)&=\frac{U_{n-1}(x)}n
       \sum_{k=0}^m\kappa_k\bigl(1-T_n(x)^2\bigr)^k.
 \label{eq:weighted-chebyshev-polynomial}
\end{align}
Because $n$ is odd, $b_{n,m}$ is even. Its degree is at most
$(2m+1)n-1$, and
\begin{equation}
 \sin\theta\,b_{n,m}(\cos\theta)
 =\frac{F_m(\sin(n\theta))}{n}.
 \label{eq:weighted-chebyshev-identity}
\end{equation}
Since $|T_n(x)|\leq1$ and $|U_{n-1}(x)|\leq n$ on $[-1,1]$,
and $\sum_{k\geq0}\kappa_k=\pi/2$, we have
\begin{equation}
 \|b_{n,m}\|_{[-1,1]}\leq\pi/2,
 \qquad M(b_{n,m})\leq\frac{\pi}{2n}.
 \label{eq:weighted-chebyshev-bounds}
\end{equation}

For $0\leq\theta\leq\theta_\gamma$, let $y=\sin(n\theta)$.
Then $0\leq y\leq1/\sqrt2$ and $\arcsin y=n\theta$. Positivity
of the Taylor coefficients gives
\[
 0\leq\arcsin y-F_m(y)
 \leq\frac\pi2 y^{2m+3}
 \leq\frac\pi2 y\,2^{-(m+1)}.
\]
Divide by $n\sin\theta$ and use
$|\sin(n\theta)|\leq n\sin\theta$ on this interval. We obtain
\begin{equation}
 \left|b_{n,m}(x)-\frac{\arccos x}{\sqrt{1-x^2}}\right|
 \leq\frac\pi2\,2^{-(m+1)}
 \qquad(\gamma\leq x\leq1).
 \label{eq:weighted-chebyshev-error}
\end{equation}
At $x=1$, both functions equal one by continuity.

Take the auxiliary polynomials from
\cref{lem:complementary-fixed-interval} and set
\begin{equation}
 p(x)=b_{n,m}(x)r_\eta(x)+b_\alpha c_\eta(x).
 \label{eq:weighted-polynomial-construction}
\end{equation}
For $\gamma\leq x\leq1$,
\[
 h(x)=R(x)\frac{\arccos x}{\sqrt{1-x^2}},\qquad |R(x)|\leq2.
\]
Consequently, $|p(x)-h(x)|$ on $[\gamma,1]$ is at most
\[
 (\pi/2+b_\alpha)\eta+\pi\,2^{-(m+1)}.
\]
On $|x|\leq\gamma/2\leq1/2$, the error from the desired value
$b_\alpha$ is at most $(\pi/2+b_\alpha)\eta$. Choose
\[
 \eta=\frac{\xi}{4(\pi/2+b_\alpha)}
\]
and an integer $m=O(\log(C/\xi))$ such that
$\pi2^{-(m+1)}\leq\xi/2$. This proves both approximation bounds;
evenness gives the bound on $[-1,-\gamma]$. The weighted norm
satisfies
\[
 M(p)\leq\frac\pi n+b_\alpha
 \leq(12\pi+1)\sqrt{s_{\alpha,\gamma}}
 <40\sqrt{s_{\alpha,\gamma}}.
\]
Finally,
\begin{align*}
 \deg p
 &\leq(2m+1)n-1+O(\log(C/\eta))\\
 &=O\!\left[
  \left(\gamma^{-1}+s_{\alpha,\gamma}^{-1/2}\right)
  \log\frac{C(1+\sqrt{s_{\alpha,\gamma}})}{\gamma\xi}
 \right].
\end{align*}
The case $\gamma=1$ with $a_\alpha>0$ is included by the endpoint
limits above. Taking an even integer upper bound $d_q$ on the degree
gives the convention used in \cref{lem:complementary-log-polynomial}.
\end{proof}

If $s_{\alpha,\gamma}=0$, then $\alpha=\gamma=1$. Every positive
singular value of $B$ equals one, so $\ell(B)=0$. This trivial
estimation case is excluded from the approximation lemma: a zero
weighted norm would force the polynomial to vanish identically and is
incompatible with approximating $h(1)=1$ to error less than one.

\subsection{Lower bounds on degree and weighted norm}

\begin{proposition}[Relation between degree and weighted norm]
\label{prop:weighted-degree-normalization}
Every real polynomial $p$ of degree at most $d$ satisfies
\begin{equation}
 |p(1)|\leq(d+1)M(p).
 \label{eq:weighted-degree-normalization}
\end{equation}
In particular, the approximation condition on $\gamma\leq|x|\leq1$
in \eqref{eq:complementary-poly-outer} implies
$(d+1)M(p)\geq1-\xi$.
\end{proposition}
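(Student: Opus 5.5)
The plan is to pass to the angle variable and reduce the claim to Bernstein's inequality for trigonometric polynomials. Put $x=\cos\theta$ and define
\[
 g(\theta):=\sin\theta\,p(\cos\theta),\qquad \theta\in[0,\pi].
\]
Since $p$ has degree at most $d$, the product $\sin\theta\cos^j\theta$ is a trigonometric polynomial of degree at most $j+1$. Hence $g$ is a real trigonometric polynomial of degree at most $d+1$. Its sup norm over the full circle satisfies
\[
 \|g\|_\infty=\max_{x\in[-1,1]}\sqrt{1-x^2}\,|p(x)|=M(p).
\]
This holds on all of $\mathbb R$, not just $[0,\pi]$, because $|g(-\theta)|=|g(\theta)|$.

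The key step is to recover $p(1)$ from the derivative of $g$ at the endpoint. Differentiating,
\[
 g'(\theta)=\cos\theta\,p(\cos\theta)-\sin^2\theta\,p'(\cos\theta),
\]
so $g'(0)=p(1)$. Bernstein's inequality for trigonometric polynomials of degree at most $n$ states $\|g'\|_\infty\leq n\|g\|_\infty$. Applying it with $n=d+1$ gives
\[
 |p(1)|=|g'(0)|\leq(d+1)M(p).
\]
This is exactly \eqref{eq:weighted-degree-normalization}. The paper's bibliography cites Queff\'elec--Zarouf for Bernstein's inequality, so that is the reference to use.

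For the consequence, take $x=1$ in \eqref{eq:complementary-poly-outer}, which is legitimate because $\gamma\leq1$. Since $h(1)=1$, this gives $|p(1)-1|\leq\xi$, hence $|p(1)|\geq1-\xi$. Combining with the inequality above yields $(d+1)M(p)\geq1-\xi$.

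The only point that needs care is checking that $g$ really has trigonometric degree $d+1$, so that Bernstein's inequality applies on the whole circle. This follows from product-to-sum expansions: $\sin\theta\cos^j\theta$ is a combination of $\sin(k\theta)$ with $k\leq j+1$. Beyond that, the argument is routine.
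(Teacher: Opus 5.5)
Your proposal is correct and follows the paper's proof essentially verbatim: the paper also sets $\mathcal T(\theta)=\sin\theta\,p(\cos\theta)$, notes that it is a trigonometric polynomial of degree at most $d+1$ with $\|\mathcal T\|_\infty=M(p)$ and $\mathcal T'(0)=p(1)$, and applies Bernstein's inequality. It then derives the consequence from $|p(1)-1|\leq\xi$, exactly as you do.
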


\begin{proof}
The function $\mathcal T(\theta)=\sin\theta\,p(\cos\theta)$ is a
trigonometric polynomial of degree at most $d+1$, with
$\|\mathcal T\|_\infty=M(p)$ and $\mathcal T'(0)=p(1)$.
Bernstein's trigonometric inequality
\cite{QueffelecZarouf2019} gives
\[
 |p(1)|\leq\|\mathcal T'\|_\infty
 \leq(d+1)\|\mathcal T\|_\infty=(d+1)M(p).
\]
The last assertion follows from $|p(1)-1|\leq\xi$.
\end{proof}

For fixed $\xi<1$, a weighted bound $M(p)=O(\sqrt{s_{\alpha,\gamma}})$
therefore requires $d+1=\Omega(s_{\alpha,\gamma}^{-1/2})$ as
$s_{\alpha,\gamma}\downarrow0$. This explains the additional degree
term in \cref{lem:complementary-log-polynomial}.

For a polynomial $p$ of degree at most $d$, a related lower bound
follows by approximating $\log(1/\gamma^2)$ with
$(1-\gamma^2)p(\gamma)^2$. If $0<\gamma<1$ and
\[
 \left|(1-\gamma^2)p(\gamma)^2-2t_\gamma\right|
 \leq\varepsilon_w,\qquad 0\leq\varepsilon_w<2t_\gamma,
\]
then, with $\mathcal T(\theta)=\sin\theta\,p(\cos\theta)$,
we have $\mathcal T(0)=0$. This identity and Bernstein's inequality imply
\begin{equation}
 (d+1)M(p)
 \geq\frac{\sqrt{2t_\gamma-\varepsilon_w}}{\arccos\gamma}
 \geq\sqrt{1-\frac{\varepsilon_w}{2t_\gamma}}.
 \label{eq:weighted-signal-degree-bound}
\end{equation}
Indeed, integrate $\mathcal T'$ between $0$ and $\arccos\gamma$
and then use $(\arccos\gamma)^2\leq2t_\gamma$. The assumption
$\varepsilon_w<2t_\gamma$ is essential: an allowed error of at least
$2t_\gamma$ would permit a zero approximation at $x=\gamma$. These
bounds apply to polynomials satisfying the stated approximation
conditions; they are not lower bounds for all quantum algorithms.

The approximation requirements also impose a lower bound on the
weighted norm itself:
\begin{equation}
 M(p)\geq\max\left\{
  (b_\alpha-\xi)_+,\,
  \bigl(\sqrt{2t_\gamma}-\sqrt{1-\gamma^2}\,\xi\bigr)_+
 \right\},
 \label{eq:weighted-normalization-lower-bound}
\end{equation}
where $u_+=\max\{u,0\}$. These are the constraints at $x=0$ and
$x=\gamma$, respectively. In particular, if
$\xi\leq\sqrt{s_{\alpha,\gamma}}/4$, then
$M(p)=\Omega(\sqrt{s_{\alpha,\gamma}})$: at least one of
$a_\alpha,t_\gamma$ is at least $s_{\alpha,\gamma}/2$, and the
corresponding term in \eqref{eq:weighted-normalization-lower-bound}
is a positive universal multiple of $\sqrt{s_{\alpha,\gamma}}$.
Thus the bound $M(p)=O(\sqrt{s_{\alpha,\gamma}})$ is optimal up to
constant factors under this accuracy condition. With this bound on
$M(p)$, the $s_{\alpha,\gamma}^{-1/2}$ term in the degree bound is
also necessary, up to logarithmic factors, for this approximation
problem.

\section{QSVT construction and implementation errors}
\label{app:quantum-circuit-analysis}

\subsection{Proof of the bottom-left QSVT construction}
\label{app:bottom-left-qsvt-proof}

We prove \cref{lem:bottom-left-qsvt}.
Let
\[
 J_B\ket{\psi}:=\ket{0^{n_B}}\ket{\psi},
 \qquad
 \Pi_B:=J_BJ_B^\dagger,
 \qquad
 F_q:=q\!\left(\sqrt{A^\dagger A}\right).
\]
Since $q$ is even, $F_q$ is a polynomial in $A^\dagger A$.
For a singular value decomposition
\[
 B=\sum_{j=1}^{\rho}\sigma_j\ket{u_j}\!\bra{v_j},
\]
where $\rho=\rank B$ and $x_j=\sigma_j/\alpha$,
\begin{equation}
 V_R^\dagger F_qV_R
 =\sum_{j=1}^{\rho}q(x_j)\ket{v_j}\!\bra{v_j}
  +q(0)\Pi_{\ker B},
 \label{eq:complementary-right-transform}
\end{equation}
where $\Pi_{\ker B}$ is the orthogonal projector onto $\ker B$.
Also define
\begin{equation}
 C_B:=(I-\Pi_B)U_BJ_B,\qquad
 C_B^\dagger C_B=I-A^\dagger A.
 \label{eq:complementary-boundary-map}
\end{equation}
We establish the stronger operator identity
\begin{equation}
 (\bra0_c\otimes(I-\Pi_B))\mathcal V_q
 (\ket0_c\otimes J_B)
 =\mathrm{i}C_BF_q.
 \label{eq:direct-bottom-left-block}
\end{equation}

Set
\[
 \omega(x):=\sqrt{1-x^2},\qquad
 Z:=\operatorname{diag}(1,-1),\qquad
 W(x):=
 \begin{pmatrix}
  x&\mathrm{i}\omega(x)\\
  \mathrm{i}\omega(x)&x
 \end{pmatrix}.
\]
The polynomial completion theorem
\cite[Theorem~5]{Gilyen2019QuantumArithmetics}, applied to
$\widetilde P=0$ and $\widetilde Q=q$, gives complex polynomials
$\mathcal P,\mathcal Q$ satisfying
\[
 \operatorname{Re}\mathcal P=0,\qquad
 \operatorname{Re}\mathcal Q=q,\qquad
 |\mathcal P(x)|^2+(1-x^2)|\mathcal Q(x)|^2=1
 \quad(x\in[-1,1]).
\]
Here $\mathcal P$ is odd of degree at most
$n_{\rm Q}=d_q+1$, and $\mathcal Q$ is even of degree at most
$n_{\rm Q}-1$.

The QSP characterization theorem
\cite[Theorem~3]{Gilyen2019QuantumArithmetics} supplies a phase
list $\boldsymbol\phi=(\phi_0,\ldots,\phi_{n_{\rm Q}})$ such that
\begin{equation}
 S_{\boldsymbol\phi}(x)
 :=e^{\mathrm{i}\phi_0Z}
   \prod_{j=1}^{n_{\rm Q}}
   \bigl(W(x)e^{\mathrm{i}\phi_jZ}\bigr)
 =
 \begin{pmatrix}
  \mathcal P(x)&\mathrm{i}\omega(x)\mathcal Q(x)\\
  \mathrm{i}\omega(x)\overline{\mathcal Q(x)}
    &\overline{\mathcal P(x)}
 \end{pmatrix}.
 \label{eq:bottom-left-qsp-completion}
\end{equation}
For real $x$,
$\overline{W(x)}=ZW(x)Z$, where the bar denotes entrywise
complex conjugation. Therefore
\[
 S_{-\boldsymbol\phi}(x)
 =Z\overline{S_{\boldsymbol\phi}(x)}Z,
\]
and the bottom-left entry of
$(S_{\boldsymbol\phi}(x)+S_{-\boldsymbol\phi}(x))/2$
is $\mathrm{i}\omega(x)q(x)$.

To implement these sequences, put $Z_B=2\Pi_B-I$ and define
\[
 \mathcal W_+
 =\mathrm{i}e^{-\mathrm{i}\pi Z_B/4}
 U_Be^{-\mathrm{i}\pi Z_B/4},
 \qquad
 \mathcal W_-
 =\mathrm{i}e^{-\mathrm{i}\pi Z_B/4}
 U_B^\dagger e^{-\mathrm{i}\pi Z_B/4}.
\]
For a singular pair
$A\ket v=x\ket u$, $A^\dagger\ket u=x\ket v$,
with $0<x<1$, write
$\ket{\widehat v}=J_B\ket v$,
$\ket{\widehat u}=J_B\ket u$, and
\[
 \ket{v_\perp}
 =\frac{U_B^\dagger\ket{\widehat u}
            -x\ket{\widehat v}}{\omega(x)},
 \qquad
 \ket{u_\perp}
 =\frac{U_B\ket{\widehat v}
            -x\ket{\widehat u}}{\omega(x)}.
\]
These are unit vectors in $\operatorname{ran}(I-\Pi_B)$.
The matrix of $U_B$ from
$(\widehat v,v_\perp)$ to $(\widehat u,u_\perp)$, and that of
$U_B^\dagger$ in the reverse direction, is
\[
 R(x)=
 \begin{pmatrix}
  x&\omega(x)\\
  \omega(x)&-x
 \end{pmatrix}.
\]
Since
$\mathrm{i}e^{-\mathrm{i}\pi Z/4}R(x)e^{-\mathrm{i}\pi Z/4}
=W(x)$,
the alternating sequence
\[
 \mathcal S_{\boldsymbol\phi}
 =e^{\mathrm{i}\phi_0Z_B}\mathcal W_+
  e^{\mathrm{i}\phi_1Z_B}\mathcal W_-
  e^{\mathrm{i}\phi_2Z_B}\cdots
  \mathcal W_+e^{\mathrm{i}\phi_{n_{\rm Q}}Z_B}
\]
implements \eqref{eq:bottom-left-qsp-completion} on the paired
subspaces. The rightmost operation acts first.
Since $n_{\rm Q}$ is odd, $\mathcal W_+$ occurs at both ends,
and the sequence uses $n_{\rm Q}$ block-encoding queries.

Let $H_c$ be the Hadamard gate on the control ancilla, and define
\begin{equation}
 \mathcal V_q
 =(H_c\otimes I)
 \left(
 \ket0\!\bra0_c\otimes\mathcal S_{\boldsymbol\phi}
 +\ket1\!\bra1_c\otimes\mathcal S_{-\boldsymbol\phi}
 \right)
 (H_c\otimes I).
 \label{eq:bottom-left-selector}
\end{equation}
The two branches differ only in the signs of the QSP phases.
The fixed $-\pi/4$ rotations and all block-encoding calls are
identical, so the total query count remains $n_{\rm Q}$.
No entrywise-conjugate oracle is required.

For $0<x<1$, the complementary output direction is
$C_B\ket v/\omega(x)$, giving
\eqref{eq:direct-bottom-left-block}.
For $A\ket v=0$, the vector $U_BJ_B\ket v$ lies entirely in
$\operatorname{ran}(I-\Pi_B)$.
The alternating queries map $J_B\ket v$ to $U_BJ_B\ket v$
and back. If
\[
 \theta_{\boldsymbol\phi}
 =\sum_{j=0}^{n_{\rm Q}}(-1)^{j+1}\phi_j,
\]
then
\[
 \mathcal S_{\boldsymbol\phi}J_B\ket v
 =\mathrm{i}^{n_{\rm Q}}
 e^{\mathrm{i}\theta_{\boldsymbol\phi}}U_BJ_B\ket v.
\]
Averaging the two phase lists gives
\[
 \mathrm{i}^{n_{\rm Q}}\cos\theta_{\boldsymbol\phi}
 U_BJ_B\ket v
 =\mathrm{i}q(0)U_BJ_B\ket v,
\]
by the scalar QSP identity at zero.
For $x=1$, the sequence remains in
$\operatorname{ran}\Pi_B$, and its complementary output is zero.
This proves \eqref{eq:direct-bottom-left-block} in every case.

A single odd-length QSP sequence has $|\mathcal Q(0)|=1$.
The control ancilla permits the required value of $q(0)$ by
extracting $\operatorname{Re}\mathcal Q$.
The construction implements $\mathrm{i}C_BF_q$ directly;
it does not require a separate implementation of $F_q$.
The circuit is unitary and can be inverted by reversing its gates.

\subsection{Implementation errors}
\label{subsec:implementation-errors}

We derive sufficient tolerances for
$|P_{\rm impl}-P_q|\leq\delta_{\rm impl}$.
For the perturbation analysis, define
\begin{equation}
 \begin{split}
 K_q&:=\mathrm{i}C_BF_qV_R,\\
 E_q&:=K_q^\dagger K_q
 =V_R^\dagger F_q^\dagger(I-A^\dagger A)F_qV_R.
 \end{split}
 \label{eq:complementary-success-effect}
\end{equation}
The probability for a normalized input vector $\ket v$ is
$\|K_q\ket v\|^2$.
Thus an operator-norm bound on the difference between two
operators of the form $K^\dagger K$ bounds the corresponding
probability difference for every input state.

Let $U_{\rm ref}$ be an exact unitary with normalized encoded
matrix $A_{\rm ref}$, where
\[
 \|A_{\rm ref}-A\|\leq\delta_B
\]
on the full system register.
Both $A_{\rm ref}$ and $A$ are contractions.
No singular-value gap is required for $A_{\rm ref}$ in this
perturbation argument.

The operator $E_q$ depends only on the normalized encoded
matrix, although the complementary output vectors can depend
on its unitary dilation. We may therefore compare the canonical
unitary dilations
\[
 \mathcal J(A)=
 \begin{pmatrix}
  A&\sqrt{I-AA^\dagger}\\
  \sqrt{I-A^\dagger A}&-A^\dagger
 \end{pmatrix}
\]
and $\mathcal J(A_{\rm ref})$.
These matrices are used only in the proof.

For positive semidefinite operators $X,Y$,
\[
 \|\sqrt X-\sqrt Y\|\leq\sqrt{\|X-Y\|}.
\]
Indeed, with $t=\|X-Y\|$, operator monotonicity gives
$\sqrt X\leq\sqrt{Y+tI}\leq\sqrt Y+\sqrt t\,I$,
and interchanging $X,Y$ gives the reverse bound.
Since
\[
 \|A_{\rm ref}A_{\rm ref}^\dagger-AA^\dagger\|\leq2\delta_B,
 \qquad
 \|A_{\rm ref}^\dagger A_{\rm ref}-A^\dagger A\|\leq2\delta_B,
\]
we have
\[
 \|\mathcal J(A_{\rm ref})-\mathcal J(A)\|
 \leq\delta_B+\sqrt{2\delta_B}.
\]

Apply the same phase lists to both dilations.
Telescoping the $n_{\rm Q}=d_q+1$ block-encoding queries bounds
the difference between the circuit unitaries by
$n_{\rm Q}(\delta_B+\sqrt{2\delta_B})$.
Compression does not increase the norm, and contractions
$K_1,K_2$ satisfy
\[
 \|K_1^\dagger K_1-K_2^\dagger K_2\|
 \leq2\|K_1-K_2\|.
\]
Writing $E_{q,\rm ref}$ for the operator obtained with
$A_{\rm ref}$ gives
\begin{equation}
 \|E_{q,\rm ref}-E_q\|
 \leq2n_{\rm Q}(\delta_B+\sqrt{2\delta_B})
 \leq5n_{\rm Q}\sqrt{\delta_B},
 \qquad 0\leq\delta_B\leq1.
 \label{eq:reference-effect-error}
\end{equation}

Suppose the implemented QSVT circuit differs from the reference
circuit by at most $e_{\rm circ}$ in operator norm.
Its corresponding operator $E_{\rm impl}$ then satisfies
\begin{equation}
 \|E_{\rm impl}-E_q\|
 \leq5n_{\rm Q}\sqrt{\delta_B}+2e_{\rm circ}.
 \label{eq:implementation-effect-bound}
\end{equation}
If the prepared state differs from the ideal input state by
trace distance at most $\delta_\Phi$, its contribution to the
probability error is at most $\delta_\Phi$.
Consequently, a sufficient implementation bound is
\[
 |P_{\rm impl}-P_q|
 \leq
 \delta_\Phi+5n_{\rm Q}\sqrt{\delta_B}+2e_{\rm circ}.
\]

For a chosen tolerance $\bar\delta_E>0$, it suffices to impose
\begin{equation}
 \delta_B\leq
 \min\left\{
  1,\left(\frac{\bar\delta_E}{10n_{\rm Q}}\right)^2
 \right\},
 \qquad
 e_{\rm circ}\leq\frac{\bar\delta_E}{4}.
 \label{eq:implementation-tolerances}
\end{equation}
Taking
\[
 \bar\delta_E=\frac{\varepsilon_{\mathrm n}}{5\Lambda^2},
 \qquad
 \delta_\Phi\leq
 \frac{\varepsilon_{\mathrm n}}{5\Lambda^2}
\]
gives the implementation tolerance in
\eqref{eq:algorithm-error-budget}.

If each synthesized block-encoding call has operator-norm
error at most $\delta_U$ relative to the corresponding reference
call, then
\[
 e_{\rm circ}\leq n_{\rm Q}\delta_U+\sum_j\delta_j,
\]
where $\delta_j$ bounds the error in each remaining synthesized
unitary factor, including phase rotations and control-ancilla
gates. These bounds apply to controlled calls whenever used.
If $\delta_U$ is measured directly relative to an ideal
encoding of $A$, set $\delta_B=0$ to avoid counting the same
perturbation twice.

Errors in evaluating
$a_\alpha-\Lambda^2\widehat P$ contribute to $\chi$.
Errors in polynomial coefficients or QSP phases must instead
satisfy the polynomial-approximation or circuit-synthesis
tolerances. Numerical bounds used to choose accuracies and
query counts are evaluated with conservative rounding.

Amplitude estimation uses one fixed implemented circuit and
its exact inverse. Its systematic probability error is therefore
included in $\delta_{\rm impl}$ and is not multiplied by
$M_{\rm AE}$.
For synthesis errors in the additional reflections and control
operations of amplitude estimation, first reduce the ideal
failure probability to $\nu/2$. Require the sum of their
operator-norm errors over all repetitions to be at most $\nu/2$.
Telescoping bounds the change in the output distribution by
$\nu/2$, so the total failure probability is at most $\nu$.
Errors already included in the fixed circuit and its inverse
are not counted again.

The gate complexity in Eq.~\eqref{eq:normalized-gates} is evaluated
at these accuracies. With reusable work qubits, a reflection
about the block-encoding ancilla-zero subspace uses
$O(n_B+1)$ elementary gates.
The analysis counts oracle queries and quantum gates; it does
not include stochastic execution noise or quantum
error-correction overhead.


\begin{table}[htbp]
\centering\small
\begin{tabular}{@{}L{0.28\linewidth}L{0.66\linewidth}@{}}
\toprule
Parameter & Meaning\\
\midrule
$D_r=\binom{N}{r+1}$
& dimension of the Hamming-weight-$(r+1)$ input space\\
$\alpha_r,\gamma_r$
& block-encoding normalization and supplied normalized
  singular-value gap\\
$a_r,s_r,\Lambda_r$
& $a_r=\log N$, $s_r=\log N+\log(1/\gamma_r)$, and
  polynomial scaling factor $\Lambda_r=160\sqrt{s_r}$\\
$S_{\ast,r}$
& supplied upper bound on
  $S(\mathcal B_r)=a_r-\ell_r(X)/D_r$;
  integrality permits $S_{\ast,r}=\log N$\\
$d_{q,r}$
& even upper bound on the polynomial degree\\
$n_{{\rm Q},r}=d_{q,r}+1$
& block-encoding queries per application of $\mathcal A_q$\\
$M_{{\rm AE},r}$
& total applications of $\mathcal A_q$ or its inverse\\
$G_{B,r},G_{\Phi,r}$
& gate complexities of controlled block-encoding queries
  and initial-state preparation\\
$G_{\mathrm{mem},r}$
& gate complexity of the controlled membership oracle\\
$\xi_r,\delta_{{\rm impl},r}$
& polynomial approximation and implementation probability
  error bounds\\
$\delta_{{\rm AE},r},\chi_r$
& amplitude-estimation error and classical arithmetic error
  in the normalized output\\
$\delta_{B,r},\delta_{U,r},\delta_{\Phi,r}$
& errors in the normalized encoded matrix, synthesized
  block-encoding unitary, and prepared state\\
$\varepsilon_r,\nu_r$
& unnormalized additive error and failure probability
  in boundary degree $r$\\
$\varepsilon_{\rm arith}$
& arithmetic error for evaluating $\ell_0$ and
  combining estimates from different degrees\\
\bottomrule
\end{tabular}
\caption{Parameters used in the query and gate complexity bounds.
Controlled operations, inverses, state preparation, and the
additional operations required by amplitude estimation are
included in Eq.~\eqref{eq:normalized-gates}.}
\label{tab:resource-ledger}
\end{table}


\section{Implementation details}
\label{app:implementation-details}

This appendix provides the clique-boundary compression, the error analysis used
in \cref{cor:quantum-critical-order}, and the resource accounting for
\cref{cor:join-resource-comparison}.

\subsection{Unitary compression of the clique boundary}
\label{app:clique-compression}

The normalization in \eqref{eq:mcardle-normalization} follows from the
following unitary compression. On the full occupation register define
\begin{equation}
 U_{\mathrm{fer}}:=\frac{\partial+\partial^\dagger}{\sqrt N}
 =\frac1{\sqrt N}\sum_{v=0}^{N-1}Z_0\cdots Z_{v-1}X_v.
 \label{eq:fermionic-boundary-unitary}
\end{equation}
Here $X_v,Z_v$ are Pauli operators and $\partial$ is the augmented simplicial boundary on the direct sum of all subset spaces, with signs determined by the fixed vertex order.
The summands are Hermitian, square to the identity, and anticommute
pairwise; hence $U_{\mathrm{fer}}^2=I$.
Let the clean oracle $O_{m_t}$ flip a flag if and only if its input is
a valid $t$-simplex. For a flag $a$, put
\[
 F_t^{(a)}:=X_aO_{m_t}
 =\Pi_t\otimes I_a+(I-\Pi_t)\otimes X_a.
\]
Using two distinct flags gives the boundary block-encoding unitary
\begin{equation}
 U_{B_r}:=F_{r-1}^{(a_2)}
 (U_{\mathrm{fer}}\otimes I_{a_1a_2})F_r^{(a_1)},\qquad
 \bra{00}U_{B_r}\ket{00}
 =\Pi_{r-1}U_{\mathrm{fer}}\Pi_r
 =\frac{\mathcal B_r}{\sqrt N}.
 \label{eq:explicit-clique-compression}
\end{equation}
Here $\mathcal B_r$ is the clique boundary padded to the full occupation
register, as in \eqref{eq:padded-boundary}. The left and right
Hamming weights select the lowering part of
$\partial+\partial^\dagger$. This is an exact $(\sqrt N,2,0)$ block encoding when clean membership oracles are treated as primitives
\cite[Appendix~C.2.2, Equations~(50)--(52) and Table~12]{McArdle2022AQubits}.
Each boundary call uses one membership call in degree $r$ and one in
degree $r-1$, with controlled versions when required. The two signal
flags are distinct from reusable membership workspace, which is returned
to zero and whose gate and qubit costs must be included. If that workspace
is retained as part of the physical unitary, include the projector onto its all-zero state in the block-encoding signal projector; the same
compression identity holds.

\subsection{Allocation of error across boundary degrees}
\label{app:error-allocation}

For \eqref{eq:critical-query-complexity}, ignoring logarithmic factors, put
$w_r=D_r\Lambda_r\sqrt{\log N}/\gamma_r$ and
$\varepsilon_{\rm spec}=\varepsilon-\varepsilon_{\rm arith}>0$.
The allocation minimizing $\sum_rw_r/\varepsilon_r$ subject to
$0<\varepsilon_r\leq D_r$ and
$\sum_r\varepsilon_r\leq\varepsilon_{\rm spec}$ is
\begin{equation}
 \varepsilon_r=\min\{D_r,t\sqrt{w_r}\},\qquad
 \sum_r\varepsilon_r=\min\!\left\{\varepsilon_{\rm spec},\sum_rD_r\right\},
 \label{eq:optimal-error-allocation}
\end{equation}
where $t>0$ satisfies the sum constraint. The Karush--Kuhn--Tucker conditions give this formula. If no cap is active, it reduces to
$\varepsilon_r=\varepsilon_{\rm spec}\sqrt{w_r}/\sum_s\sqrt{w_s}$; with
$\nu_r=\nu/(i+1)$ the resulting bound, with logarithmic factors suppressed as above, is
\begin{equation}
 \widetilde O\!\left(
 \frac{(\sum_r\sqrt{w_r})^2}{\varepsilon_{\rm spec}}
 \log\frac{2(i+1)}\nu\right).
 \label{eq:optimal-leading-cost}
\end{equation}
When a cap is active, substitute \eqref{eq:optimal-error-allocation}
into \eqref{eq:critical-query-complexity}. Adjacency-query optimization
multiplies $w_r$ by the membership-query cost in degree $r$.

If sharper bounds $S_{\ast,r}$ are supplied, the general two-term
bound may instead be optimized. With these bounds fixed independently
of the error allocation, define
\[
 u_r:=\frac{D_r\Lambda_r\sqrt{S_{\ast,r}}}{\gamma_r},\qquad
 v_r:=\frac{\Lambda_r\sqrt{D_r}}{\gamma_r}.
\]
After suppressing logarithmic and confidence factors, the objective to minimize is
$\sum_r(u_r/\varepsilon_r+v_r/\sqrt{\varepsilon_r})$.
If $\varepsilon_{\rm spec}\geq\sum_rD_r$, all errors are set to
$D_r$. Otherwise, for a multiplier $\lambda>0$, let
$z_r(\lambda)>0$ be the unique solution of
\[
 \frac{u_r}{z_r^2}+\frac{v_r}{2z_r^{3/2}}=\lambda.
\]
In the latter case, the minimizer subject to the same individual and
total-error bounds is
$\varepsilon_r=\min\{D_r,z_r(\lambda)\}$, with the same sum
constraint. Multiplying both $u_r$ and $v_r$ by the membership cost
gives the adjacency-query objective. These rules optimize the displayed
upper bounds after suppressing logarithmic and confidence factors;
they do not assert algorithmic optimality.

\subsection{Resource comparison for the torsion family}
\label{app:join-resource-proof}

We prove \cref{cor:join-resource-comparison}.  We first make the fixed
gap constant in \eqref{eq:cY-definition} fully explicit.  The augmented
chain dimensions of $Y$ are $1,31,90,60$.  Since $Y$ is rationally
acyclic, each augmented full Hodge Laplacian $\Delta_q^Y$ is a positive
definite integer matrix and hence has determinant at least one.  Moreover,
$\|\Delta_q^Y\|\leq31$: the clique-boundary normalization bounds its upper
and lower summands by $31$, and their images are orthogonal.  If $n_q$ is
the dimension of $\Delta_q^Y$, the product of its eigenvalues is at least
one while every eigenvalue is at most $31$, so
\[
 \lambda_{\min}(\Delta_q^Y)\geq31^{-(n_q-1)}\geq31^{-89}.
\]
Consequently,
\begin{equation}
 c_Y\geq31^{-89}.
 \label{eq:cY-explicit-bound}
\end{equation}
This supplies the gap promise without relying on a floating-point
spectral calculation.

Now take $m=2^k$, $k\geq4$, put $N=N_Z$, and recall
$L_{m,k}=(m-1)^k\log2$.  In boundary degree $r$, the normalized
parameters are
\[
 \alpha_r=\sqrt N,
 \qquad
 \gamma_r=\sqrt{c_Y/N},
 \qquad
 S_{\ast,r}=\log N,
 \qquad
 s_r=O(\log N),
\]
where constants may depend on the fixed complex $Y$.  With the precision
choice in \eqref{eq:torsion-join-precision} and failure probability
$\nu/d$ in each degree, \cref{thm:quantum-pseudologdet} gives
\[
 Q_{B,r}=\widetilde O\!\left(
 \sqrt{\frac N{c_Y}}\,
 \varepsilon_{\mathrm n,r}^{-1}
 \log\frac{2d}{\nu}\right).
\]
Here the second amplitude-estimation term is absorbed into the first up
to an absolute factor because $\varepsilon_{\mathrm n,r}\leq1$ and
$N\geq2$.  The chosen precision satisfies
\[
 \varepsilon_{\mathrm n,r}^{-1}
 =\max\!\left\{1,\frac{dD_r}{2\delta L_{m,k}}\right\}.
\]
Using \eqref{eq:torsion-join-candidate-ratio},
\[
 \sum_{r=1}^{d}\varepsilon_{\mathrm n,r}^{-1}
 \leq d+\frac{d}{2\delta L_{m,k}}\sum_{r=1}^{d}D_r
 \leq d+\frac{d^2R_{m,k}}{2\delta}.
\]
Summing the degreewise costs proves \eqref{eq:join-resource-queries};
the union bound gives total failure probability at most $\nu$.  A
boundary call uses $O(d^2)$ adjacency queries, proving the stated
adjacency-query conversion.

For the gate count, label each multipartite vertex by its part and its
within-part index and use a fixed table for $\Gamma_T$.  Adjacency is then
reversible with $\operatorname{polylog}N$ gates.  Since $d=O(\log N)$,
a reversible scan of the $N$-qubit occupation register can extract the at
most $d+1$ selected labels, test their pairwise adjacencies, and uncompute
its workspace using $\widetilde O(N)$ gates.  Dicke-state preparation uses
$O(N(d+1))$ elementary rotation gates
\cite{BartschiEidenbenz2019Dicke}; the fermionic boundary circuit,
preparation inverses, controls, and zero-state reflections also cost
$\widetilde O(N)$ per primitive.  Substitution into
\eqref{eq:normalized-gates} therefore gives $\widetilde O(NQ_B)$ quantum
gates.  The required rotation-synthesis precision contributes only
logarithmic factors because the polynomial degree and inverse error
tolerances are polynomial in $N$, $\delta^{-1}$, and $\nu^{-1}$.
Classical computation of the polynomial coefficients and QSVT phase
sequences is not included.

Finally, let
$F_Z(z)=\sum_{r=-1}^{d}f_r(Z)z^{r+1}$, with $f_{-1}=1$.  The face
numbers of $Y$ are $(31,90,60)$, and joins multiply face polynomials, so
\[
 F_Z(z)=(1+mz)^k(1+31z+90z^2+60z^3).
\]
The leading coefficient gives $f_d(Z_{m,k})=60m^k$.  Every column of
$B_d$ has exactly $d+1=k+3$ nonzero entries, proving
\eqref{eq:join-explicit-boundary-size}.  Since $m^k=2^{k^2}$ and
$\log N=\Theta(k)$, this size is
$\exp(\Theta((\log N)^2))=N^{\Theta(\log N)}$.  This is an output-size
lower bound only for methods that explicitly list the matrix; it does
not constrain classical algorithms using the succinct description or
the known closed form for $L_{m,k}$.

\section{Indexing and convention check for the critical-group formula}
\label{app:indexing-correction}

Corollary~4.4 of~\cite{dmtcs:2909} prints an alternating product with upper
limit $i$ and exponent $i-j$.  With the definition of $\pi_j$ used there, the
critical group $K_i$ is instead enumerated by $(i+1)$-dimensional trees.
Combining that critical-group/tree identity with the recurrence
\eqref{eq:matrix-tree-recurrence} gives an upper boundary index $i+1$ and sign
$(-1)^{i+1-r}$, as in \eqref{eq:critical-pdet-corrected}.

The same recurrence retains the lower-homology contribution
\begin{equation}
  2\sum_{q=0}^{i-1}(-1)^{i-1-q}\log t_q(X).
  \label{eq:appendix-lower-correction}
\end{equation}
It disappears only when the relevant integral reduced homology groups are
trivial, which is the additional hypothesis used in~\eqref{eq:critical-pdet-boxed}; compare~\cite[Corollary~2.10]{DuvalKlivansMartin2011Cellular}.  At $i=0$, the corrected
formula specializes to
\begin{equation}
  \log|K_0(G)|
  =\ell_1(G)-\ell_0(G)
  =\log\pdet(L_G)-\log|V(G)|,
  \label{eq:appendix-graph-specialization}
\end{equation}
which is Kirchhoff's graph formula.  This specialization fixes both the boundary
index and the sign convention.

\end{document}